\newcommand{\xhdr}[1]{\vspace{1.7mm}\noindent{{\bf #1.}}}
\newcommand{\explain}[2]{\underset{\mathclap{\overset{\uparrow}{#2}}}{#1}}
\newcommand{\explainup}[2]{\overset{\mathclap{\underset{\downarrow}{#2}}}{#1}}
\newcommand{\coevolve}{\textsc{Coevolve}}
\begin{document} 

\title{COEVOLVE: A Joint Point Process Model for Information Diffusion and Network Co-evolution}

\author[1]{Mehrdad Farajtabar}
\author[1]{Yichen Wang}
\author[2]{Manuel Gomez Rodriguez}
\author[1]{Shuang Li}
\author[1]{\\Hongyuan Zha}
\author[1]{Le Song}

\affil[1]{Georgia Institute of Technology, \{mehrdad,yichen.wang,sli370\}@gatech.edu, \{zha,lsong\}@cc.gatech.edu}
\affil[2]{Max Plank Institute for Software Systems, manuelgr@mpi-sws.org}

\date{}

\begin{small}
\maketitle
\end{small}

\begin{abstract}

Information diffusion in online social networks is affected by the underlying network topology, but it also has the power to change it. Online users are constantly crea\-ting new links when exposed to new information sources, and in turn these links are alternating the way information spreads. However, these two highly intertwined stochastic processes, information diffusion and network evolution, have been predominantly studied \emph{separately}, ignoring their co-evolutionary dynamics. 

We propose a temporal point process model, \coevolve, for such joint dyna\-mics, allowing the intensity of one process to be modulated by that of the other. This model allows us to efficiently simulate interleaved diffusion and network events, and generate traces obe\-ying common diffusion and network patterns observed in real-world networks. Furthermore, we also develop a convex optimization framework to learn the parameters of the model from historical diffusion and network evolution traces. We experimented with both synthetic data and data ga\-thered from Twitter, and show that our model provides a good fit to the data as well as more accurate predictions than alternatives.
\end{abstract}

\section{Introduction}
\label{sec:intro}

\setlength{\abovedisplayskip}{4pt}
\setlength{\abovedisplayshortskip}{1pt}
\setlength{\belowdisplayskip}{4pt}
\setlength{\belowdisplayshortskip}{1pt}
\setlength{\jot}{3pt}

Online social networks, such as Twitter or Weibo, have become large information networks where people share, discuss and search for information of personal interest as well as breaking news~\cite{KwaLeeParMoo10}. 
In this context, users often forward to their \emph{followers} information they are exposed to via their \emph{followees}, triggering the emergence of information \emph{cascades} that travel through the network~\cite{CheAdaDowKleLes14},
and constantly create new links to information sources, triggering changes in the network itself over time. 
Importantly, recent empirical studies with Twitter data have shown that both information diffusion and network evolution are coupled and network changes are often triggered by information diffusion~\cite{AntDov13,WenRatPerGonCasBonSchMenFla13,MyeLes14}. 

%


While there have been many recent works on mo\-de\-ling information diffusion~\cite{GomLesKra10,GomBalSch11,DuSonRodZha13,CheAdaDowKleLes14,FarGomDuZamZhaSon15} and network evolution~\cite{ChaZhaFal2004, LesBacKumTom08,LesChaKleFaletal10}, most of them treat these two stochastic processes independently and separately, ignoring the influence one may have on the other over time. 
%
%
Thus, to better understand information diffusion and network evolution, there is an urgent need for joint probabilistic models of the two processes, which are largely inexistent to date.

In this paper, we propose a probabilistic generative model, \coevolve, for the joint dynamics of information diffusion and network evolution. Our model is based on the framework of temporal point processes, which explicitly characterizes the continuous time interval between events, and it consists of two interwoven and interdependent components, as shown in Figure~\ref{fig:coevolution-framework}:
\begin{itemize}
  \item[I.] {\bf Information diffusion process.} We design an ``identity revealing'' multivariate Hawkes process~\cite{Liniger09} to capture the mutual excitation behavior of retweeting events, where the intensity of such events in a user is boosted by previous events from her time-varying set of followees. Although Hawkes processes have been used for information diffusion before~\cite{BluBecHelKat12, IwaShaGha13, ZhoZhaSon13, ZhoZhaSon13b, FarDuGomValZhaSon14, LinAdaRya14, DuFarAhmSmoSon15, ValGom15}, the key innovation of our approach is to ex\-pli\-cit\-ly model the excitation due to a par\-ti\-cu\-lar source node, hence revealing the identity of the source. Such design reflects the reality that information sources are explicitly acknowledged, and it also allows a particular information source to acquire new links in a rate according to her ``informativeness''.   
  \item[II.] {\bf Network evolution process.} We model link creation as an ``information driven'' survival process, and couple the intensity of this process with retweeting events. Although survival processes have been used for link creation before~\cite{HunSmyVuAsu11,VuHunSmyAsu11}, the key innovation in our model is to incorporate re\-tweeting\- events as the driving force for such processes. Since our model has captured the source identity of each retweeting event, new links will be targeted toward information sources, with an intensity proportional to their degree of excitation and each source'{}s influence. 
\end{itemize}


\begin{figure}[t]
        \centering
        \begin{tabular}{c} 
       		 \includegraphics[width=0.4\textwidth]{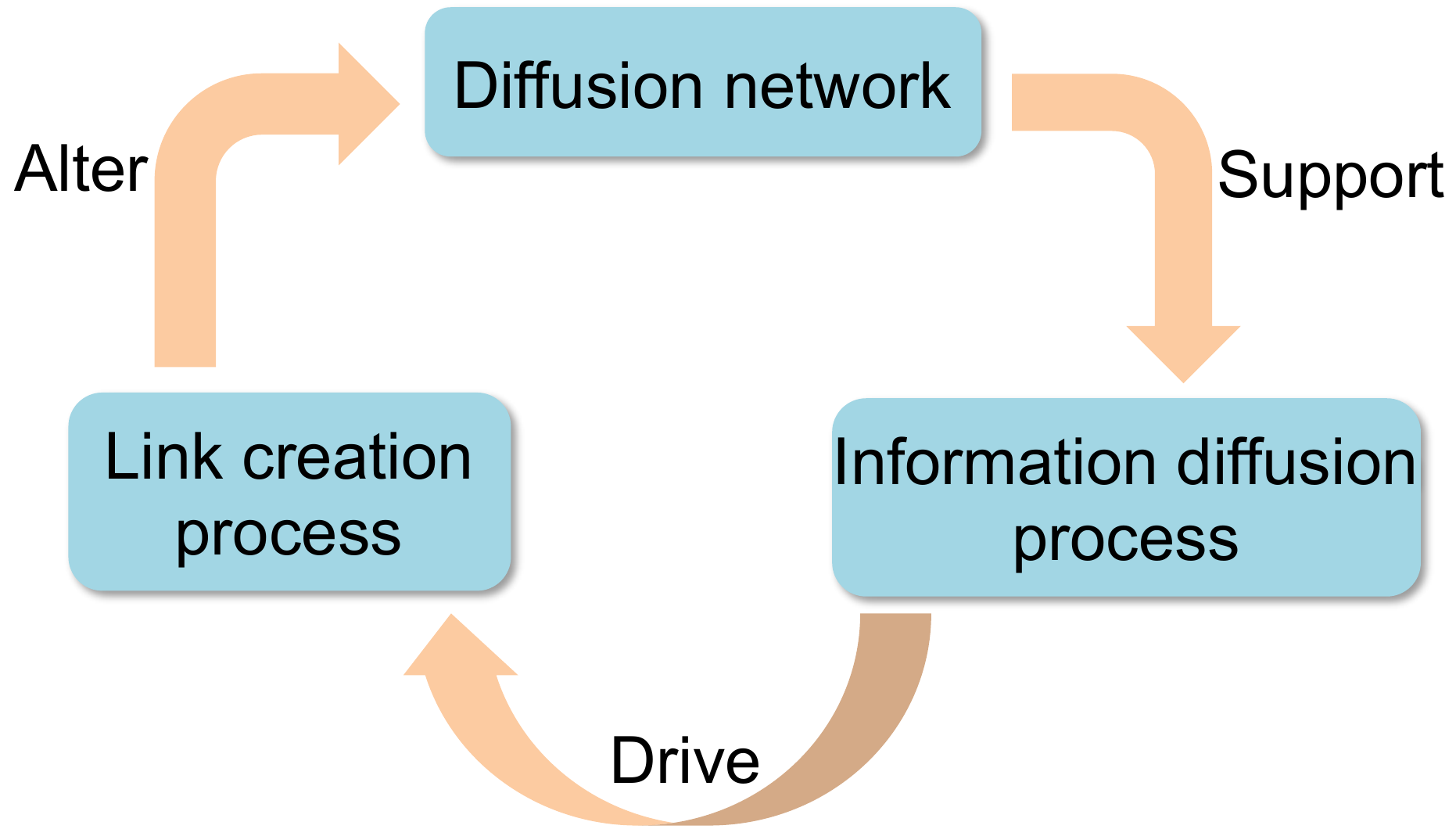} 
        \end{tabular}
        \vspace{-1mm}
        \caption{Illustration of how information diffusion and network structure processes interact}
        \label{fig:coevolution-framework}
\end{figure}

Our model is designed in such a way that it allows the two processes, information diffusion and network evolution, unfold simultaneously in the same time scale and exercise bidirectional influence on each other, allowing sophisticated coevolutionary dynamics to be generated, as illustrated in Figure \ref{fig:coevolution-simple}. 

Importantly, the flexibility of our model does not prevent us from efficiently simulating diffusion and link events from the model and learning its parameters from real world data: 
\begin{itemize}
  \item {\bf Efficient simulation.} We design a scalable sampling procedure that exploits the sparsity of the generated networks. Its complexity is $O(n d \log m)$, where $n$ is the number of events, 
  $m$ is the number of users and $d$ is the maximum number of followees per user.
  \item {\bf Convex parameters learning.} We show that the model parameters that maximize the joint likelihood of observed diffusion and link creation events can be efficiently found via convex optimization.
\end{itemize}
Then, we experiment with our model and show that it can produce coevolutionary dynamics of information diffusion and network evolution, and generate retweet and link events that obey common information diffusion patterns (\eg, cascade structure, size and depth), static network patterns (\eg, node degree) and temporal network patterns (\eg, shrinking diameter) described in related literature~\cite{LesKleFal05,LesChaKleFaletal10,GoeWatGol12}. 
Finally, we show that, by modeling the coevolutionary dynamics, our model provides significantly more accurate link and diffusion event predictions than alternatives in large scale Twitter dataset~\cite{AntDov13}. 

The remainder of this article is organized as follows. We first proceed by building sufficient background on the temporal point processes framework in Section~\ref{sec:background}. Then, we introduce our joint model of information diffusion and network structure co-evolution in 
Section~\ref{sec:model}. 
Sections~\ref{sec:simulation} and \ref{sec:estimation} are devoted to answer two essential questions: how can we generate data from the model? and how can we efficiently learn the model parameters from historical event data? Any generative model should be able to answer the above questions.
In Sections~\ref{sec:properties},~\ref{sec:synthetic}, and~\ref{sec:real} we perform empirical investigation of the properties of the model, we evaluate the accuracy of the parameter estimation in synthetic data, and we evaluate the performance of the proposed model in real-world dataset, respectively. 
Section~\ref{sec:related} reviews the related work and Section~\ref{sec:extensions} discusses some extensions to the proposed model.
Finally, the paper is concluded in Section~\ref{sec:discussion}.
\begin{figure}[t]
        \centering
        \begin{tabular}{c} 
        \includegraphics[width=0.65\textwidth]{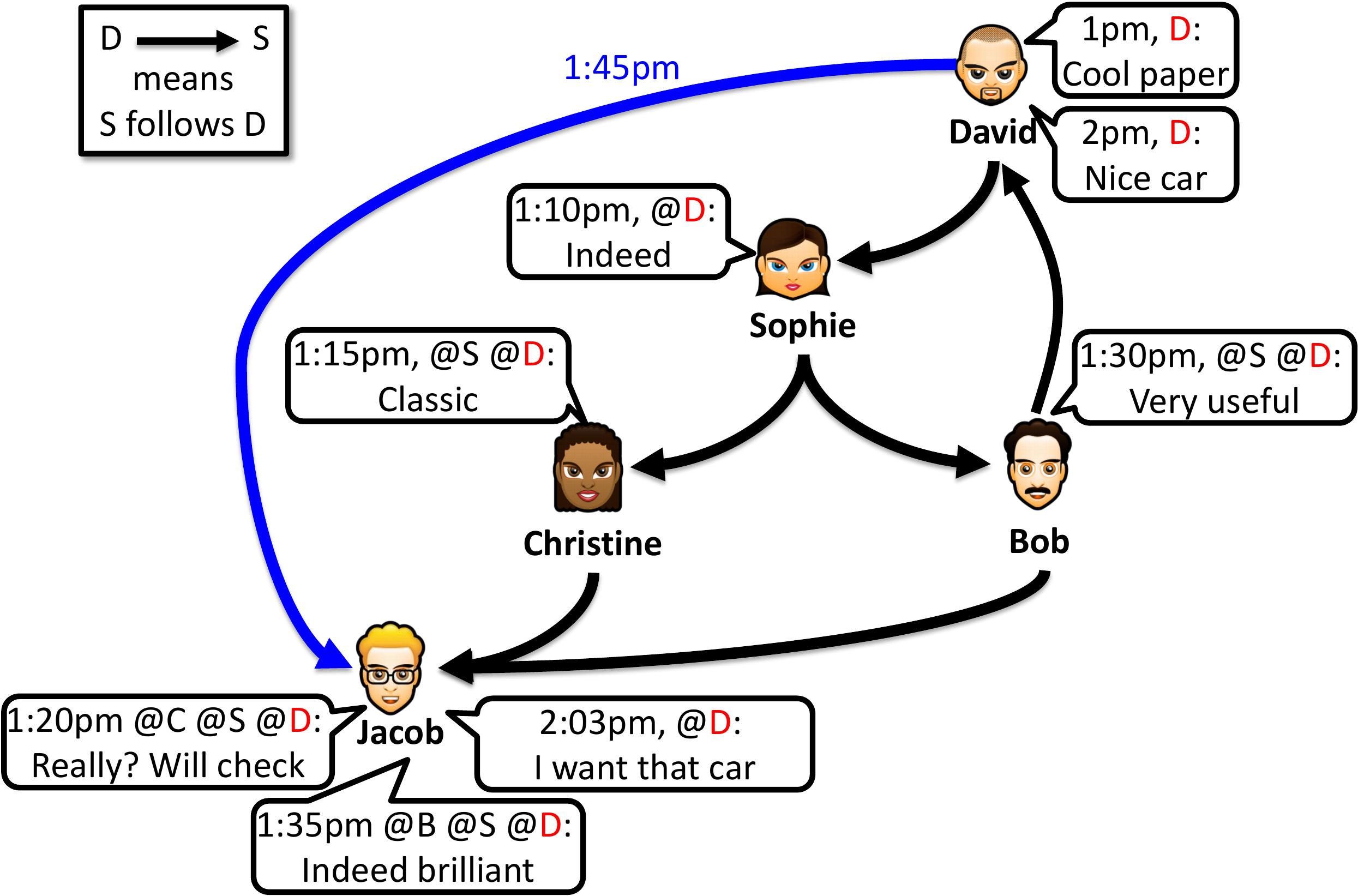} 
        \end{tabular}
        \caption{Illustration of information diffusion and network structure co-evolution: David's tweet at 1:00 pm about a paper is retweeted by Sophie and Christine respectively at 1:10 pm and 1:15 pm to reach out to Jacob. Jacob retweets about this paper at 1:20 pm  and 1:35 pm and then finds David a good source of information and decides to follow him directly at 1:45 pm. Therefore, a new path of information to him (and his downstream followers) is created. As a consequence, a subsequent tweet by David about a car at 2:00 pm directly reaches out to Jacob without need to Sophie and Christine retweet.}
        \label{fig:coevolution-simple}
\end{figure}

%
%

%

\section{Background on Temporal Point Processes}
\label{sec:background}
A temporal point process is a random process whose rea\-li\-za\-tion consists of a list of discrete events localized in time, $\cbr{t_i}$ with $t_i \in \RR^+$ and $i \in \ZZ^+$. Many different types of data produced in online social networks can be represented as temporal point processes, such as the times of retweets and link creations. A temporal point process can be equivalently re\-pre\-sen\-ted as a counting process, $N(t)$, which records the number of events before time $t$. Let the history $\Hcal(t)$ be the list of times of events $\cbr{t_1, t_2, \ldots, t_n}$ up to but not in\-clu\-ding time $t$. Then, the number of observed events in a small time window $[t, t+dt)$ of length $dt$ is 
\begin{align}
d N(t) = \sum_{t_i \in \Hcal(t)} \delta(t-t_i) \, dt,
\end{align} 
and hence $N(t) = \int_0^t dN(s)$, where $\delta(t)$ is a Dirac delta function. %
More generally, given a function $f(t)$, we can define the convolution with respect to $dN(t)$ as 
\begin{align}
  \label{eq:f_conv_dc}
  f(t)\, \star\, dN(t) := \int_{0}^t f(t-\tau)\, dN(\tau) = \sum\nolimits_{t_i \in \Hcal(t)} f(t-t_i). 
\end{align}
The point process representation of temporal data is fundamentally different from the discrete time representation typically used in social network analysis. It directly models the time interval between events as random variables, avoids the need to pick a time window to aggregate events, and 
allows temporal events to be modeled in a fine grained fashion. Moreover, it has a remarkably rich theoretical support~\cite{AalBorGje08}.  
 \begin{figure}[t]
        \centering
        \begin{tabular}{c} 
       		 \includegraphics[width=0.48\textwidth]{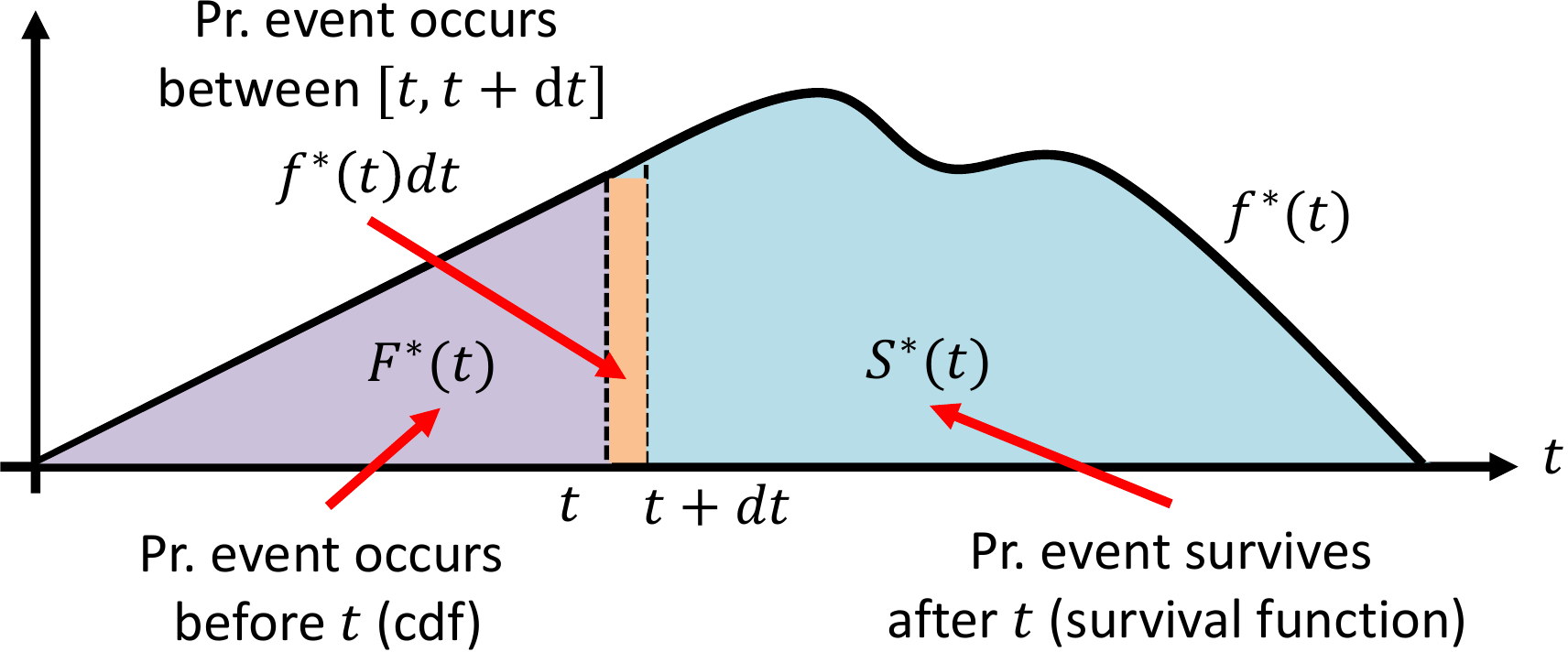} 
        \end{tabular}
        \vspace{-2mm}
        \caption{Illustration of the conditional density function, the conditional cumulative density function and the survival function}
        \label{fig:intensity-probability-survival}
                \vspace{-2mm}
\end{figure}

An important way to characterize temporal point processes is via the conditional intensity function --- a stochastic model for the time of the next event given all the times of previous events. Formally, the conditional intensity function $\lambda^*(t)$ (intensity, for short) is the conditional probability of observing an event in a small window $[t, t+dt)$ given the history $\Hcal(t)$, \ie,
\begin{align}
  \label{eq:intensity}
  \lambda^*(t)dt := \PP\cbr{\text{event in $[t, t+dt)$}|\Hcal(t)} = \EE[dN(t) | \Hcal(t)], 
\end{align}
where one typically assumes that only one event can happen in a small window of size $dt$ and thus $dN(t) \in \cbr{0,1}$. 
Then, given the observation until time $t$ and a time $t' \geqslant t$, we can also characterize the conditional probability that no event happens until $t'$ as
\begin{align}
S^*(t') = \exp\left(-{\scriptsize \int_t^{t'}} \lambda^*(\tau) \, d\tau\right),
\end{align}
the (conditional) probability density function that an event occurs at time $t'$ as 
 \begin{align}
 f^*(t') = \lambda^*(t')\, S^*(t'),
 \end{align}
and the (conditional) cumulative density function, which accounts for the probability that an event happens before time $t'$:
 \begin{align}
  F^*(t') = 1- S^*(t') = \int_t^{t'} f^*(\tau) \, d \tau.
 \end{align}
Figure~\ref{fig:intensity-probability-survival} illustrates these quantities.
Moreover, we can express the log-likelihood of a list of events $\cbr{t_1,t_2,\ldots,t_n}$ in an observation window $[0, T)$ as 
\begin{align}
  \label{eq:loglikehood_fun}
  \Lfra = \sum_{i=1}^n \log \lambda^*(t_i) - \int_{0}^T \lambda^*(\tau)\, d\tau,
   \quad  T\geqslant t_n.  
\end{align}
This simple log-likelihood will later enable us to learn the parameters of our model from observed data.
 \begin{figure}[t]
        \centering
        \begin{tabular}{l r}
       		 a) Poisson process \quad\quad\quad & \includegraphics[width=0.45\textwidth]{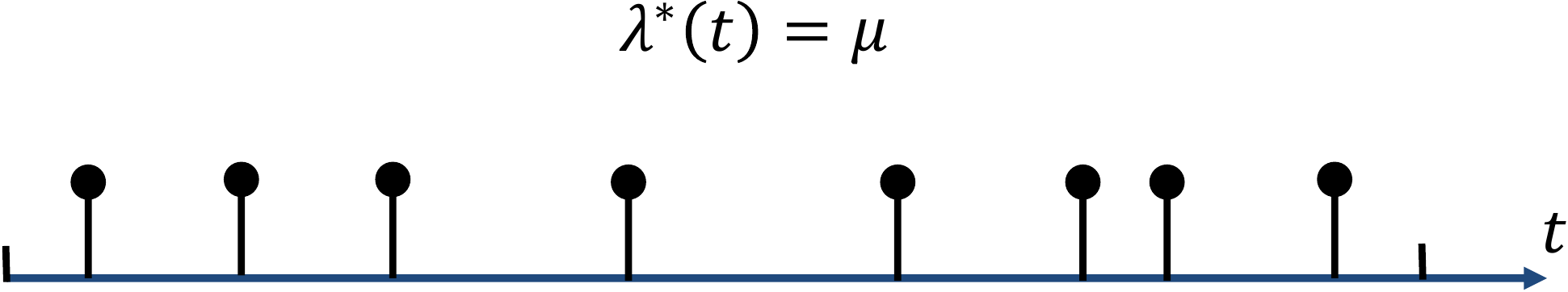} \vspace{4mm} \\
		 b) Hawkes process & \includegraphics[width=0.45\textwidth]{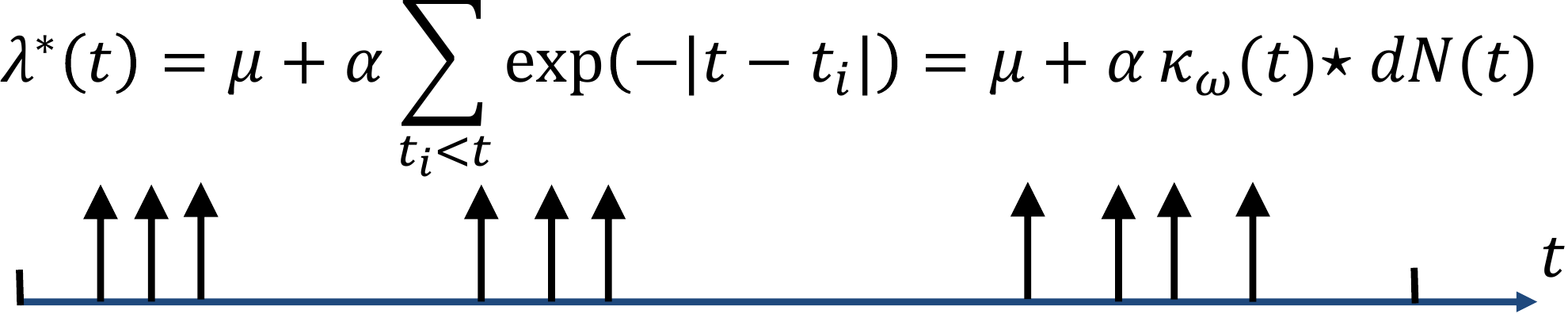}  \vspace{4mm}  \\
		 c) Survival process & \includegraphics[width=0.45\textwidth]{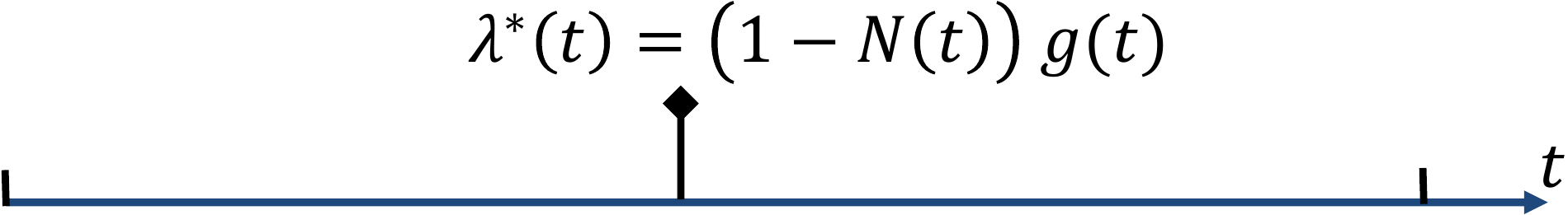}  \vspace{3mm}  \\
        \end{tabular}
                \vspace{-2mm}
        \caption{Three types of point processes with a typical realization}
        \label{fig:processes}
\end{figure}
Finally, the functional form of the intensity $\lambda^*(t)$ is often designed to capture the phenomena of interests. Some useful functional forms we will use are~\cite{AalBorGje08}: 
\begin{itemize}
 \item[(i)] {\bf Poisson process.} The intensity is assumed to be independent of the history $\Hcal(t)$, but it can be a nonnegative time-varying function,~\ie, 
 \begin{align}
 \lambda^*(t) = g(t)\geqslant 0.
 \end{align}
 \item[(ii)] {\bf Hawkes Process.} The intensity is history dependent and models a mutual excitation between events,~\ie,
 \begin{align}
   \label{eq:hawkes}
   \lambda^*(t) = \mu + \alpha \kappa_\omega(t) \star dN(t) = \mu + \alpha \sum\nolimits_{t_i \in \Hcal(t)} \kappa_{\omega}(t-t_i),
 \end{align}
 where,
 \begin{align}
 \kappa_{\omega}(t):= \exp(-\omega t)\II[t\geqslant 0]
 \end{align}
 is an exponential triggering kernel and $\mu\geqslant 0$ is a baseline intensity independent of the history. Here, the occurrence of each historical event increases the intensity by a certain amount 
 determined by the kernel and the weight $\alpha \geqslant 0$, making the intensity history dependent and a stochastic process by itself. 
 In our work, we focus on the exponential kernel, however, other functional forms, such as log-logistic function, are possible, and the general properties of our model do not depend on this particular 
 choice.
 \item[(iii)] {\bf Survival process.} There is only one event for an instantiation of the process,~\ie,
 \begin{align}
   \label{eq:survival_process}
   \lambda^*(t) = (1 - N(t)) g(t), 
 \end{align}
 where $g(t)\geqslant 0$ and the term  $(1 - N(t))$ makes sure $\lambda^*(t)$ is $0$ if an event already happened before $t$.    
\end{itemize}
Figure~\ref{fig:processes} illustrates these processes. Interested reader should refer to~\cite{AalBorGje08} for more details on the framework of temporal point processes.

\section{Generative Model of Information Diffusion and Network Evolution}
\label{sec:model}
In this section, we use the above background on temporal point processes to formulate \coevolve, our probabilistic model for the joint dynamics of information diffusion and network evolution.

\subsection{Event Representation}
 \begin{figure}[t]
        \centering
        \begin{tabular}{c c}
           \hspace{-4mm}
       	   \includegraphics[width=0.51\textwidth]{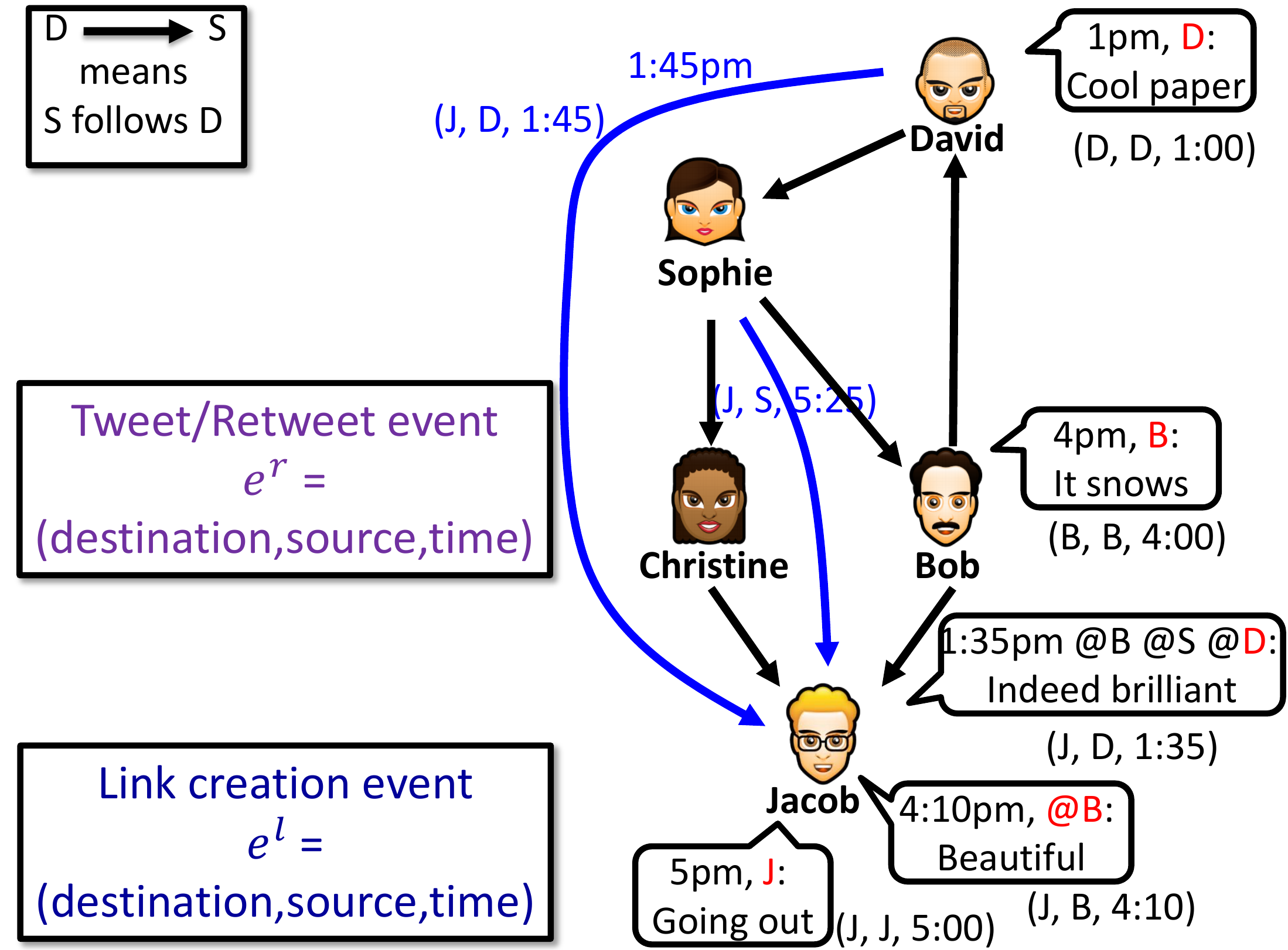}  & \hspace{-5mm}
	  \includegraphics[width=0.51\textwidth]{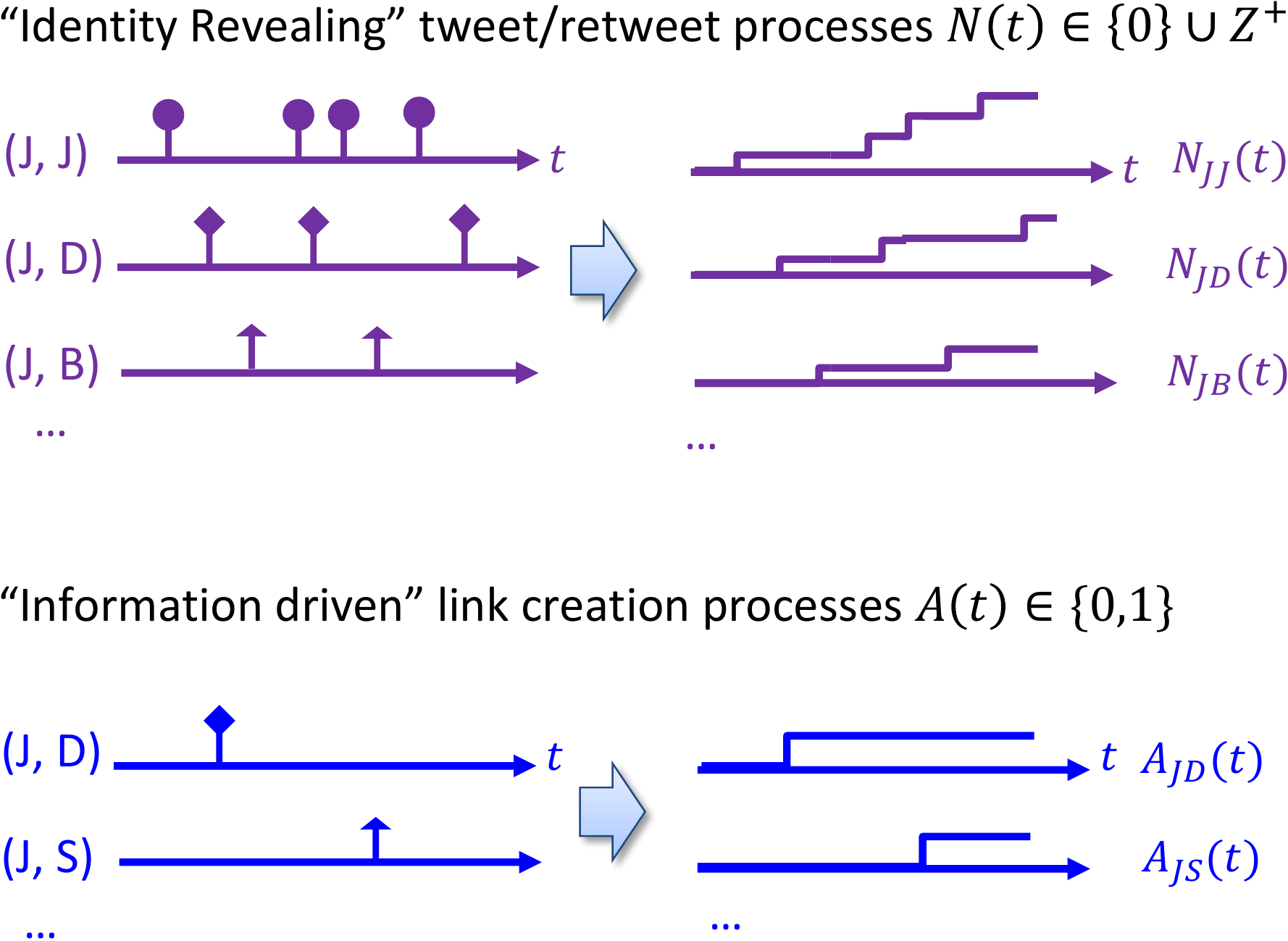} \hspace{-5mm} \\
	  	  \vspace{1mm} & 	  \vspace{1mm}\\
	  a) Event representation & b) Point and counting processes
        \end{tabular}
        \caption{Events as point and counting processes. Panel (a) shows a trace of events generated by a tweet from David followed by new links Jacob creates to follow David and Sophie. Panel (b) shows the associated points in time and the counting process realization. }
        \label{fig:events}
\end{figure}
We model the generation of two types of events: tweet/retweet events, $e^r$, and link creation events, $e^l$. Ins\-tead of just the time $t$, we record each event as a triplet,
as illustrated in Figure~\ref{fig:events}(a):
\begin{align}
  \label{eq:event}
  e^r~~~\text{or}~~~e^l ~~:=~~ (~\explain{u}{\text{destination}},~~~\explainup{s}{\text{source}},~~~\explain{t}{\text{time}}~).
\end{align}

{\bf For retweet event}, the triplet means that the destination node $u$ retweets at time $t$ a tweet originally posted by source node $s$. Recording the source node $s$ reflects the real world scenario that information sources are ex\-pli\-cit\-ly acknowledged. Note that the occurrence of event $e^r$ does \emph{not} mean that $u$ is directly retweeting from or is connected to $s$. This event can happen when $u$ is re\-twee\-ting a me\-ssage by another node $u'$ where the original information source $s$ is acknowledged. Node $u$ will pass on the same source acknowledgement to its followers (\eg, ``I agree @a @b @c @s''). Original tweets posted
by node $u$ are allowed in this notation. In this case, the event will simply be $e^r=(u,u,t)$. Given a list of retweet events up to but not including time $t$, the history $\Hcal_{us}^r(t)$ of retweets by $u$ due to 
source $s$ is 
\begin{align}
\Hcal_{us}^r(t) = \cbr{e_i^r = (u_i,s_i,t_i) | u_i = u~\text{and}~s_i = s}.
\end{align}
The entire history of retweet events is denoted as 
\begin{align}
\Hcal^r(t):=\cup_{u,s \in [m]} \Hcal^{r}_{us}(t)
\end{align}

{\bf For link creation event}, the triplet means that destination node $u$ creates at time $t$ a link to source node $s$, \ie, from time $t$ on, node $u$ starts following node $s$. To ease the exposition, we restrict ourselves to the case 
where links cannot be deleted and thus each (directed) link is created only once. However, our model can be easily augmented to consider multiple link creations and deletions per node pair, as discussed in Section~\ref{sec:extensions}. 
We denote the link creation history as $\Hcal^l(t)$. 

\subsection{Joint Model with Two Interwoven Components}

Given $m$ users, we use two sets of counting processes to record the generated events, one for information diffusion and another for network evolution. More specifically,
\begin{itemize}
  \item[{I}.] Retweet events are recorded using a matrix $\Nb(t)$ of size $m\times m$ for each fixed time point $t$. The $(u,s)$-th entry in the matrix, $N_{us}(t) \in \cbr{0} \cup \ZZ^+$, counts the number of retweets of $u$ due to source $s$ up to time $t$. These counting processes are ``identity revealing", since they keep track of the source node that triggers each retweet. The matrix $\Nb(t)$ is typically less sparse than $\Ab(t)$, since $N_{us}(t)$ can be nonzero even when node $u$ does not directly follow $s$. We also let $d\Nb(t):=\rbr{~dN_{us}(t)~}_{u,s\in [m]}$. 
  \item[{II}.] Link events are recorded using an adjacency matrix $\Ab(t)$ of size $m\times m$ for each fixed time point $t$. The $(u,s)$-th entry in the matrix, $A_{us}(t) \in \cbr{0,1}$, indicates whether $u$ is directly following $s$. Therefore,
  $A_{us}(t) = 1$ means the directed link has been created before $t$. For simplicity of exposition, we do not allow self-links. The matrix $\Ab(t)$ is typically sparse, but the number of nonzero entries can change over time. We also define $d\Ab(t):=\rbr{~dA_{us}(t)~}_{u,s\in [m]}$.
\end{itemize}
%
%
%
%
%
Then, the interwoven information diffusion and network evolution processes can be characterized using\- their respective intensities 
\begin{align}
& \EE[d\Nb(t)\, |\,  \Hcal^r(t) \cup \Hcal^l(t)] = \Gammab^*(t) \, dt \\
& \EE[d\Ab(t)\,  |\,  \Hcal^r(t) \cup \Hcal^l(t)] = \Lambdab^*(t) \, dt,
\end{align} 
where, 
\begin{align}
& \Gammab^*(t) = (~\gamma_{us}^*(t)~)_{u,s\in [m]} \\
&\Lambdab^*(t) = (~\lambda_{us}^*(t)~)_{u,s\in [m]}.
\end{align}
The sign $^*$ means that the intensity matrices will depend on the joint history, 
$\Hcal^r(t) \cup \Hcal^l(t)$, 
and hence their evolution will be coupled. 
By this coupling, we make: (i) the counting processes for link creation to be ``information driven" and (ii) the evolution of the linking structure to change the information diffusion process. 
%
%
In the next two sections, we will specify the details of these two intensity matrices. 

\subsection{Information Diffusion Process} \label{sec:intensity-information-diffusion}
We model the intensity, $\Gammab^*(t)$, for retweeting events using multivariate Hawkes process~\cite{Liniger09}:
\begin{equation}
  \gamma_{us}^*(t) = \II[u = s] \, \eta_{u} + \II[u \neq s]\,              \beta_{s} \sum\nolimits_{v \in \Fcal_u(t)}  \kappa_{\omega_1}(t) \star  \rbr{A_{uv}(t)\,\, dN_{vs}(t)}, 
\label{eq:activity-occurrence}     
\end{equation}  
%
where $\II[\cdot]$ is the indicator function and $\Fcal_u(t):=\cbr{v \in [m] : A_{uv}(t) = 1}$ is the current set of follo\-wees of $u$. 
The term $\eta_u \geqslant 0$ is the intensity of original tweets by a user $u$ on his own initiative, becoming the source of 
a cascade, and the term $\beta_s \sum_{v \in \Fcal_u(t)} \kappa_{\omega}(t) \star  \rbr{A_{uv}(t)\,\, dN_{vs}(t)}$ models the propagation 
of peer influence over the network, where the triggering kernel $\kappa_{\omega_1}(t)$ models the decay of peer influence over time. 

Note that the retweeting intensity matrix $\Gammab^*(t)$ is by itself a stochastic process that depends on the time-varying network topology, the non-zero entries in $\Ab(t)$, whose growth is controlled by the network evolution process in 
Section~\ref{sec:intensity-network-evolution}. Hence the model design captures the influence of the network topology and each source'{}s influence, $\beta_{s}$, on the information diffusion process.
More specifically, to compute $\gamma_{us}^*(t)$, one first finds the current set $\Fcal_u(t)$ of followees of $u$, and then aggregates the retweets of these followees that are due to source $s$. Note that these followees may or may not \emph{directly} follow 
source $s$.
Then, the more frequently node $u$ is exposed to retweets of tweets originated from source $s$ via her followees, the more likely she will also retweet a tweet originated from source $s$. Once node $u$ retweets due to source $s$, 
the corresponding $N_{us}(t)$ will be incremented, and this in turn will increase the likelihood of triggering retweets due to source $s$ among the followers of $u$. 
Thus, the source does \emph{not} simply broadcast the message to nodes directly following her but her influence propagates through the network even to those nodes that do not directly follow her. 
Finally, this information diffusion model allows a node to repeatedly generate events in a cascade, and is very different from the independent cascade or linear threshold models~\cite{KemKleTar03} which allow at 
most one event per node per cascade. 
 \begin{figure}[t]
        \centering
        \begin{tabular}{c c c}
           \hspace{-4mm}
       	   \includegraphics[width=0.395\textwidth]{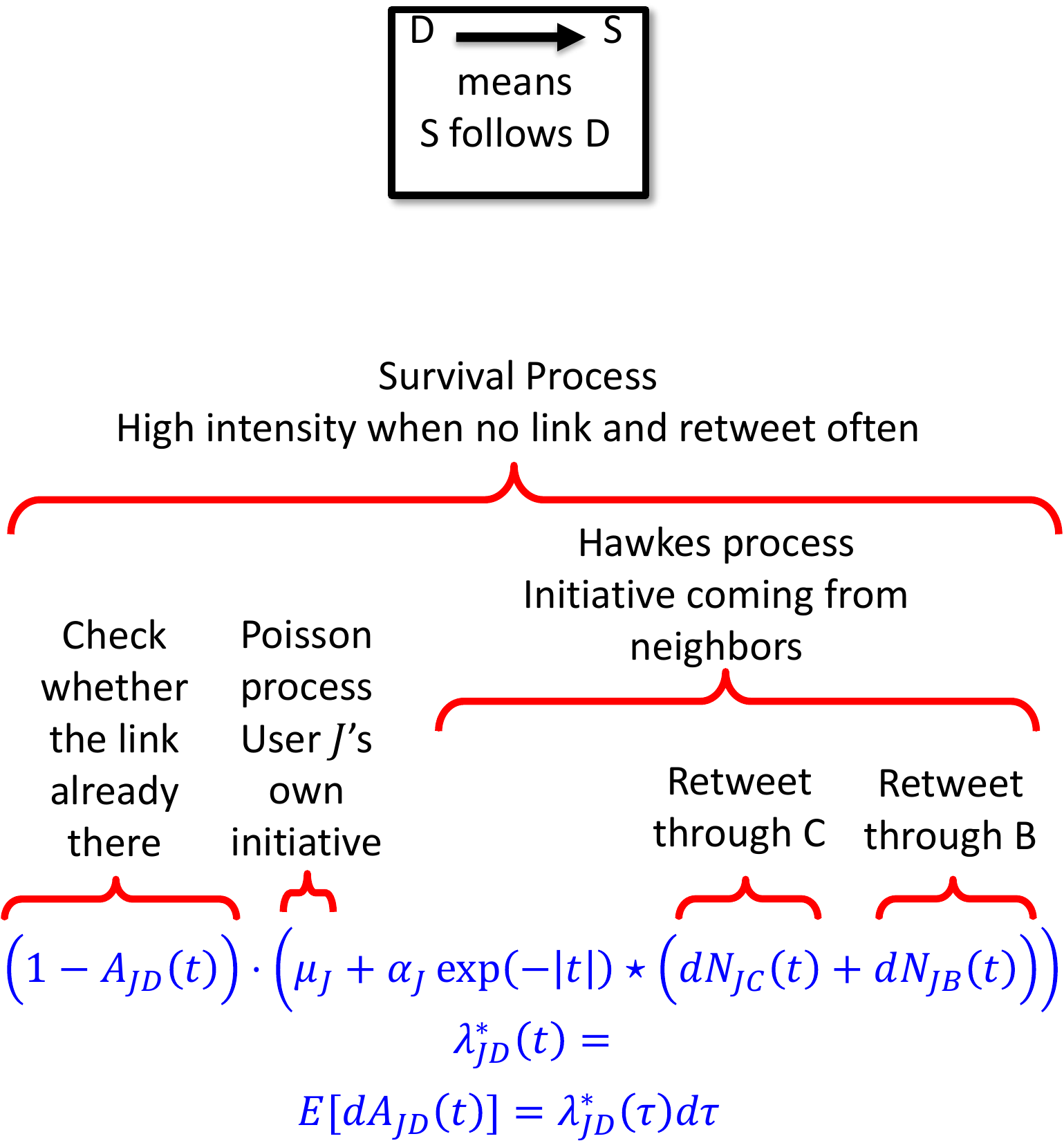} \hspace{-4mm} &
	    \includegraphics[width=0.19\textwidth]{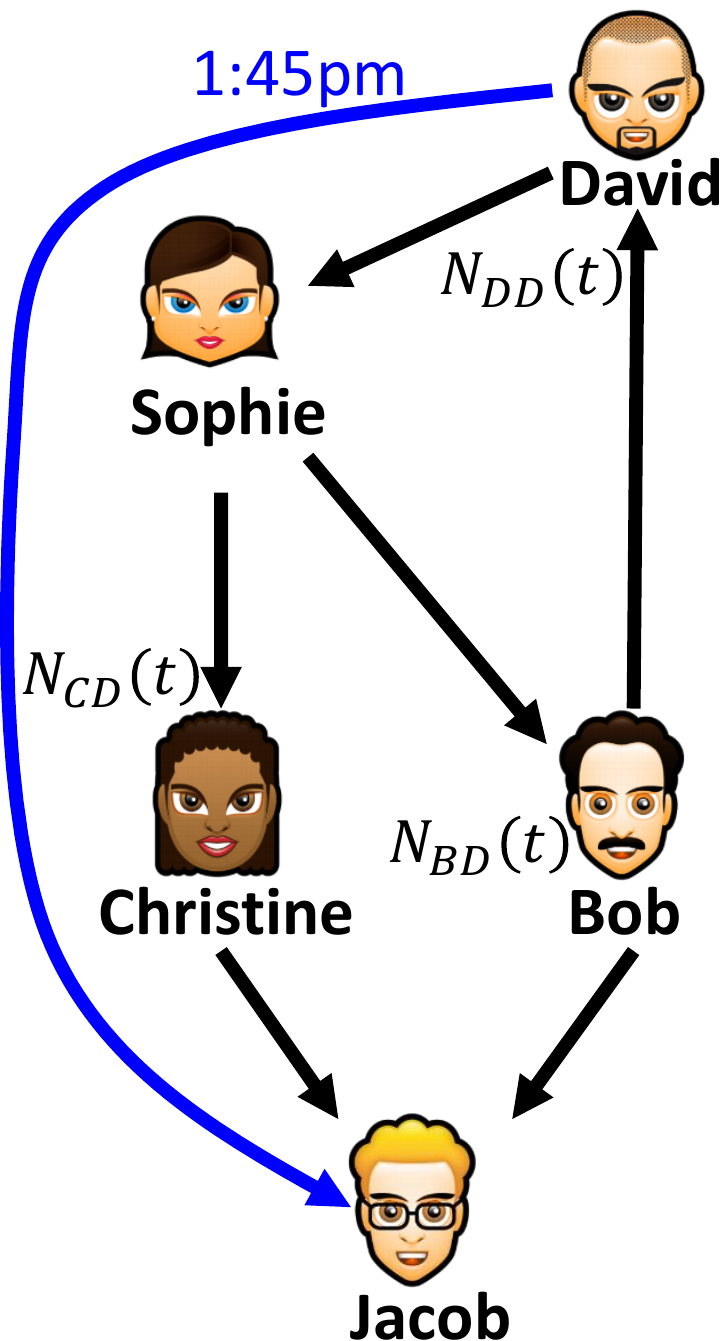} \hspace{-4mm} & \hspace{-4mm}
	   \includegraphics[width=0.365\textwidth]{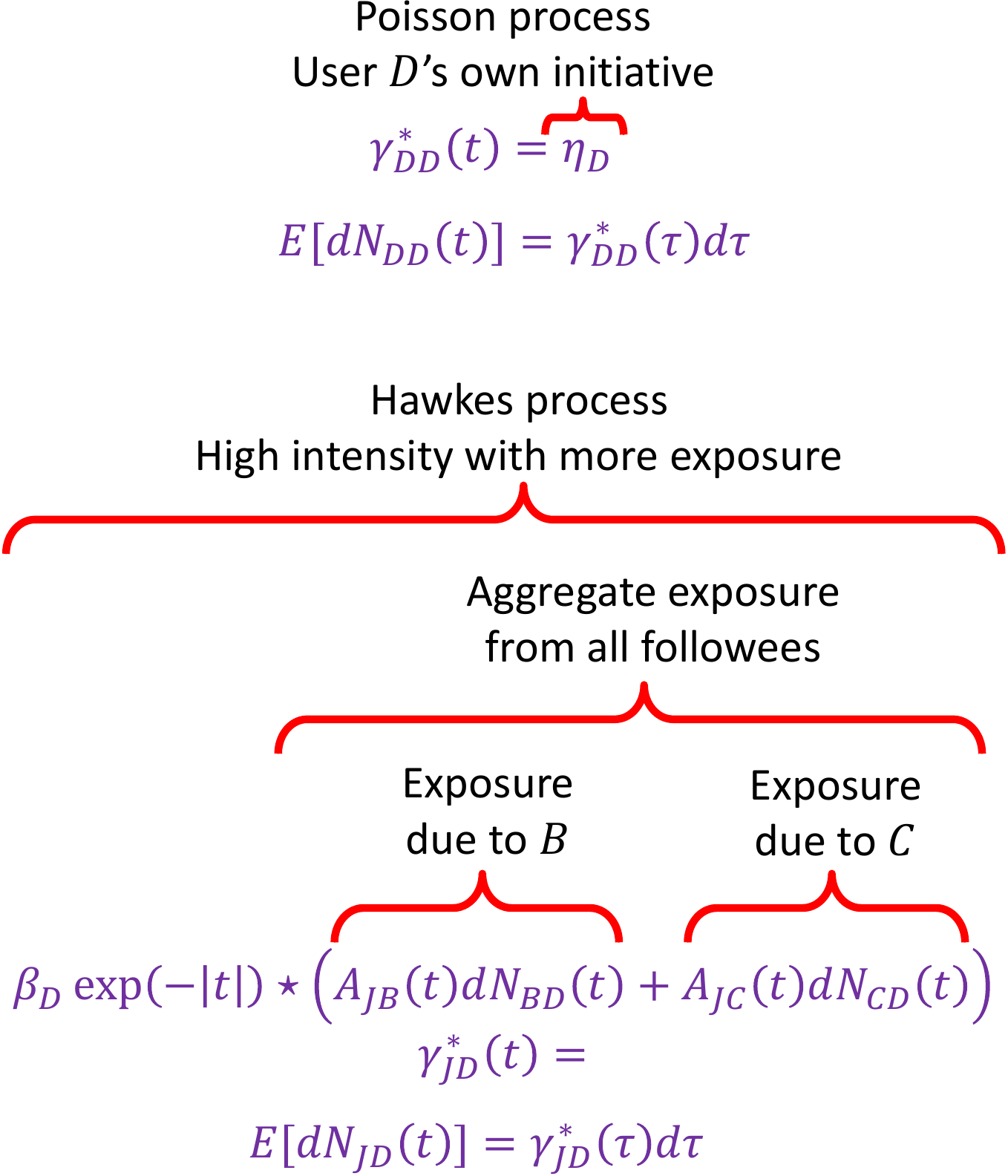} \\
	   a) Link creation process
	   & b) Social network
	   & c) Information diffusion process
        \end{tabular}
        \caption{The breakdown of conditional intensity functions for 1) information diffusion process of Jacob retweeting posts originated from David $N_{JD}(t)$; 2) information diffusion process of David tweeting on his own initiative $N_{DD}(t)$; 3) link creation process of Jacob following David $A_{JD}(t)$}
        \label{fig:intervowen}
\end{figure}

\subsection{Network Evolution Process} \label{sec:intensity-network-evolution}

In our model, each user is exposed to information through a time-varying set of neighbors. By doing so, information diffusion affects network evolution, increasing the practical 
 application of our model to real-world network datasets.
 The particular definition of exposure (\eg, a retweet'{}s neighbor) depends on the type of historical information that is available.
 Remarkably, the flexibility of our model allows for different types of diffusion events, which we can broadly classify into two categories. 

 In the first category, events corresponds to the times when an information cascade hits a person, for example, through a retweet from one of her 
 neighbors, but she does not explicitly like or forward the associated post. 
Here, we model the intensity, $\Lambdab^*(t)$, for link creation using a combination of survival and Hawkes process:
\begin{equation}
\label{eq:network-evolve-intensity-2}
  \lambda_{us}^*(t) = (1-A_{us}(t))\left(\mu_u + \alpha_u  \sum_{v \in \Fcal_u(t)} \kappa_{\omega_2}(t)\star dN_{vs}(t)\right),
\end{equation}
where the term $1-A_{us}(t)$ effectively ensures a link is crea\-ted only once, and after that, the corresponding intensity is set to zero. 
The term $\mu_u \geqslant 0$ denotes a baseline intensi\-ty, which mo\-dels when a node $u$ decides to follow a source $s$ spontaneously at her own initiative.
The term $\alpha_u \kappa_{\omega_2}(t) \star dN_{vs}(t)$ corresponds to the retweets by node $v$ (a followee of node $u$) which are originated from source $s$. 
The triggering kernel $\kappa_{\omega_2}(t)$ models the decay of interests over time.

In the second category, the person decides to explicitly like or forward the associated post and influencing events correspond to the times when she does so.
In this case, we model the intensity, $\Lambdab^*(t)$, for link creation as:
\begin{equation}
\label{eq:network-evolve-intensity}
  \lambda_{us}^*(t) = (1-A_{us}(t))(\mu_u + \alpha_u \,\kappa_{\omega_2}(t)\star dN_{us}(t)),
\end{equation}
%
where the terms $1-A_{us}(t)$, $\mu_u \geqslant 0$, and the decaying kernel  $\kappa_{\omega_2}(t)$ play the same role as the corresponding ones in Equation~\eqref{eq:network-evolve-intensity-2}.
The term $\alpha_u \kappa_{\omega_2}(t) \star dN_{us}(t)$ corresponds to the retweets of node $u$ due to tweets ori\-gi\-na\-lly published by source $s$.
The higher the corresponding retweet intensity, the more likely $u$ will find information by source $s$ useful and will create a \emph{direct} link to $s$.  

In both cases, the link creation intensity $\Lambdab^{*}(t)$ is also a stochastic process by itself, which depends on the retweet events, be it the retweets by the neighbors of node $u$ or
the retweets by node $u$ herself, respectively.
Therefore, it captures the influence of retweets on the link creation, and closes the loop of mutual influence between information diffusion and network topology. Figure~\ref{fig:intervowen} illustrates these two 
interdependent intensities.

Intuitively, in the latter category, information diffusion events are more prone to trigger new connections, because, they involve the target and source nodes in an explicit interaction, however, they are 
also less frequent. 
Therefore, it is mostly suitable to large event datasets, as the ones we generate in our synthetic experiments. 
In contrast, in the former category, information diffusion events are less likely to inspire new links but found in abundance. Therefore, it is more suitable for smaller datasets, as the ones we use in our real-world
experiments.
Consequently, in our synthetic experiments we used the latter and in our real-world experiments, we used the former. More generally, the choice of exposure event should be made based on the type and amount 
of available historical information. 

Finally, note that creating a link is more than just adding a path or allowing information sources to take shortcuts during diffusion. The network evolution makes fundamental 
changes to the diffusion dynamics and stationary distribution of the diffusion process in Section~\ref{sec:intensity-information-diffusion}. As shown 
in~\cite{FarDuGomValZhaSon14}, given a fixed network structure $\Ab$, the expected retweet intensity $\mub_{s}(t)$ at time $t$ due to source $s$ 
will depend of the network structure in a nonlinear fashion, \ie,
\begin{align}
  \mub_{s}(t):=\EE[\Gammab_{\cdot s}^*(t)]=(e^{(\Ab-\omega_1\Ib)t} + \omega_1(\Ab-\omega_1\Ib)^{-1}(e^{(\Ab-\omega_1\Ib)t}-\Ib))\,\etab_s,
\end{align}
where $\etab_s\in \RR^m$ has a single nonzero entry with value $\eta_s$ and $e^{(\Ab-\omega_1\Ib)t}$ is the matrix exponential.  
When $t\rightarrow\infty$, the stationary intensity $\bar \mub_s=(\Ib - \Ab / \omega)^{-1}\, \etab_s$ is also nonlinearly related to the network structure. 
Thus, given two network structures $\Ab(t)$ and $\Ab(t')$ at two points in time, which are different by a few edges, the effect of these edges on the information 
diffusion is not just an additive relation. 
Depending on how these newly created edges modify the eigen-structure of the sparse matrix $\Ab(t)$, their effect on the information diffusion dynamics can be 
very significant.

\begin{figure}[t!]
        \centering
        \begin{tabular}{c c} 
        \includegraphics[width=0.47\textwidth]{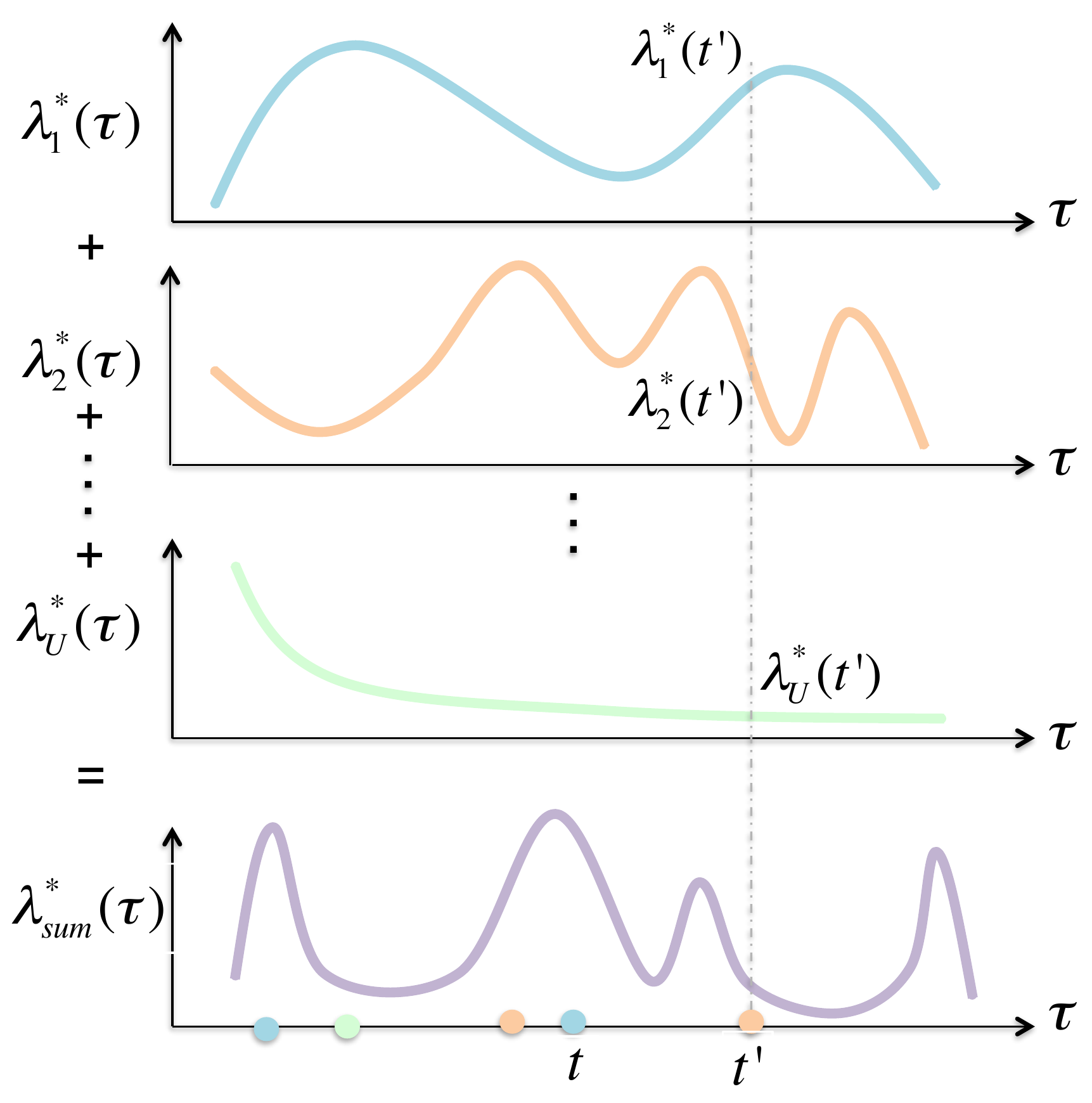} &
           \includegraphics[width=0.465\textwidth]{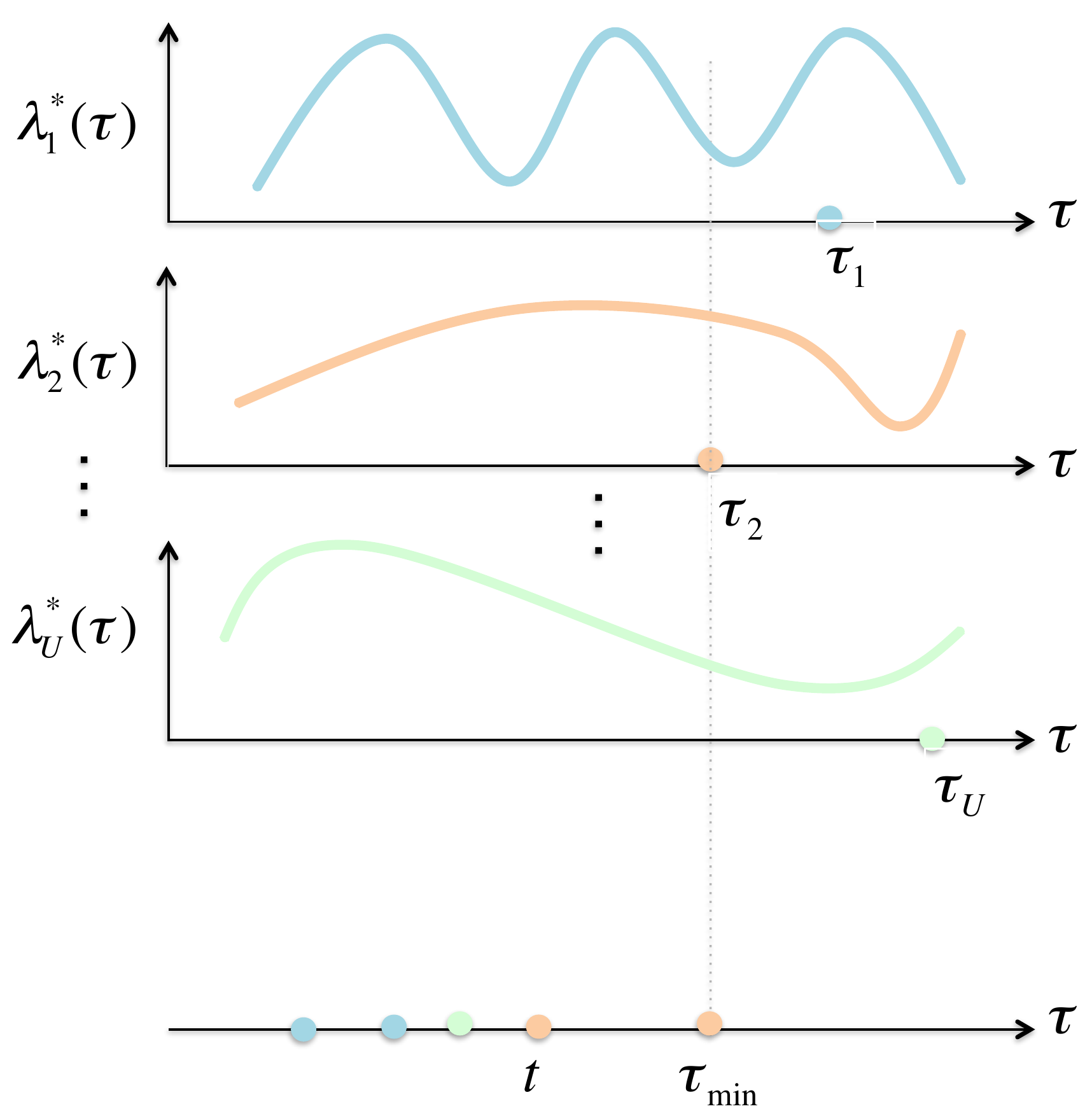} \\
          (a) Ogata'{}s algorithm & (b) Proposed algorithm
        \end{tabular}
 \caption{Ogata'{}s algorithm vs our simulation algorithm in simulating $U$ interdependent point processes characterized by intensity functions $\lambda_1(t), \ldots, \lambda_U(t)$.
 Panel (a) illustrates Ogata'{}s algorithm, which first takes a sample from the process with intensity equal to sum of individual intensities and then assigns it to the proper dimension proportionally to its contribution to the 
 sum of intensities. Panel (b) illustrates our proposed algorithm, which first draws a sample from each dimension independently and then takes the minimum time among them.}
        \label{fig:simulation}
\end{figure}

\begin{algorithm}[t]
\caption{Simulation Algorithm for \coevolve} \label{alg:simulation}
\begin{algorithmic}
\State {\bf Initialization:}
\State Initialize the priority queue $Q$ 
\For{ $\forall~u,s \in [m]$} 
\State Sample next link event $e^l_{us}$ from $A_{us}$ (Algorithm \ref{alg:sampling})
\State $Q.insert(e^l_{us})$
\State Sample next retweet event $e^r_{us}$ from $N_{us}$ (Algorithm ~\ref{alg:sampling})
\State $Q.insert(e^r_{us})$
\EndFor
\State \textbf{General Subroutine:} 
\State $t \gets 0$
\While{$t <T $}
\State  $e \gets $ $Q.extract\_min() $
\If{$e = (u, s, t')$ is a retweet event}
\State Update the history $\Hcal_{us}^r(t') = \Hcal_{us}^r(t) \cup \cbr{e}$ 
\For{$\forall~v~ s.t.~ u \rightsquigarrow v $}
\State Update event intensity: $\gamma_{vs}(t') = \gamma_{vs}(t'^{-}) + \beta$
\State Sample retweet event $e^r_{vs}$ from $\gamma_{vs}$ (Algorithm~\ref{alg:sampling})
\State $Q.update\_key(e^r_{vs})$
\If{ NOT  $s \rightsquigarrow v$} 
\State Update link intensity: $\lambda^*_{vs}(t') = \lambda^*_{vs}(t'^{-}) + \alpha$
\State Sample link event $e^l_{vs}$ from $\lambda_{vs}$ (Algorithm~\ref{alg:sampling})
\State $Q.update\_key(e^l_{vs})$
\EndIf
\EndFor
\Else
\State Update the history $\Hcal_{us}^l(t') = \Hcal_{us}^l(t) \cup \cbr{e}$
\State $\lambda^*_{us}(t) \gets 0~~\forall~ t > t'$
\EndIf
\State $t \gets t'$
\EndWhile
\end{algorithmic}
\end{algorithm}

\begin{algorithm}[t]
\caption{Efficient Intensity Computation} \label{alg:intensity}
\begin{algorithmic}
\small
\State {\bf Global Variabels:} 
\State Last time of intensity computation: $t$
\State Last value of intensity computation: $I$
\State {\bf Initialization:}
\State $t \gets 0$
\State $I \gets \mu$
\State \textbf{function} $get\_intensity(t')$
\State \hspace{2mm} $I' \gets (I - \mu) \exp(-\omega(t'-t)) + \mu$
\State \hspace{2mm} $t \gets t'$
\State \hspace{2mm} $I \gets I'$
\State \hspace{2mm} \textbf{return}  $I$
\State \textbf{end function}
\end{algorithmic}
\end{algorithm}

\begin{algorithm}[t]
\caption{1-D next event sampling}\label{alg:sampling} 
\begin{algorithmic}
\State \textbf{Input:} Current time: $t$
\State \textbf{Output:} Next event time: $s$
\State $s \gets t$
\State $\hat{\lambda} \gets \lambda^*(s)$ (Algorithm \ref{alg:intensity})
\While{$ s < T$}
\State  $g \sim Exponential(\lambdahat)$
\State $s \gets s + g$
\State $\lambdabar \gets \lambda^*(s)$ (Algorithm \ref{alg:intensity})
\State Rejection test: 
\State $d \sim Uniform(0,1)$
\If{$d\times \lambdahat  <  \lambdabar$ }
\State \textbf{return} $s$
\Else
\State $\lambdahat= \lambdabar$
\EndIf
\EndWhile
\State \textbf{return} $s$
\end{algorithmic}
\end{algorithm}

\section{Efficient Simulation of Coevolutionary Dynamics} 
\label{sec:simulation}

We could simulate samples (link creations, tweets and retweets) from our model by adapting Ogata'{}s thinning algorithm~\cite{Ogata81}, originally designed for multidimensional Hawkes 
processes. 
However, a naive implementation of Ogata'{}s algorithm would scale poorly, \ie, for each sample, we would need to re-evaluate $\Gammab^{*}(t)$ and $\Lambdab^{*}(t)$. Thus, to draw $n$ 
sample events, we would need to perform $O(m^2 n^2)$ operations, where $m$ is the number of nodes. 
Figure~\ref{fig:simulation}(a) schematically demonstrates the main steps of Ogata's algorithm.
Please refer to Appendix~\ref{app:sec:simulation} for further details.

Here, we design a sampling procedure that is especially well-fitted for the structure of our model.
The algorithm is based on the following key idea: if we consider each intensity function in $\Gammab^{*}(t)$ and $\Lambdab^{*}(t)$ as a separate point process and draw a sample from each, 
the minimum among all these samples is a valid sample for the multidimensional point process.

As the results of this section are general and can be applied to simulate any multi-dimensional point process model we abuse the notation a little bit and represent $U$ (possibly inter-dependent) point processes by $U$ intensity functions $\lambda^*_1, \ldots, \lambda^*_U$. In the specific case of simulating coevolutionary dynamics we have $U=m^2 + m(m-1)$ were the first and second terms are the number information diffusion and link creation processes, respectively.
Figure~\ref{fig:simulation} illustrates the way in which both algorithms differ. 
The new algorithm has the following steps:
\begin{enumerate}
\item Initialization: Simulate each dimension separately and find their next sampled event time.
\item Minimization: Take the minimum among all the sampled times and declare it as the next event of the multidimensional  process.
\item Update: Recalculate the intensities of the dimensions that are affected by this approved sample and re-sample only their next event. Then go to step 2.
\end{enumerate}
%
%
To prove that the new algorithm generates samples from the same distribution as Ogata{}'s algorithm does we need the following Lemma. It justifies step 2 of the above outline.

\begin{lemma}
Assume we have $U$ independent non-homogeneous Poisson processes with intensity $\lambda^*_1(\tau), \ldots, \lambda^*_U(\tau)$. Take random variable $\tau_u$ equal to the time of process $u${}'s first event after time $t$. 
Define
$ \tau_{min} = \min_{1 \le u \le U} \cbr{\tau_u}$ and 
$ u_{min} = \argmin_{1 \le u \le U} \cbr{\tau_u}$.
Then, 

(a) $\tau_{min}$ is the first event after time $t$ of the Poisson process with intensity $\lambda^*_{sum}(\tau)$.
In other words, $\tau_{min}$ has the same distribution as the next event ($t'$) in Ogata{}'s algorithm.

(b) $u_{min}$ follows the conditional distribution $\PP(u_{min}=u | \tau_{min} = x) = \frac{\lambda^*_U(x)}{\lambda^*_{sum}(x)}$.
{\it I.e.} the dimension firing the event comes from the same distribution as the one in Ogata{}'s algorithm. 
\end{lemma}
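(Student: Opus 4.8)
The plan is to build directly on the survival function $S^*$ and the density $f^*$ established earlier in the excerpt, and to exploit the stated independence of the $U$ processes to turn a product of per-process survival functions into the survival function of the summed intensity $\lambda^*_{sum}(\tau) = \sum_{u=1}^U \lambda^*_u(\tau)$.

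For part (a), I would first write the survival function of $\tau_u$, the first event of process $u$ after time $t$. Since process $u$ is a non-homogeneous Poisson process with intensity $\lambda^*_u$, we have $\PP\cbr{\tau_u > x} = \exp\rbr{-\int_t^x \lambda^*_u(\tau)\,d\tau}$. The event $\cbr{\tau_{min} > x}$ is exactly the intersection over all $u$ of $\cbr{\tau_u > x}$, so by independence
\begin{align}
\PP\cbr{\tau_{min} > x} = \prod_{u=1}^U \PP\cbr{\tau_u > x} = \exp\rbr{-\int_t^x \sum_{u=1}^U \lambda^*_u(\tau)\,d\tau} = \exp\rbr{-\int_t^x \lambda^*_{sum}(\tau)\,d\tau}.
\end{align}
This is precisely the survival function of a Poisson process with intensity $\lambda^*_{sum}$, which identifies the distribution of $\tau_{min}$ with that of the first event of that process, and hence with the next event $t'$ drawn in Ogata's algorithm.

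For part (b), I would compute the joint ``density'' of the pair $(u_{min}, \tau_{min})$ through an infinitesimal argument. The event $\cbr{u_{min} = u,\ \tau_{min} \in [x, x+dx)}$ occurs when process $u$ fires its first event in $[x, x+dx)$ while every other process $v \neq u$ survives past $x$. Using $f^*_u(x) = \lambda^*_u(x)\,S^*_u(x)$ for the firing process and $S^*_v(x)$ for the survivors, independence yields the joint density $\lambda^*_u(x) \prod_{v=1}^U S^*_v(x)$, whose product again collapses into the single exponential factor from part (a). Dividing by the marginal density of $\tau_{min}$ obtained in part (a), namely $\lambda^*_{sum}(x)\exp\rbr{-\int_t^x \lambda^*_{sum}(\tau)\,d\tau}$, the common exponential cancels and leaves $\PP(u_{min} = u \mid \tau_{min} = x) = \lambda^*_u(x)/\lambda^*_{sum}(x)$, matching the thinning assignment in Ogata's algorithm.

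The routine parts (the product-to-sum step and the cancellation) are immediate; the main obstacle is making the infinitesimal density argument in part (b) rigorous --- in particular justifying that within the window $[x, x+dx)$ exactly one process fires, so that the contributions from two or more simultaneous events are $o(dx)$ and may be dropped. I would handle this cleanly by working instead with the joint event $\cbr{\tau_{min} > x,\ u_{min} = u}$ on finite windows, writing its probability in closed form via the survival functions and then differentiating in $x$, which recovers the same ratio while avoiding the heuristic infinitesimal language entirely.
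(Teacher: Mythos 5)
Your proposal is correct and follows essentially the same route as the paper: part (a) multiplies the per-process survival functions and collapses the product into the survival function of $\lambda^*_{sum}$, and part (b) writes the joint density of $(u_{min},\tau_{min})$ as $\lambda^*_u(x)\prod_v S^*_v(x)$ and normalizes by the marginal. Your version is in fact slightly more careful than the paper's, which labels the unnormalized joint density as the conditional probability and then says ``after normalization''; your explicit division by the marginal density (and your remark on ruling out simultaneous firings) tightens exactly that step.
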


\begin{proof}
(a) The waiting time of the first event of a dimension $u$ is exponentially distributed\footnote{ If random variable $X$ is exponentially distributed with parameter $r$, then $f_X(x) = r \exp(-rx)$ is its probability distribution function and  $F_X(x) = 1- \exp(-rx)$ is the cumulative distribution function.}  random variable ~\cite{Ross06}; $\ie$, 
$
\tau_u - t \sim Exponential \rbr{\int_t^{t+\tau_u} \lambda^*_u(\tau) \, d  \tau}
$. 
We have:
\begin{equation}
\begin{split}
\PP (\tau_{min} \le x | x > t) & 
= 1 - \PP(\tau_{min} > x | x > t) 
= 1- \PP(\min \rbr{\tau_1, \ldots, \tau_U} > x | x > t) \\ & 
= 1 - \PP(\tau_1 > x, \ldots,  \tau_U > x | x > t) 
= 1- \prod_{u=1}^U \PP(\tau_u > x | x > t)  \\ & 
=  1- \prod_{u=1}^U \exp \rbr{- \int_t^{t+x} \lambda^*_u(\tau) \, d  \tau}
= 1- \exp \rbr{- \int_t^{t+x} \lambda^*_{sum}(\tau) \, d  \tau} .
\end{split}
\end{equation}
Therefore, $\tau_{min}-t$ is exponentially distributed with parameter $\int_t^{\tau_{min}} \lambda^*_{sum}(\tau) \, d \tau$ which can be seen as the first event of a non-homogenous poisson process with intensity $\lambda^*_{sum}(\tau)$ after time $t$.

(b) To find the distribution of $u_{min}$ we have
\begin{equation}
\begin{split}
\PP(u_{min} = u | \tau_{min} = x)  &
= \lambda^*_u(x) \exp \rbr{- \int_t^{t+x} \lambda^*_u(\tau) \, d  \tau} 
\prod_{v \neq u} \exp \rbr{- \int_t^{t+x} \lambda^*_v(\tau) \, d \tau } \\ &
= \lambda^*_u(x) \prod_{v} \exp \rbr{- \int_t^{t+x} \lambda^*_v(\tau) \, d \tau }.
\end{split}
\end{equation}
After normalization we get 
$ \PP(u_{min} = u | \tau_{min} = x) = \frac{\lambda^*_U(x)}{\lambda^*_{sum}(x)}$.

\end{proof}

Given the above Lemma, we can now prove that the distribution of the samples generated by the proposed algorithm is identical to the one generated by Ogata'{}s method.
\begin{theorem}
The sequence of samples from Ogata'{}s algorithm and our proposed algorithm follow the same distribution.
\end{theorem}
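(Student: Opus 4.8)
The plan is to argue by induction on the index of the generated event, using the Lemma to handle each individual step. The Lemma already establishes that a single next-event draw --- taking the minimum $\tau_{min}$ over the per-dimension samples together with its argmin $u_{min}$ --- reproduces exactly the (time, dimension) distribution of one step of Ogata's thinning algorithm, \emph{provided} the per-dimension samples are drawn from the correct current intensities and are conditionally independent. Thus the theorem reduces to showing that, at every step, the proposed algorithm feeds the Lemma a collection of samples $\cbr{\tau_u}$ satisfying these two hypotheses, and that the updating rule preserves this invariant.

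First I would record the conditional-independence observation that licenses the use of the Lemma for the genuinely interdependent model. Fix the joint history $\Hcal^r(t)\cup\Hcal^l(t)$ up to the current event time $t$. Until the next event actually fires, none of the intensities $\lambda^*_1,\ldots,\lambda^*_U$ changes its functional form, since cross-excitation in the Hawkes and survival terms is only realized when a retweet or link event occurs. Hence, conditioned on this common history, the race to the first subsequent event is governed by $U$ independent non-homogeneous Poisson processes whose intensities are the deterministic functions of the history, which is precisely the setting of the Lemma.

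Next comes the inductive argument proper. For the base case, both algorithms start from the same empty history at initialization, so the first event matches in distribution directly by the Lemma. For the inductive step, assume the first $k$ events coincide in distribution; then the histories agree, so the updated intensities going forward are identical deterministic functions of the history under both algorithms, and Ogata's next event is the first arrival of the process with intensity $\lambda^*_{sum}$ built from them. It remains to verify that, after the $k$-th event, the proposed algorithm holds in its queue a valid independent next-event sample for every dimension. The dimensions whose intensity was altered by the $k$-th event are re-sampled explicitly from the new intensity (the \textbf{Update} step), so these are correct by construction.

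The crux --- and the step I expect to be the main obstacle --- is justifying that the \emph{retained} samples, for the dimensions that the $k$-th event did \emph{not} affect, remain correctly distributed and need not be re-drawn. This rests on the lack-of-memory / independent-increments property of Poisson processes: if $\tau_v$ was drawn as the first arrival of dimension $v$ and it turned out that $\tau_v > t_k$ (so $v$ did not fire at $t_k$), then conditioned on this non-firing, $\tau_v$ is distributed exactly as the first arrival after $t_k$ of a Poisson process with the same, unchanged intensity. I would prove this conditioning identity directly from the survival function, observing that $\PP(\tau_v > x \mid \tau_v > t_k)$ collapses to the survival factor $S^*$ over $[t_k, x)$ alone because the intensity of $v$ is untouched on that interval. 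Consequently each retained $\tau_v$ is a legitimate, independent next-event sample from the current intensity. Combining the two cases, the full collection $\cbr{\tau_u}$ meets the hypotheses of the Lemma, which then yields that the $(k+1)$-th event of the proposed algorithm equals Ogata's in distribution. Induction over $k$ gives equality of the entire sequence of samples, completing the proof.
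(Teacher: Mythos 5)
Your proposal is correct and follows essentially the same route as the paper's proof: factorize the joint law of the event sequence one event at a time (your induction is the paper's chain-rule product $\prod_i \PP\{(t_i,u_i)\,|\,\Hcal_{t_i}\}$), observe that conditional on the history all dimensions behave as independent non-homogeneous Poisson processes until the next event, and invoke the Lemma to match each conditional factor with Ogata's. The one place you go beyond the paper is the explicit memorylessness argument that the \emph{retained} samples for unaffected dimensions, conditioned on having survived past $t_k$, are still distributed as first arrivals after $t_k$ from the unchanged intensity; the paper applies the Lemma as if every dimension were freshly re-sampled at each step and leaves this justification implicit, so your treatment closes a small gap rather than opening one.
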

\begin{proof}
Using the chain rule the probability of observing $\Hcal_T = \cbr{(t_1, u_1), \ldots, (t_n, u_n)}$ is written as:
\begin{equation}
\PP \cbr{(t_1, u_1), \ldots, (t_n, u_n)} = \prod_{i=1}^n \PP \cbr{(t_i, u_i) | (t_{i-1}, u_{i-1}), \ldots, (t_1, u_1)}  = \prod_{i=1}^n \PP \cbr{(t_i, u_i) | \Hcal_{t_i}}
\end{equation}
By fixing the history up to some time, say $t_i$, all dimensions of multivariate Hawkes process become independent of each other (until next event happens).
Therefore, the above lemma  can be applied to show that the next sample time from Ogata{}'s algorithm and the proposed one come from the same distribution, \ie, for every $i$,  $\PP \cbr{(t_i, u_i) | \Hcal_{t_i}}$ is the same for both algorithms. Thus, the multiplication of individual terms is also equal for both. This will prove the theorem.
\end{proof}

This new algorithm is specially suitable for the structure of our inter-coupled processes. Since social and information networks are typically sparse, every time we sample a new node (or link) event from the model, only a small number of intensity functions in the local neighborhood 
of the node (or the link), will change. This number is of $O(d)$ where $d$ is the maximum number of followers/followees per node.  As a consequence, we can reuse most of the individual samples for the next overall sample. Moreover, we can find which intensity function has the 
minimum sample time in $O(\log m)$ operations using a heap priority queue.
The heap data structure will help maintain the minimum and find it in logarithmic time with respect to the number of elements therein. Therefore, we have reduced an $O(nm)$ factor in the original algorithm to $O(d \, \log m)$.

Finally, we exploit the properties of the exponential function to update individual intensities for each new sample in $O(1)$. For simplicity consider a Hawkes process with intensity $\lambda^*(t) = \mu + \sum_{t_i \in \Hcal_{t}} \alpha \, \omega \exp(-\omega(t-t_i))$. Note that both link creation and information diffusion processes have this structure. Now, let $t_{i} < t_{i+1}$ be two arbitrary times, we have
\begin{align}
\lambda^*(t_{i+1}) = (\lambda^*(t_{i}) - \mu) \exp(-\omega (t_{i+1}-t_{i})) + \mu.
\end{align}
It can be readily generalized to the multivariate case too. 
Therefore, we can compute the current intensity without explicitly iterating over all previous events. 
As a result we can change an $O(n)$ factor in the original algorithm to $O(1)$.
Furthermore, the exponential kernel also facilitates finding the upper bound of the intensity since it always lies at the beginning of one of the processes taken into consideration.
Algorithm~\ref{alg:intensity} summarizes the procedure to compute intensities with exponential kernels, and Algorithm~\ref{alg:sampling} shows the procedure to sample the next event 
in each dimension making use of the special property of exponential kernel functions.

The simulation algorithm is shown in Algorithm~\ref{alg:simulation}. By using this algorithm we reduce the complexity from $O(n^2 m^2)$ to $O(n d \log m)$, 
where $d$ is the maximum number of followees per node. 
That means, our algorithm scales logarithmically with the number of nodes and linearly with the number of edges at any point in time during the simulation. Moreover, events for new links, tweets and retweets are generated in a temporally intertwined and interleaving fashion, since every new retweet event will modify the intensity for link 
creation and vice versa.
%


\section{Efficient Parameter Estimation from Coevolutionary Events}
\label{sec:estimation}

In this section, we first show that learning the parameters of our proposed model reduces to solving a convex optimization problem and then develop an efficient, parameter-free Minorization-Maximization 
algorithm to solve such problem.

\subsection{Concave Parameter Learning Problem}

Given a collection of retweet events $\Ecal=\{e_i^r\}$ and link creation events $\Acal = \{e_i^l\}$ recorded within a time window $[ 0, T)$, we can easily estimate the parameters needed in our model using maximum likelihood estimation. To this aim, 
we compute the joint log-likelihood  $\Lfra$ of these events using Equation~\eq{eq:loglikehood_fun}, \ie,
\begin{equation}
\begin{split}
\Lfra(\cbr{\mu_u}, \cbr{\alpha_u}, \cbr{\eta_u}, \cbr{\beta_s})  =  &\underbrace{\sum_{e_i^r \in \Ecal} \log\rbr{\gamma^{*}_{u_i s_i}(t_i)} 
  - \sum_{u,s \in [m]} \int_0^T \gamma^{*}_{us}(\tau)\, d \tau}_{\text{tweet / retweet}} \\
 & + \underbrace{\sum_{e_i^l \in \Acal} \log\rbr{\lambda_{u_i s_i}^{*}(t_i)} -\sum_{u,s\in[m]} \int_0^{T}  \lambda_{u s}^{*}(\tau) \, d\tau}_{\text{links}}. 
\label{eq:data_loglikelihood} 
\end{split}
\end{equation}
For the terms corresponding to retweets, the log term sums only over the actual observed events while the integral term actually sums over all possible combination of destination and source pairs, even if there is no event between a particular pair 
of destination and source. For such pairs with no observed events, the corresponding counting processes have essentially survived the observation window $[0,T)$, and the term $-\int_0^T \gamma_{us}^{*}(\tau)d\tau$ simply corresponds to the 
log survival probability. The terms corresponding to links have a similar structure. 

Once we have an expression for the joint log-likelihood of the retweet and link creation events, the parameter learning problem can be then formulated as follows:
\begin{equation}
	\label{eq:optimization-main}
	\begin{array}{ll}
		\mbox{minimize}_{\cbr{\mu_u}, \cbr{\alpha_u}, \cbr{\eta_u}, \cbr{\beta_s}} & - \Lfra(\cbr{\mu_u}, \cbr{\alpha_u}, \cbr{\eta_u}, \cbr{\beta_s}) 		\\
		\mbox{subject to} & \mu_u \ge 0,\quad \alpha_u \ge 0 \quad \eta_u \ge 0,\quad \beta_s \ge 0 \quad  \forall \, u,s \in [m].
	\end{array}
\end{equation}

%
%
\begin{theorem}
The optimization problem defined by Equation~\eqref{eq:optimization-main} is jointly convex.\end{theorem}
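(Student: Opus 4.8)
The plan is to exploit the elementary operations that preserve convexity and show that $-\Lfra$ is a sum of convex functions minimized over a convex constraint set. The single structural fact driving everything is that, once the observed history $\Hcal^r(T)\cup\Hcal^l(T)$ (the event times, the time-varying adjacency entries $A_{us}(t)$, and the kernel convolutions, whose bandwidths $\omega_1,\omega_2$ are held fixed) is regarded as given data, each intensity is an \emph{affine} function of the optimization variables.

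First I would verify this affineness explicitly from the model definitions. In Equation~\eqref{eq:activity-occurrence} the variable $\eta_u$ enters only as a multiplicative coefficient of $\II[u=s]$, and $\beta_s$ enters only as a coefficient of the history-determined quantity $\sum_{v\in\Fcal_u(t)}\kappa_{\omega_1}(t)\star(A_{uv}(t)\,dN_{vs}(t))$; neither term contains a product of two optimization variables, so $\gamma^{*}_{us}(t)$ is affine (indeed linear) in $(\{\eta_u\},\{\beta_s\})$. Likewise, in Equation~\eqref{eq:network-evolve-intensity} (or~\eqref{eq:network-evolve-intensity-2}) the factor $1-A_{us}(t)$ and the convolution are fixed by the data, so $\lambda^{*}_{us}(t)$ is affine in $(\{\mu_u\},\{\alpha_u\})$. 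Crucially, the diffusion intensities depend only on $(\eta,\beta)$ and the link intensities only on $(\mu,\alpha)$, so affineness holds \emph{jointly} in the full parameter vector.

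Next I would split $\Lfra$ from Equation~\eqref{eq:data_loglikelihood} into its two kinds of terms and apply standard composition rules. Each summand $\log\gamma^{*}_{u_is_i}(t_i)$ and $\log\lambda^{*}_{u_is_i}(t_i)$ is the composition of the concave nondecreasing function $\log$ with an affine map, hence concave in the parameters. Each integral term, e.g.\ $\int_0^T\gamma^{*}_{us}(\tau)\,d\tau$, is the integral of an affine function of the parameters and is therefore itself affine by linearity of integration, so its negation is affine and in particular concave. Summing concave functions preserves concavity, so $\Lfra$ is jointly concave and $-\Lfra$ is jointly convex. Since the feasible region defined by $\mu_u,\alpha_u,\eta_u,\beta_s\ge 0$ is a polyhedral (hence convex) cone, Equation~\eqref{eq:optimization-main} minimizes a convex objective over a convex set, which is exactly a jointly convex program.

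The only real subtlety — and the step I would state most carefully — is the affineness claim: one must confirm that no two optimization variables ever multiply each other inside an intensity, and that the kernel bandwidths $\omega_1,\omega_2$ are treated as fixed constants rather than as variables, since optimizing over them would introduce terms like $e^{-\omega(t-t_i)}$ inside the $\log$ and destroy convexity. Once affineness is in hand the remainder is a mechanical application of the convexity calculus, so I would keep the first step fully explicit and treat the composition and summation steps briefly.
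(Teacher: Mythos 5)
Your proposal is correct and follows essentially the same route as the paper: the paper likewise expands the likelihood, observes that it takes the form $\sum_{j} \log(\ab_j^{\top} \xb) - \sum_{k} \bb_k^{\top} \xb$ for a stacked parameter vector $\xb$ (i.e., every intensity is affine in the parameters once the history is fixed), and concludes concavity of $\Lfra$ plus convexity of the linearly constrained domain. Your explicit caveat that $\omega_1,\omega_2$ must be treated as fixed constants is a worthwhile clarification but does not change the argument.
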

\begin{proof}
We expand the likelihood by replacing the intensity functions into Equation \eqref{eq:data_loglikelihood}:
\begin{equation}
 \label{eq:data_loglikelihood_expan} 
 \begin{split}
   \Lfra = &  \sum_{e_i^r \in \Ecal} \log \rbr{\II[u_i = s_i] \, \eta_{u_i} + \II[u_i \neq s_i]\,              \beta_{s_i} \sum\nolimits_{v \in \Fcal_{u_i(t_i)}}  \left. \Bigl( \kappa_{\omega_1}(t) \star  \rbr{A_{u_i v}(t)\,\, dN_{vs_i}(t)} \Bigr) \right\vert_{t = t_i}} \\
  & -  \sum_{u,s \in [m]} \II[u = s] \, \eta_{u} \int_0^T \, d t + \II[u \neq s]\,   \beta_{s} \sum\nolimits_{v \in \Fcal_u(t)} \int_0^T   \kappa_{\omega_1}(t) \star  \rbr{A_{uv}(t)\,\, dN_{vs}(t)} \, d t \\
  & + \sum_{e_i^l \in \Acal} \log\rbr{
    \mu_{u_i} + \alpha_{u_i}\sum_{v \in \Fcal_{u_i}(t_i)}  \,\left. \bigl(  \kappa_{\omega_2}(t)\star dN_{vs}(t) \bigr) \right\vert_{t = t_i} }  \\
    & -\sum_{u,s\in[m]} \mu_u  \int_0^{T}  (1-A_{us}(t))  \, d t  + \alpha_u \int_0^{T} (1-A_{us}(t)) \bigl( \sum_{v \in \Fcal_{u}(t)} \kappa_{\omega_2}(t)\star dN_{vs}(t) \bigr) \, d t
    \end{split}
\end{equation}
If we stack all parameters in a vector $\xb = (\cbr{\mu_u}, \cbr{\alpha_u}, \cbr{\eta_u},\cbr{\beta_s})$, one can easily notice that the log-likelihood $\Lfra$ can be written as $\sum_{j} \log(\ab_j^{\top} \xb) - \sum_{k} \bb_k^{\top} \xb$,
which is clearly a concave function with respect to $\xb$~\cite{BoyVan04}, and thus $-\Lfra$ is convex. Moreover, the constraints are linear inequalities and thus the domain is a convex set. This completes the proof for convexity of the optimization problem.
\end{proof}

%
%

It's notable that the optimization problem decomposes in $m$ independent problems, one per node $u$, and can be readily parallelized.  
%


\begin{algorithm}[t]
\caption{MM-type parameter learning for \coevolve}
\label{alg:em-type} 
\begin{algorithmic}
\State \textbf{Input:} Set of retweet events $\Ecal=\{e_i^r\}$ and link creation events $\Acal = \{e_i^l\}$ observed in time window $[ 0, T)$
\State \textbf{Output:} Learned parameters $\cbr{\mu_u}, \cbr{\alpha_u}, \cbr{\eta_u},\cbr{\beta_s}$
\State {\bf Initialization:}
\For{$u \gets 1$ to $m$}
\State Initialize $\mu_u$ and $\alpha_u$ randomly
\EndFor
\For{$u \gets 1$ to $m$}
\State $\eta_u = \frac{ \sum_{e_i^r \in \Ecal} \II[u=u_i = s_i]}{T}$
\EndFor
\For{$s \gets 1$ to $m$}
\State $\beta_s =   \frac{ \sum_{e_i^r \in \Ecal} \II[ s=s_i \neq u_i]}
{ \sum_{u \in [m]} \II[u \neq s] \, \sum\nolimits_{v \in \Fcal_u(t)} \int_0^T   \kappa_{\omega_1}(t) \star  \rbr{A_{uv}(t)\,\, dN_{vs}(t)} \, d t }$
\EndFor
\While{ not converged}
\For{$i \gets 1$ to $n_l$}
\State $ \nu_{i1} = \frac{\mu_{u_i}}{\mu_{u_i} +  \alpha_{u_i} \sum_{v \in \Fcal_{u_i}(t_i)} \left. \bigl( \kappa_{\omega_2}(t)\star dN_{vs}(t) \bigr) \right\vert_{t = t_i}} $
\State $ \nu_{i2} = \frac{\alpha_{u_i}  \sum_{v \in \Fcal_{u_i}(t_i)} \left. \bigl( \kappa_{\omega_2}(t)\star dN_{vs}(t) \bigr) \right\vert_{t = t_i} }{\mu_{u_i} + \alpha_{u_i} \sum_{v \in \Fcal_{u_i}(t_i)} \left. \bigl( \kappa_{\omega_2}(t)\star dN_{vs}(t) \bigr) \right\vert_{t = t_i} }$
\EndFor
\For{$u \gets 1$ to $m$}
\State $ \mu_u = \frac{\sum_{e_i^l \in \Acal} \II[ u = u_i] \, \nu_{i1}} { \sum_{s\in[m]} \int_0^{T}  (1-A_{us}(t)) \, d t}$
\State $\alpha_u = \frac{\sum_{e_i^l \in \Acal} \II[u = u_i] \nu_{i2}} { \sum_{s\in[m]} \int_0^{T} (1-A_{us}(t)) (\kappa_{\omega_2}(t)\star dN_{us}(t)) \, d t} $
\EndFor
\EndWhile
\end{algorithmic}
\end{algorithm}

\subsection{Efficient Minorization-Maximization Algorithm}


Since the optimization problem is jointly convex with respect to all the parameters, one can simply take any convex optimization method to learn the parameters. However, these methods usually require hyper parameters like step size or initialization, which may 
significantly influence the convergence. Instead, the structure of our problem allows us to develop an efficient algorithm inspired by previous work~\cite{ZhoZhaSon13, ZhoZhaSon13b}, which leverages Minorization Maximization (MM)~\cite{HunLan04} and
is parameter free and insensitive to initialization.
%


Our algorithm utilizes Jensen'{}s inequality to provide a lower bound for the second log-sum term in the log-likelihood given by Equation~\eqref{eq:data_loglikelihood}. More specifically, consider a set of arbitrary auxiliary variable $\nu_{ij}$, where $1 \le i \le n_l$, $j=1,2$ and $n_l$ is the 
number of link events, \ie, $n_l = | \Acal |$. Further, assume these variables satisfy
\begin{equation}
\label{eq:aux-constraints}
\begin{split}
&  \quad \forall \,\, 1 \le i \le n_l : \quad \nu_{i1}, \nu_{i2} \ge 0, \quad  \nu_{i1} + \nu_{i2} = 1\\
\end{split}
\end{equation}
Then, we can lower bound the logarithm in Equation~\eqref{eq:data_loglikelihood_expan} using Jensen'{}s inequality as 
follows:
\begin{equation}
\begin{split}& \log  \rbr{
    \mu_{u_i} + \alpha_{u_i}  \sum_{v \in \Fcal_{u_i}(t_i)} \left. \bigl( \kappa_{\omega_2}(t)\star dN_{vs}(t) \bigr) \right\vert_{t = t_i} } \\ 
     & = \log  \rbr{
    \nu_{i1} \frac{\mu_{u_i}}{\nu_{i1}} + \nu_{i2} \frac{\alpha_{u_i}}{\nu_{i2}}  \sum_{v \in \Fcal_{u_i}(t_i)} \left. \bigl( \kappa_{\omega_2}(t)\star dN_{vs}(t) \bigr) \right\vert_{t = t_i} } \\ 
     & \ge  \nu_{i1} \log \rbr{ \frac{\mu_{u_i}}{\nu_{i1}}} + \nu_{i2} \log \rbr{ \frac{\alpha_{u_i}}{\nu_{i2}}  \sum_{v \in \Fcal_{u_i}(t_i)} \left. \bigl( \kappa_{\omega_2}(t)\star dN_{vs}(t) \bigr) \right\vert_{t = t_i} }  \\
     & \ge  \nu_{i1} \log(\mu_{u_i}) + \nu_{i2} \log(\alpha_{u_i}) + \nu_{i2} \log \rbr{  \sum_{v \in \Fcal_{u_i}(t_i)} \left. \bigl( \kappa_{\omega_2}(t)\star dN_{vs}(t) \bigr) \right\vert_{t = t_i}} \\
    &  \quad - \nu_{i1} \log(\nu_{i1}) - \nu_{i2} \log(\nu_{i2}).
    \end{split}
\end{equation}
%
%
Now, we can lower bound the log-likelihood given by Equation~\eqref{eq:data_loglikelihood_expan} as:
\begin{equation}
\label{eq:data_loglikelihood_expan2} 
\begin{split}
\Lfra  \ge
\Lfra' =  
&  \sum_{e_i^r \in \Ecal} \II[u_i = s_i]\, \log \rbr{\eta_{u_i}}
+ \sum_{e_i^r \in \Ecal} \II[u_i \neq s_i]\,    \log(\beta_{s_i})  \\
& + \sum_{e_i^r \in \Ecal} \II[u_i \neq s_i]\,  \log \bigl( \sum\nolimits_{v \in \Fcal_{u_i(t_i)}}  \left. \Bigl( \kappa_{\omega_1}(t) \star  \rbr{A_{u_i v}(t)\,\, dN_{vs_i}(t)} \Bigr) \right\vert_{t = t_i} \bigr) \\
  & -  \sum_{u,s \in [m]} \eta_{u} T +  \beta_{s} \sum\nolimits_{v \in \Fcal_u(t)} \int_0^T   \kappa_{\omega_1}(t) \star  \rbr{A_{uv}(t)\,\, dN_{vs}(t)} \, d t \\
  & + \sum_{e_i^l \in \Acal} 
\nu_{i1} \log(\mu_{u_i}) + \nu_{i2} \log(\alpha_{u_i}) + \nu_{i2} \log \bigl(  \sum_{v \in \Fcal_{u_i}(t_i)} \left. \bigl( \kappa_{\omega_2}(t)\star dN_{vs}(t) \bigr) \right\vert_{t = t_i} \bigr) \\
    &  -  \sum_{e_i^l \in \Acal}  \nu_{i1} \log(\nu_{i1}) + \nu_{i2} \log(\nu_{i2})  \\
    & -\sum_{u,s\in[m]} \mu_u  \int_0^{T}  (1-A_{us}(t))  \, d t  + \alpha_u \int_0^{T} (1-A_{us}(t)) (\kappa_{\omega_2}(t)\star dN_{us}(t)) \, d t
    \end{split}
\end{equation}

By taking the gradient of the lower-bound with respect to the parameters, we can find the closed form updates to optimize the lower-bound:
\begin{align}
\label{eq:param-update-1}
& \eta_u = \frac{ \sum_{e_i^r \in \Ecal} \II[u = u_i = s_i]\,}{T} \\
\label{eq:param-update-2}
& \beta_s =   \frac{ \sum_{e_i^r \in \Ecal} \II[s=s_i \neq u_i] }
{ \sum_{u \in [m]} \II[u \neq s] \, \sum\nolimits_{v \in \Fcal_u(t)} \int_0^T   \kappa_{\omega_1}(t) \star  \rbr{A_{uv}(t)\,\, dN_{vs}(t)} \, d t } \\
\label{eq:param-update-3}
& \mu_u = \frac{\sum_{e_i^l \in \Acal} \II[u = u_i] \, \nu_{i1}} { \sum_{s\in[m]} \int_0^{T}  (1-A_{us}(t)) \, d t} \\
\label{eq:param-update-4}
& \alpha_u = \frac{\sum_{e_i^l \in \Acal} \II[u = u_i] \, \nu_{i2}} { \sum_{s\in[m]} \int_0^{T} (1-A_{us}(t)) (\kappa_{\omega_2}(t)\star dN_{us}(t)) \, d t}.
\end{align}

Finally, although the lower bound is valid for every choice of $\nu_{ij}$ satisfying Equation~\eqref{eq:aux-constraints}, by maximizing the lower bound with respect to the auxiliary variables 
we can make sure that the lower bound is tight:
\begin{equation}
	\label{eq:optimization}
	\begin{array}{ll}
		\mbox{maximize}_{\cbr{\nu_{ij}}} &  \Lcal'(\cbr{\mu_u}, \cbr{\alpha_u}, \cbr{\eta_u},\cbr{\beta_s}, \cbr{\nu_{ij}})  \\
		\mbox{subject to} & 
		\nu_{i1} + \nu_{i2} = 1 \hspace{1cm}    \forall \,  i: 1 \le i \le n_l \\
                  & \nu_{i0}, \nu_{i1}  \ge 0 \hspace{1cm}    \forall \,  i: 1 \le i \le n_l.
	\end{array}
\end{equation}
Fortunately, the above constrained optimization problem can be solved easily via Lagrange multipliers, which leads to closed form updates:
\begin{align}
\label{eq:aux-update-1}
& \nu_{i1} = \frac{\mu_{u_i}}{\mu_{u_i} +  \alpha_{u_i} \sum_{v \in \Fcal_{u_i}(t_i)} \left. \bigl( \kappa_{\omega_2}(t)\star dN_{vs}(t) \bigr) \right\vert_{t = t_i}}\\
\label{eq:aux-update-2}
& \nu_{i2} = \frac{\alpha_{u_i}  \sum_{v \in \Fcal_{u_i}(t_i)} \left. \bigl( \kappa_{\omega_2}(t)\star dN_{vs}(t) \bigr) \right\vert_{t = t_i} }{\mu_{u_i} + \alpha_{u_i} \sum_{v \in \Fcal_{u_i}(t_i)} \left. \bigl( \kappa_{\omega_2}(t)\star dN_{vs}(t) \bigr) \right\vert_{t = t_i} }.
\end{align}

Algorithm~\ref{alg:em-type} summarizes the learning procedure. It is guaranteed to converge to a global optimum~\cite{HunLan04, ZhoZhaSon13}

%



\section{Properties of Simulated Co-evolution, Networks and Cascades}
\label{sec:properties}
In this section, we perform an empirical investigation of the properties of the networks and information cascades generated by our model. In particular, we show that our model can generate co-evolutionary retweet and link dynamics and a wide spectrum of static and temporal network patterns and information cascades.

\subsection{Simulation Settings}
Throughout this section, if not said otherwise, we simulate the evolution of a 8,000-node network as well as the pro\-pa\-ga\-tion of information over the network by sampling from our model using 
Algorithm~\ref{alg:simulation}. 
%
We set the exogenous intensities of the link and diffusion events to $\mu_u = \mu = 4 \times 10^{-6}$ and $\eta_u = \eta = 1.5$ respectively, and the triggering kernel parameter to $\omega_1=\omega_2 = 1$. The 
parameter $\mu$ determines the independent growth of the network -- roughly speaking, the expected number of links each user establishes spontaneously before time $T$ is $\mu T$. Whenever we investigate a static property, 
we choose the same sparsity level of $0.001$.  

%

%

\subsection{Retweet and Link Coevolution} Figures~\ref{fig:syntethic_coevolution}(a,b) visualize the retweet and link events, aggregated across different sources, and the corresponding intensities for one node and one realization, picked 
at random. Here, it is already apparent that retweets and link creations are clustered in time and often follow each other. 
Further, Figure~\ref{fig:syntethic_coevolution}(c) shows the cross-covariance of the retweet and link creation intensity, computed across multiple realizations, for the same node, \ie, if $f(t)$ and $g(t)$ are two intensities, the cross-covariance is a function $h(\tau) = \int f(t+\tau)g(t)\,dt$. 
It can be seen that the cross-covariance has its peak around 0, $\ie$, retweets and link creations are highly correlated and co-evolve over time.
For ease of exposition, we illustrated co-evolution using one node, however, we found consistent results across nodes.

\begin{figure}[t!]
        \centering
        \begin{tabular}{c c c} 
          \includegraphics[width=0.24\textwidth]{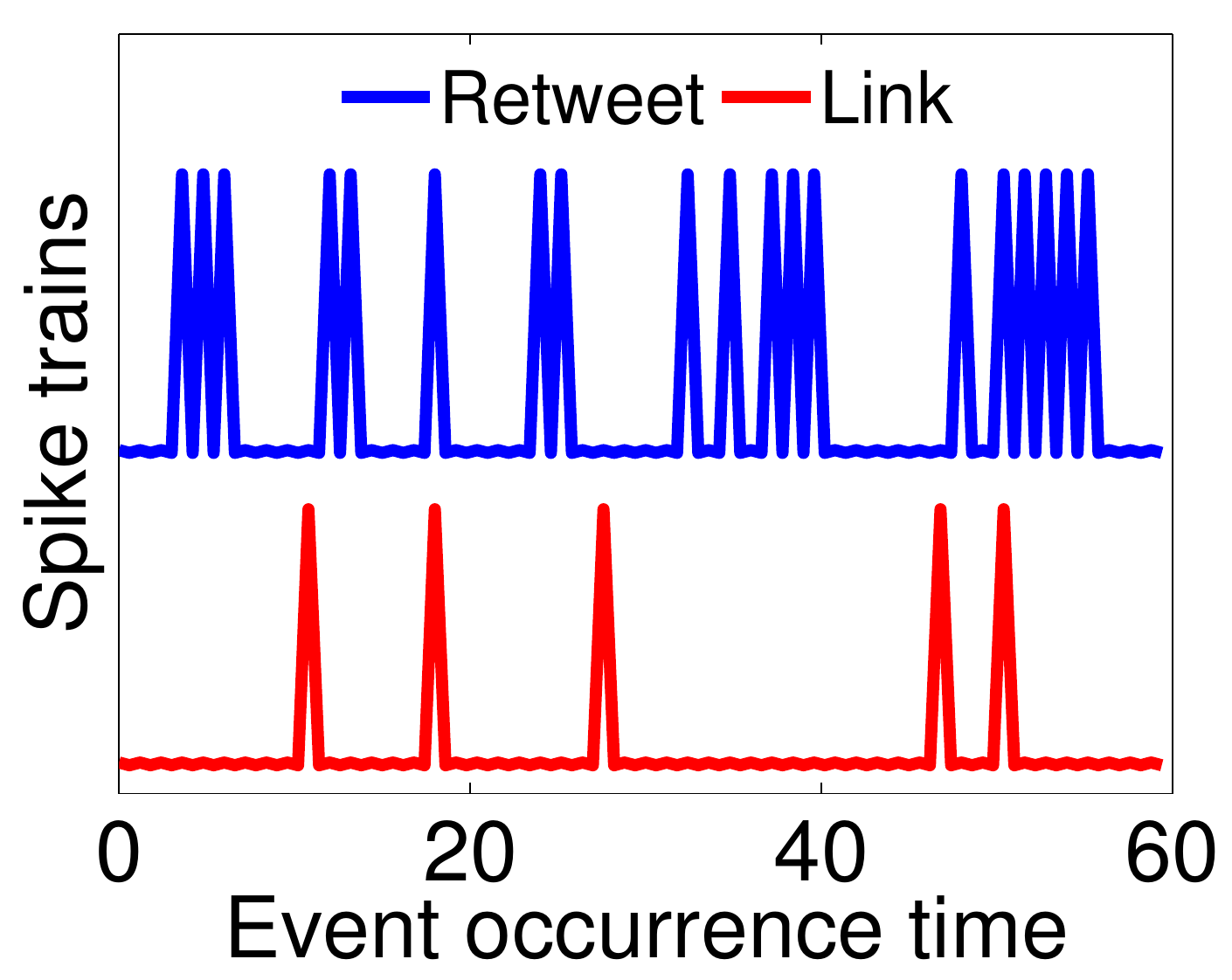} &
          \includegraphics[width=0.25\textwidth]{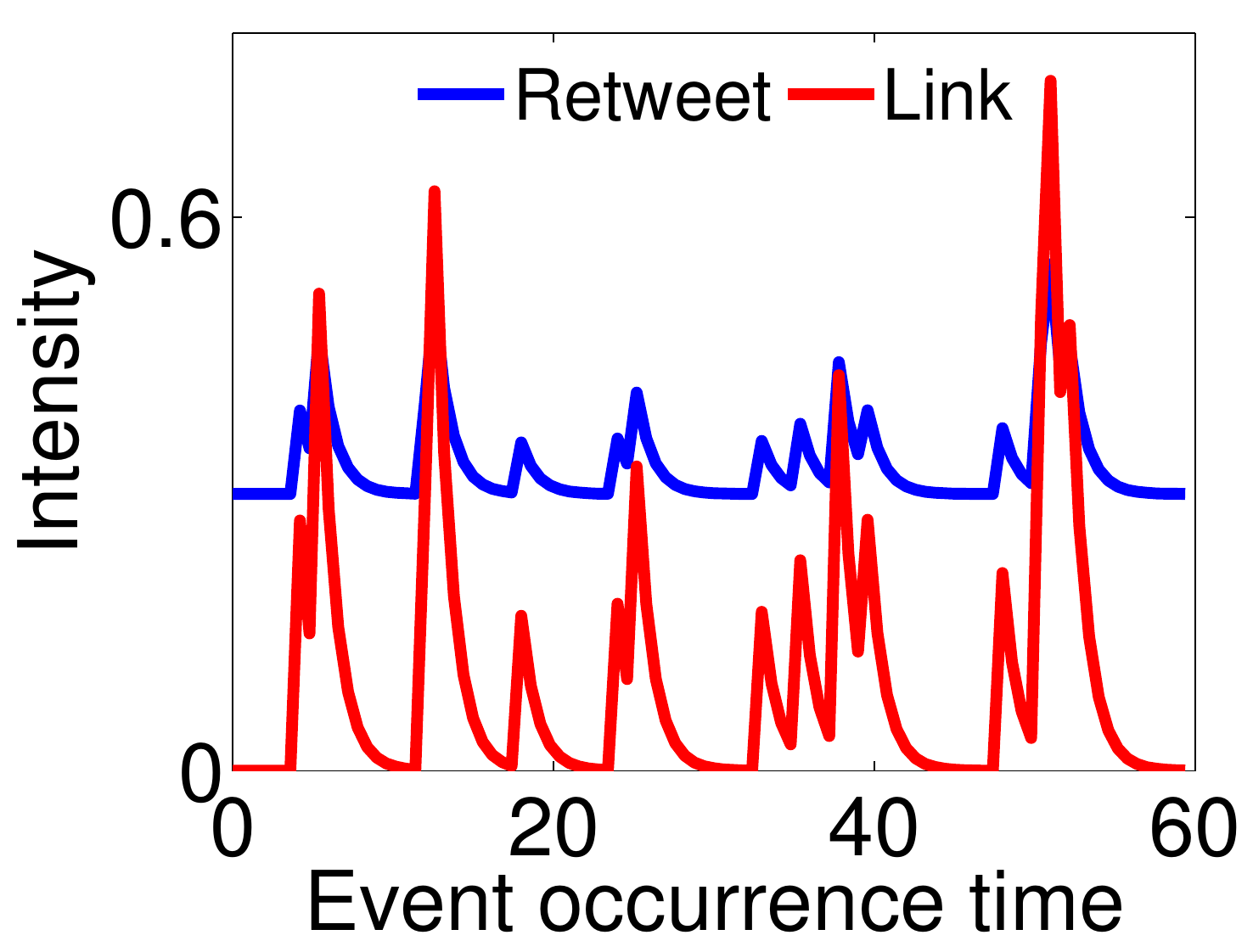} & 
          \includegraphics[width=0.24\textwidth]{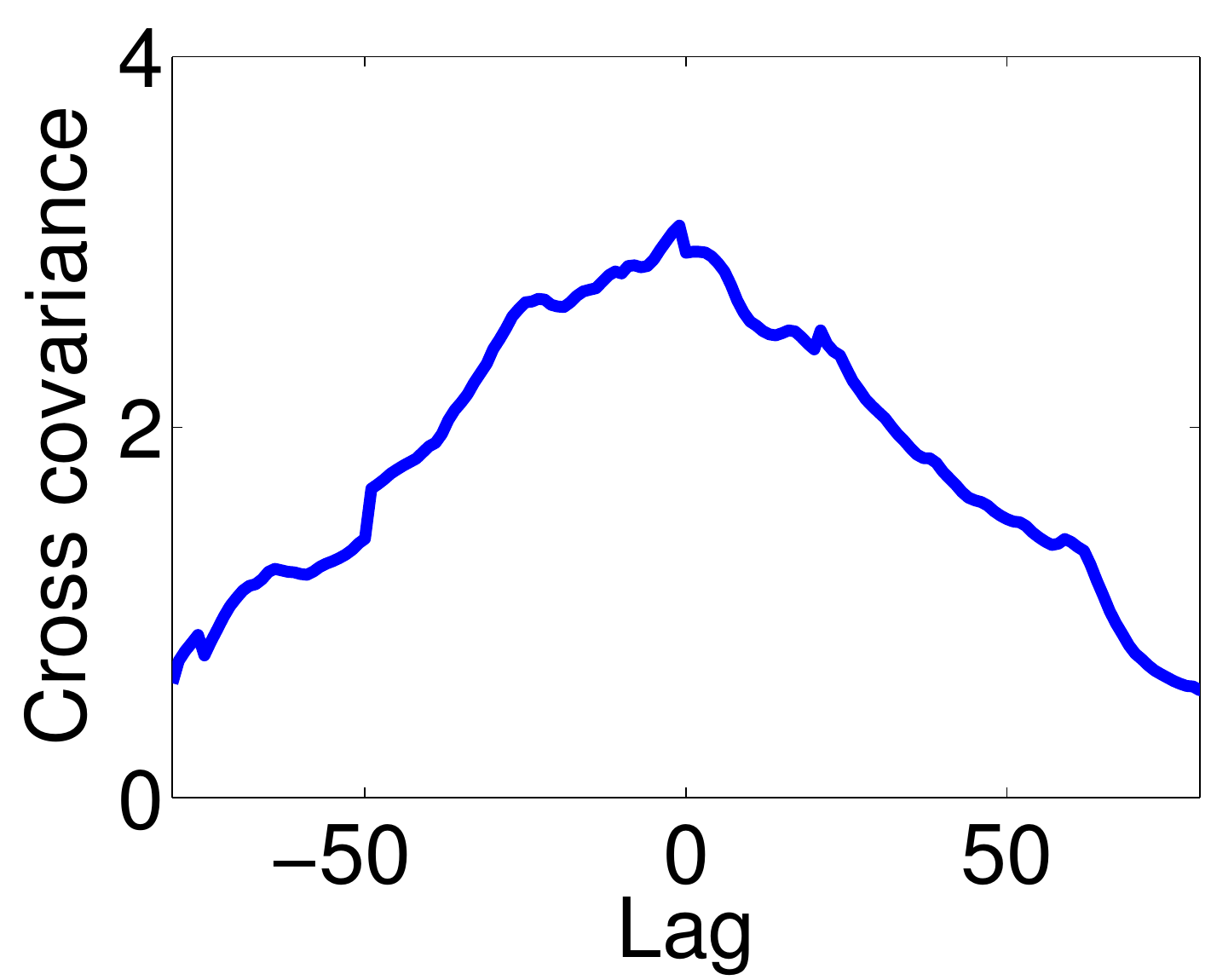} \\
          (a) & (b) & (c)
        \end{tabular}
        \caption{Coevolutionary dynamics for synthetic data. a) Spike trains of link and retweet events. b) Link and retweet intensities. c) Cross covariance of link and retweet intensities. }
        \label{fig:syntethic_coevolution}
\end{figure}

\subsection{Degree Distribution} Empirical studies have shown that the degree distribution of online social networks and microblogging sites follow a power law~\cite{ChaZhaFal2004,KwaLeeParMoo10}, and argued that it
is a consequence of the rich get richer phenomena. 
The degree distribution of a network is a power law if the expected number of nodes $m_d$ with degree $d$ is given by $m_d \propto d^{-\gamma}$, where $\gamma > 0$.  
Intuitively, the higher the values of the parameters $\alpha$ and $\beta$, the closer the resulting degree distribution follows a power-law. This is because the network grows more locally. Interestingly, the lower their values, the closer the distribution to an Erdos-Renyi random graph~\cite{ErdRen60}, because, the edges are added almost uniformly and independently without influence from the local structure. 
Figure~\ref{fig:degree-beta-alpha-varying} confirms this intuition by showing the degree distribution for different values of $\beta$ and $\alpha$. 
\begin{figure}[t]
        \centering
        \begin{tabular}{c c c c} 
         \includegraphics[width=0.20\textwidth]{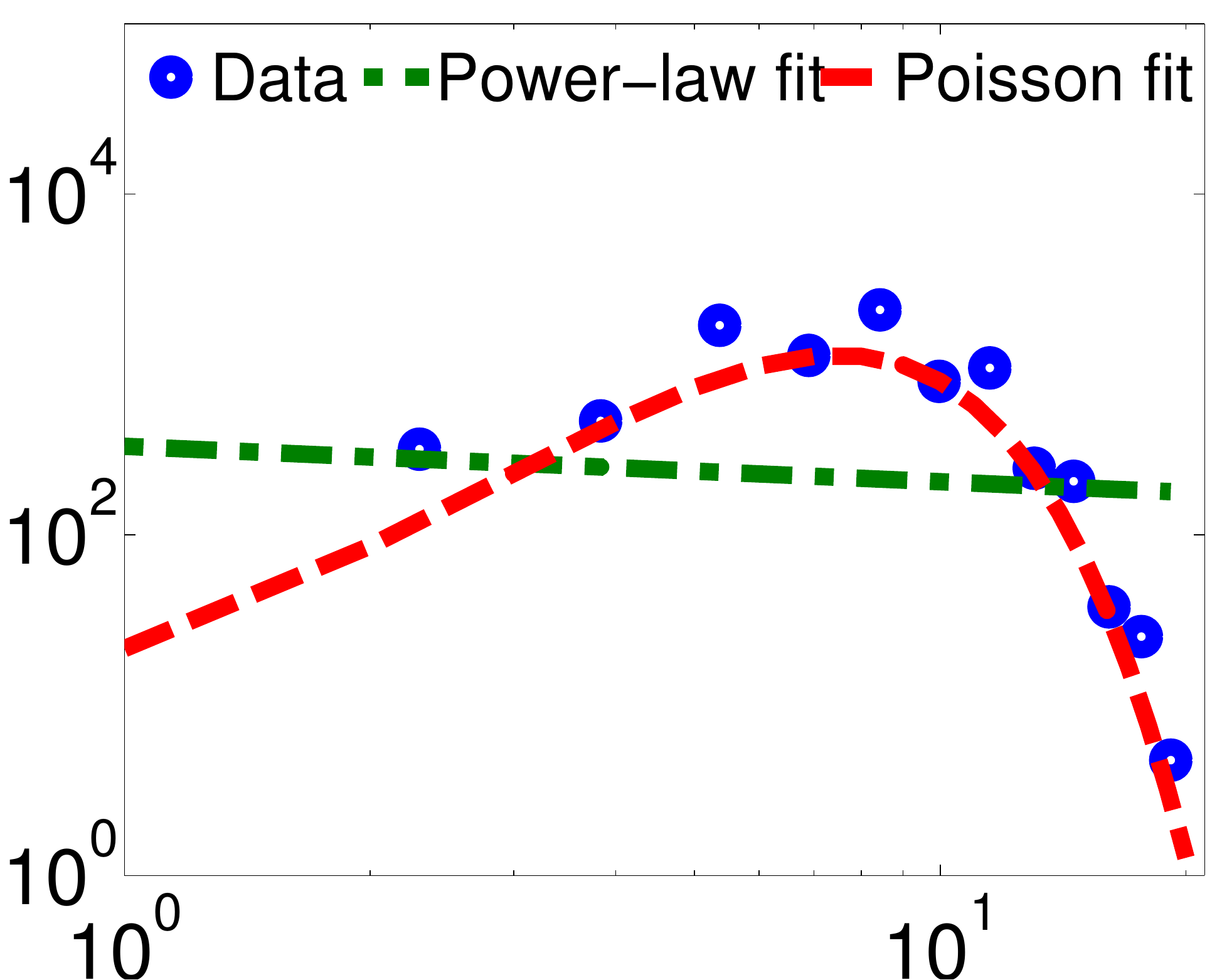} &
          \includegraphics[width=0.20\textwidth]{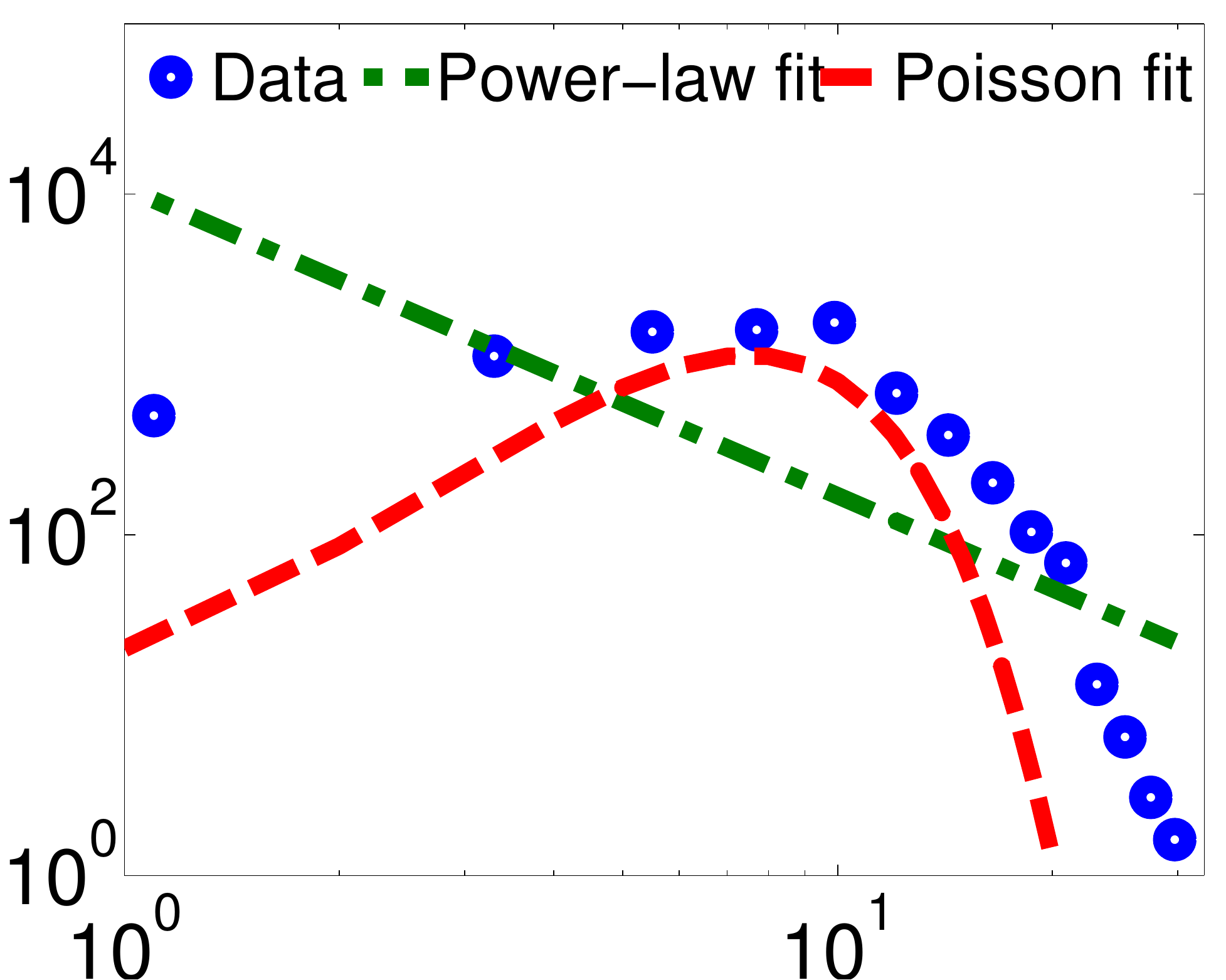} &  
          \includegraphics[width=0.20\textwidth]{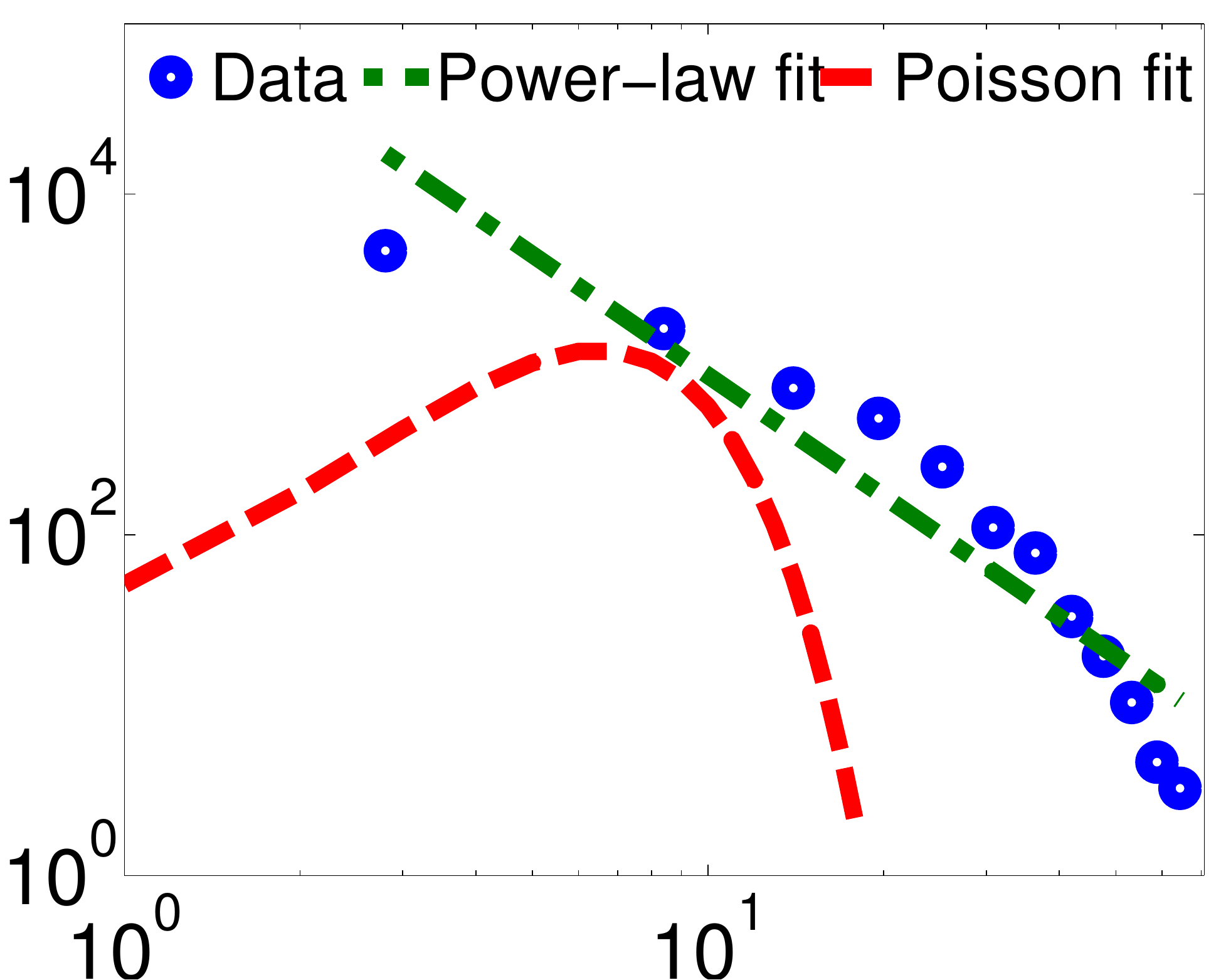} &
          \includegraphics[width=0.20\textwidth]{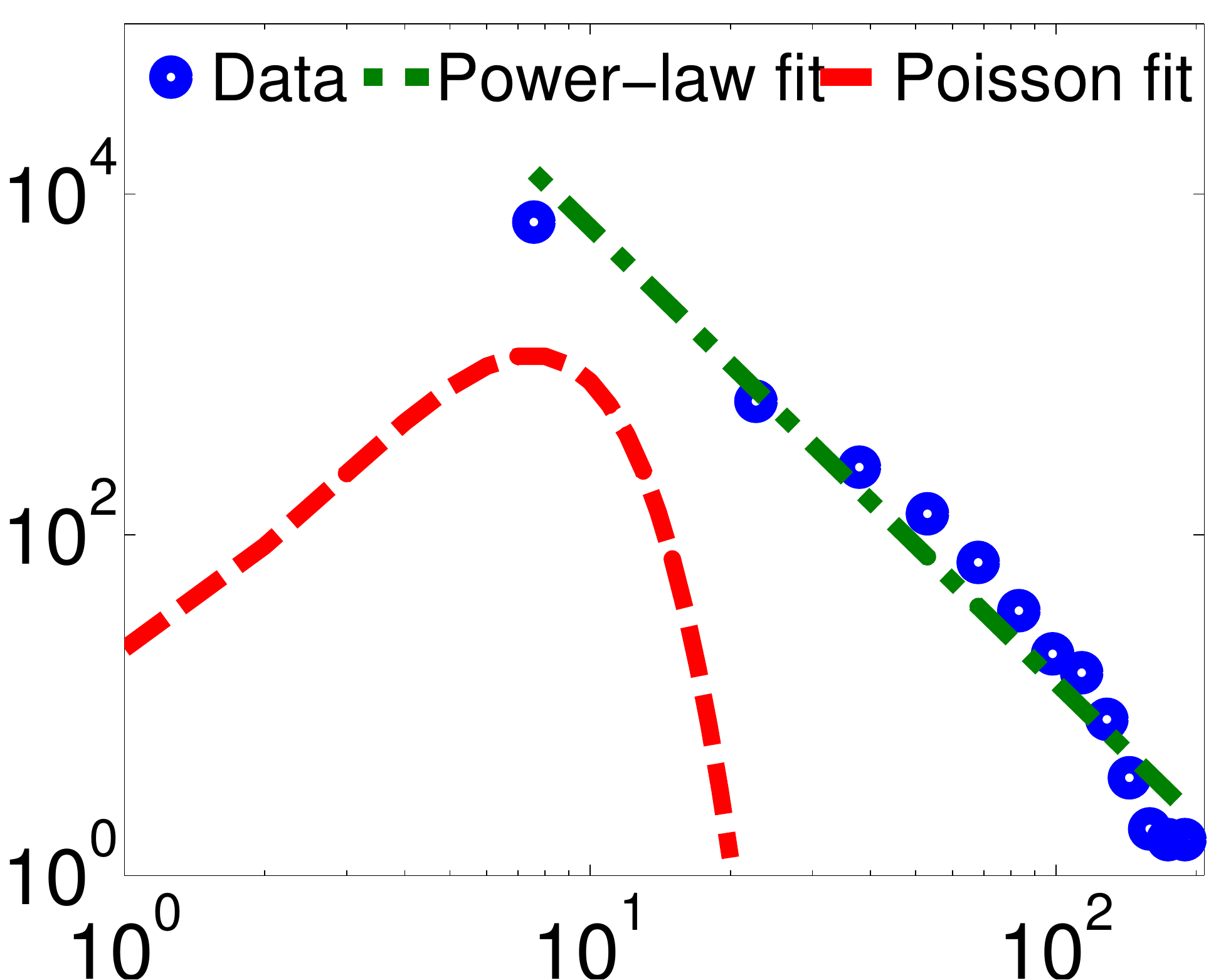} \\
          (a) $\beta = 0$ & (b) $\beta = 0.001$ & (c) $\beta = 0.1$ & (d) $\beta = 0.8$  \\
          \includegraphics[width=0.22\textwidth]{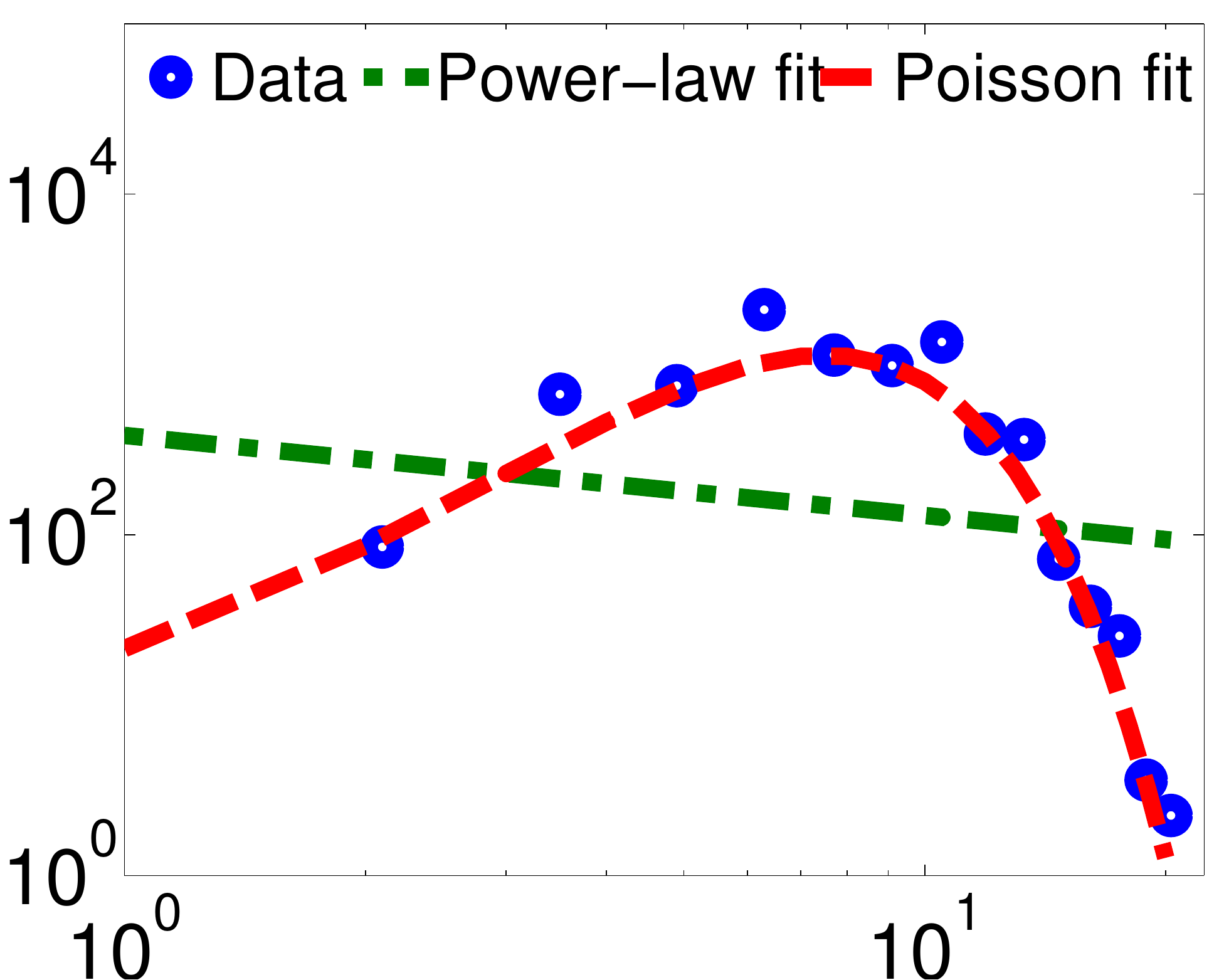} &
          \includegraphics[width=0.22\textwidth]{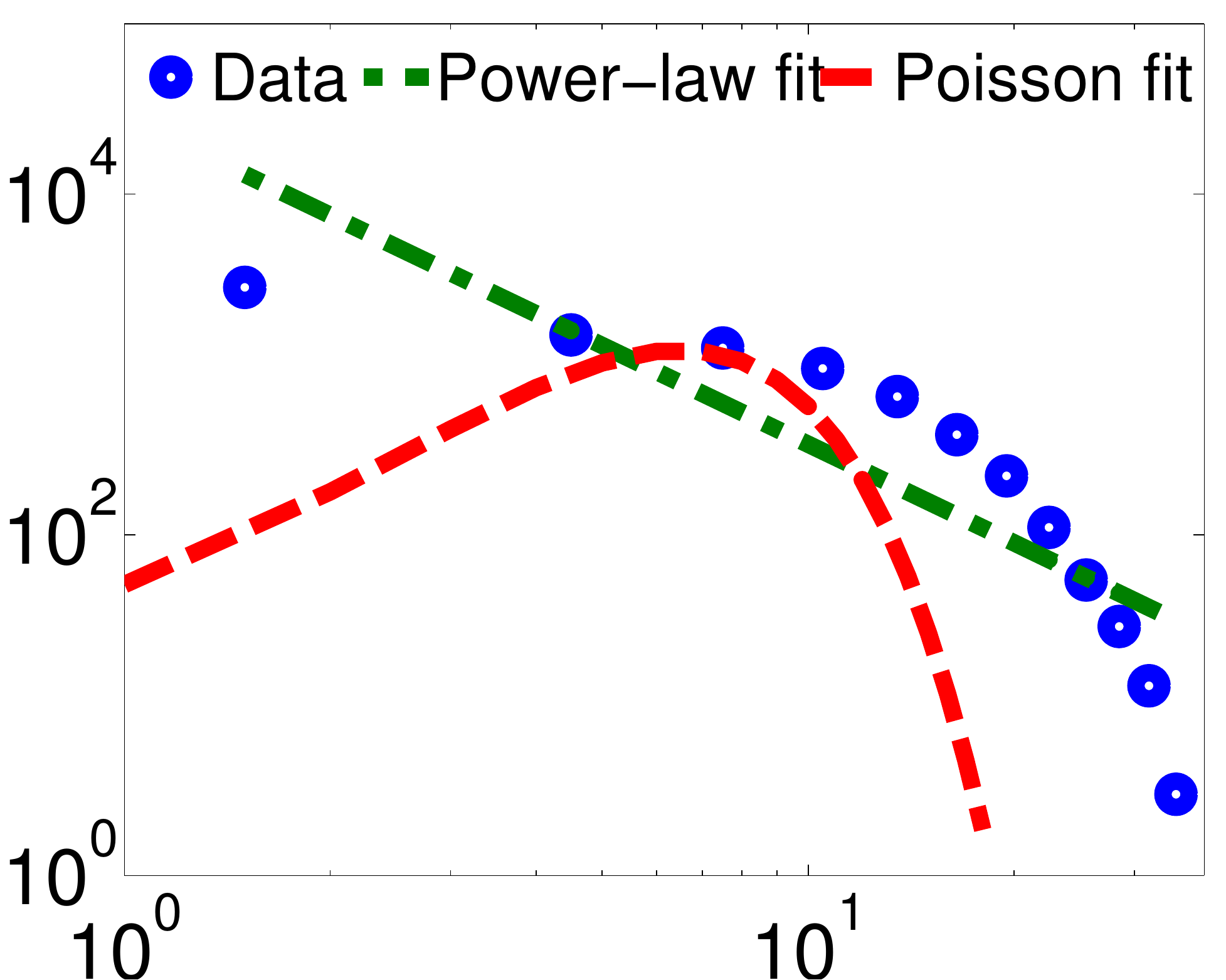} &
          \includegraphics[width=0.22\textwidth]{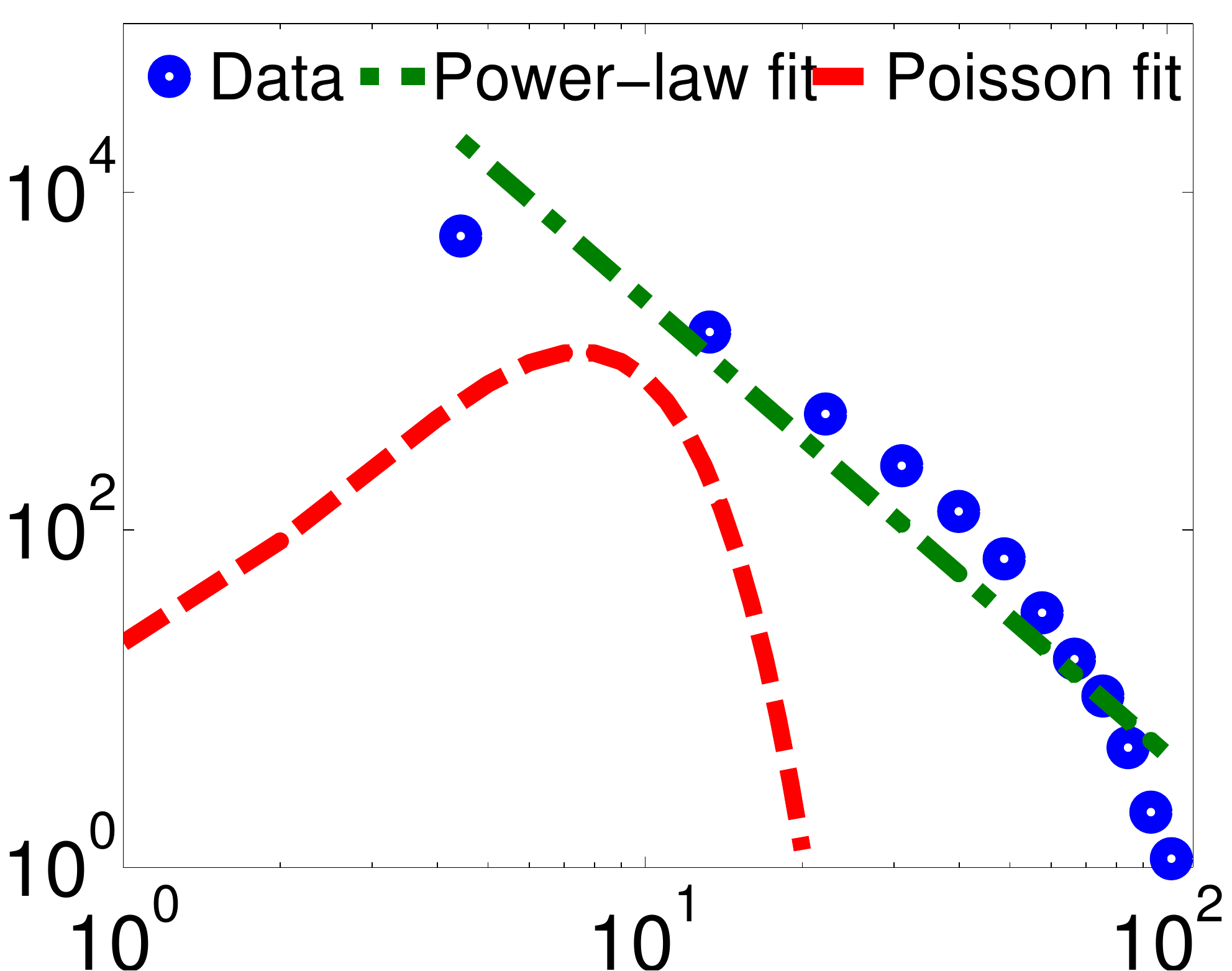}&
          \includegraphics[width=0.22\textwidth]{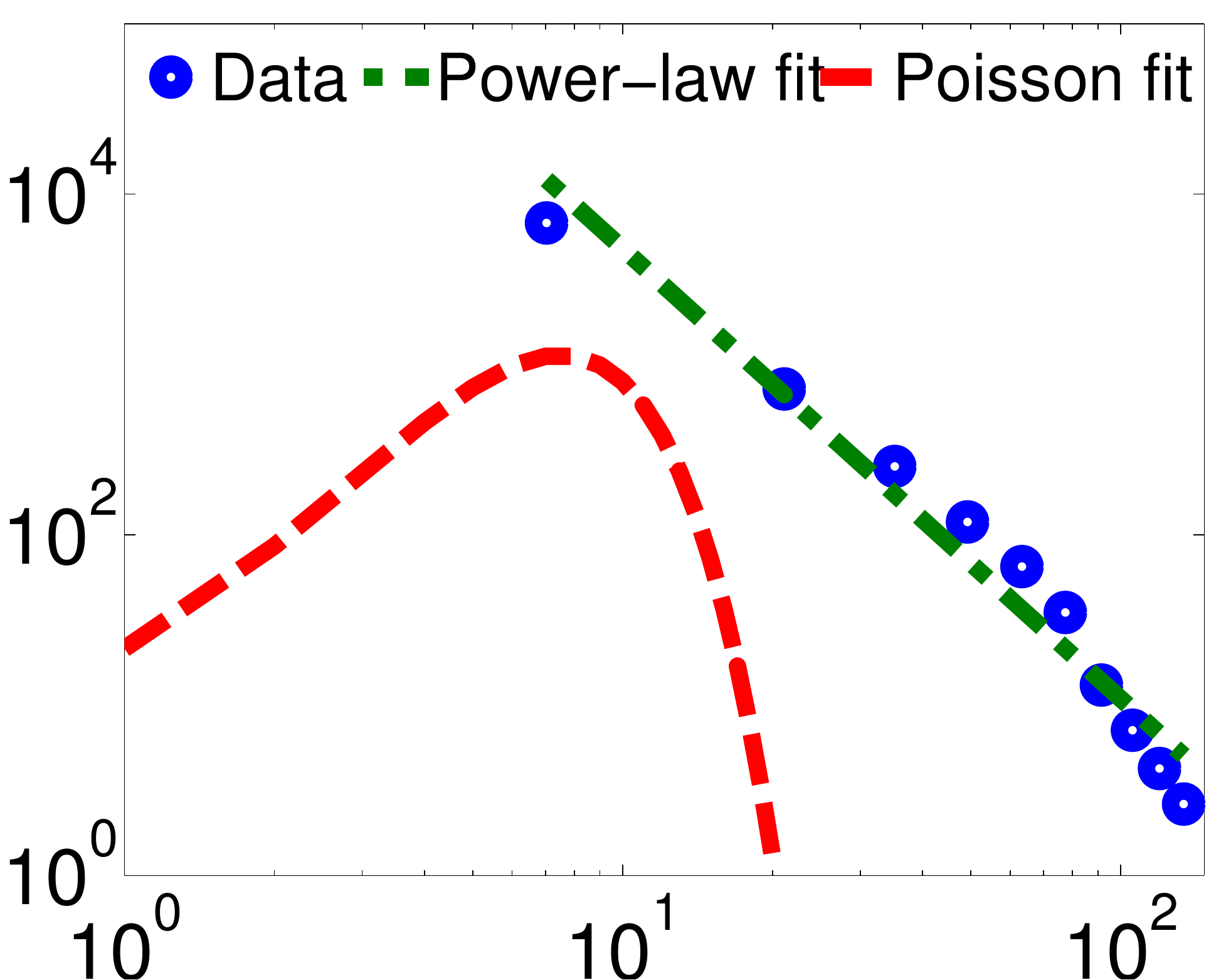}\\     
          (a) $\alpha = 0$ &  (b) $\alpha = 0.05$ & (c) $\alpha = 0.1$ & (d) $\alpha = 0.2$\\
        \end{tabular}
        \caption{Degree distributions when network sparsity level reaches 0.001 for different $\beta$ ($\alpha$) values and fixed $\alpha=0.1$ ($\beta=0.1$).}
        \label{fig:degree-beta-alpha-varying}
\end{figure}

\subsection{Small (shrinking) Diameter}
There is empirical evidence that the diameter of online social networks and microblogging sites exhibit relatively small diameter and shrinks (or flattens) as the network grows~\cite{BacBolRosUgaVig12,ChaZhaFal2004,LesKleFal05}.
Figures~\ref{fig:dens-shrinking-beta-varying-clust-coef}(a-b) show the diameter on the largest connected component (LCC) against the sparsity of the network over time for different values of $\alpha$ and 
$\beta$.
Although at the beginning, there is a short increase in the diameter due to the merge of small connected components, the diameter decreases as the network evolves. 
Moreover, larger values of $\alpha$ or $\beta$ lead to higher levels of local growth in the network and, as a consequence, slower shrinkage.
%
%
Here, nodes \emph{arrive} to the network when they follow (or are followed by) a node in the largest connected component.

\subsection{Clustering Coefficient}
Triadic closure~\cite{Gra73,LesBacKumTom08,RomKle10} has been often presented as a plausible link creation me\-cha\-nism. However, different social networks and microblogging sites present diffe\-rent
levels of triadic closure~\cite{UgaBacKle13}.
Importantly, our method is able to generate networks with diffe\-rent le\-vels of triadic closure, as shown by Figure~\ref{fig:dens-shrinking-beta-varying-clust-coef}(c-d), where we plot the clustering coefficient~\cite{WatStr98}, 
which is proportional to the frequency of triadic closure, for different values of $\alpha$ and $\beta$.
%
%
\begin{figure}[t]
  \centering
  \begin{tabular}{c c c c}
      \includegraphics[width=0.21\textwidth]{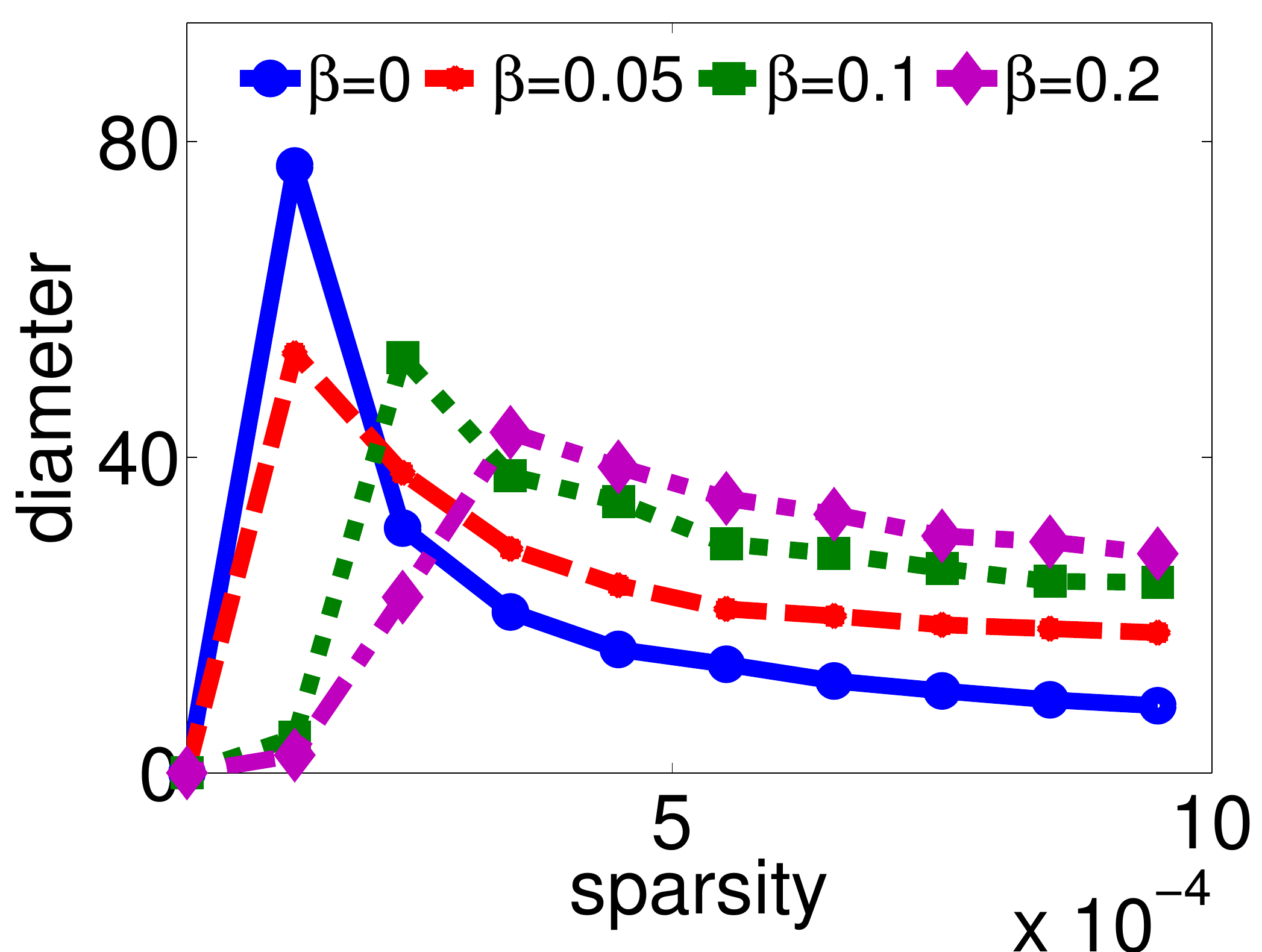}&
      \includegraphics[width=0.21\textwidth]{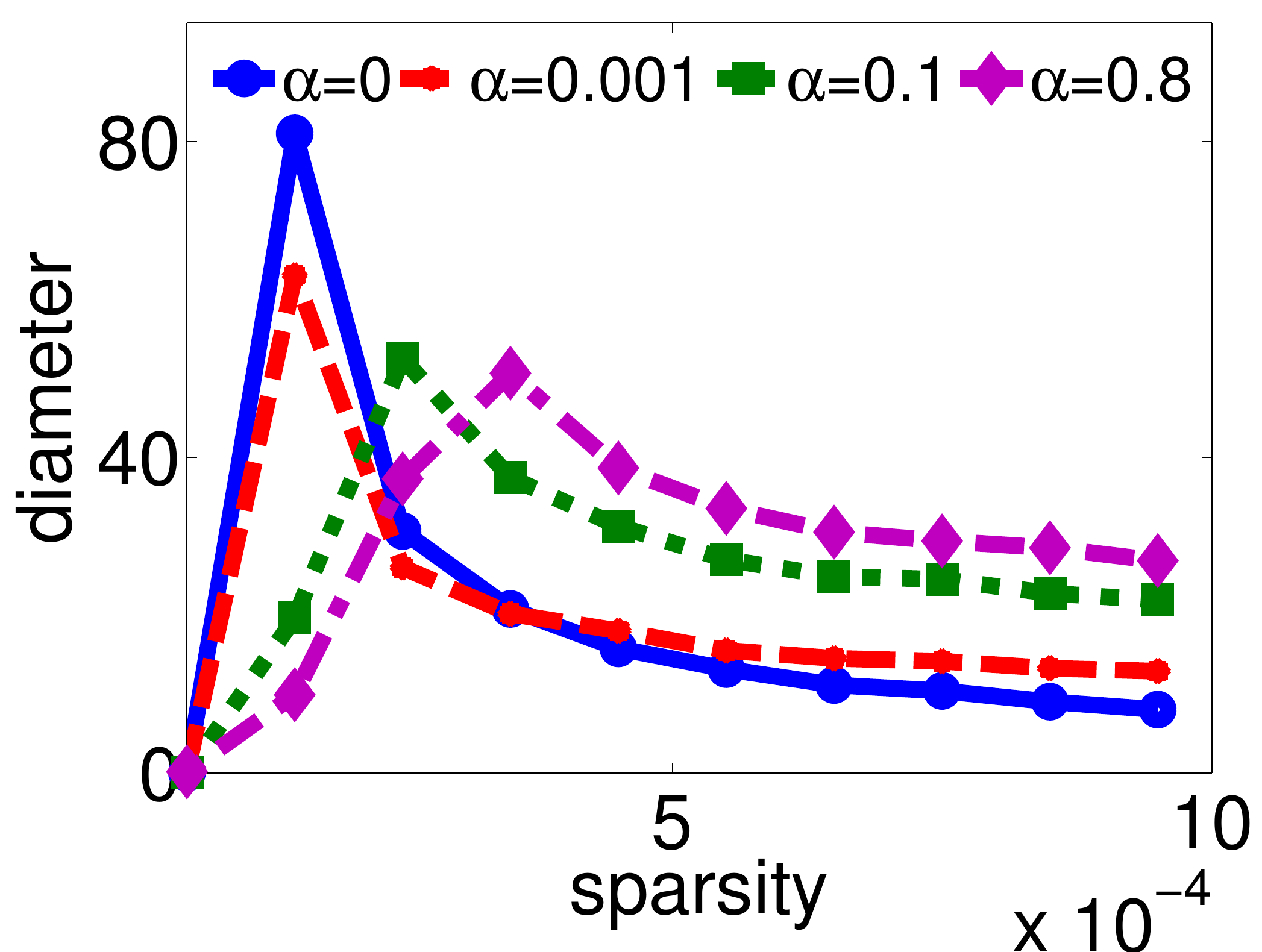} & 
      \includegraphics[width=0.21\textwidth]{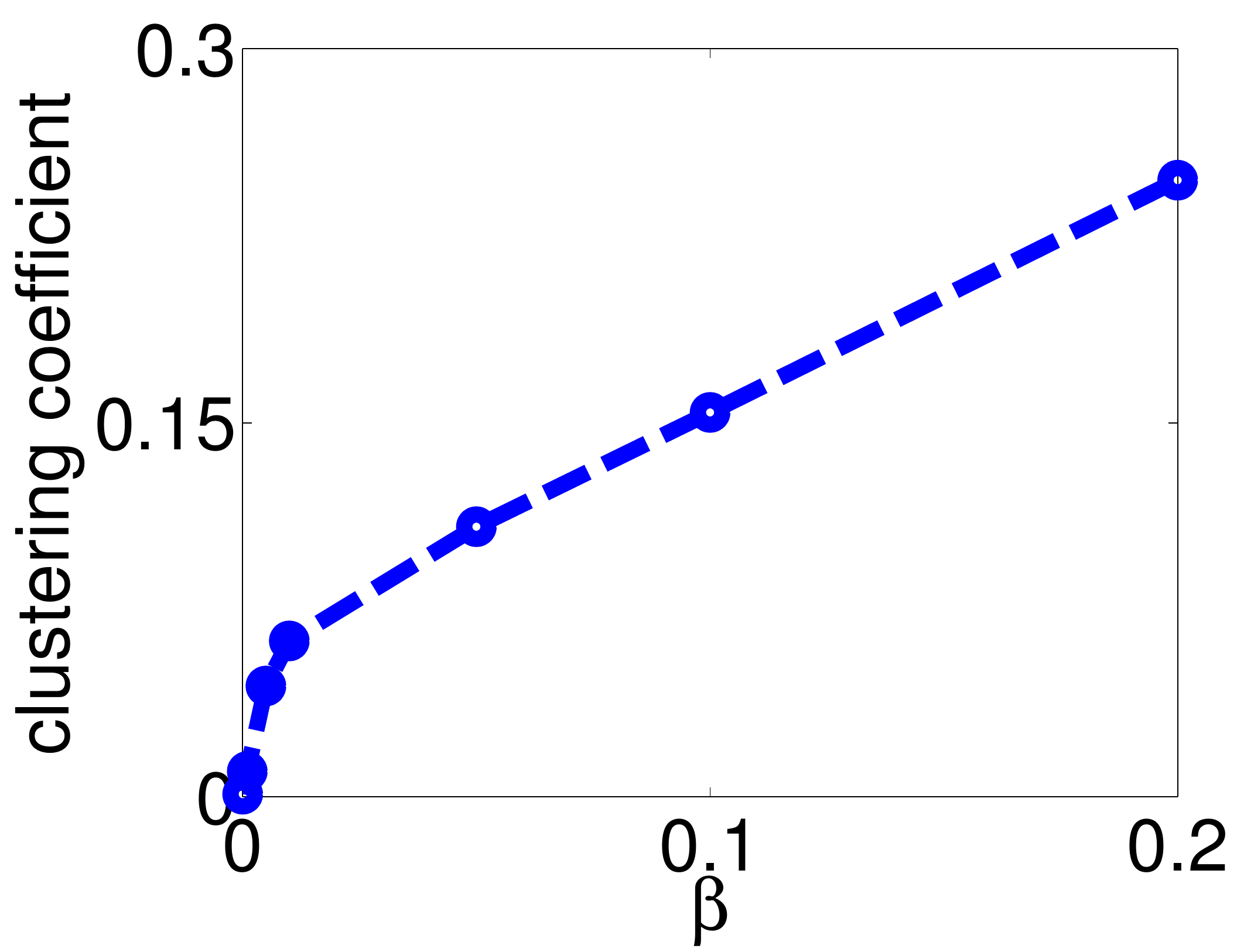} &
      \includegraphics[width=0.21\textwidth]{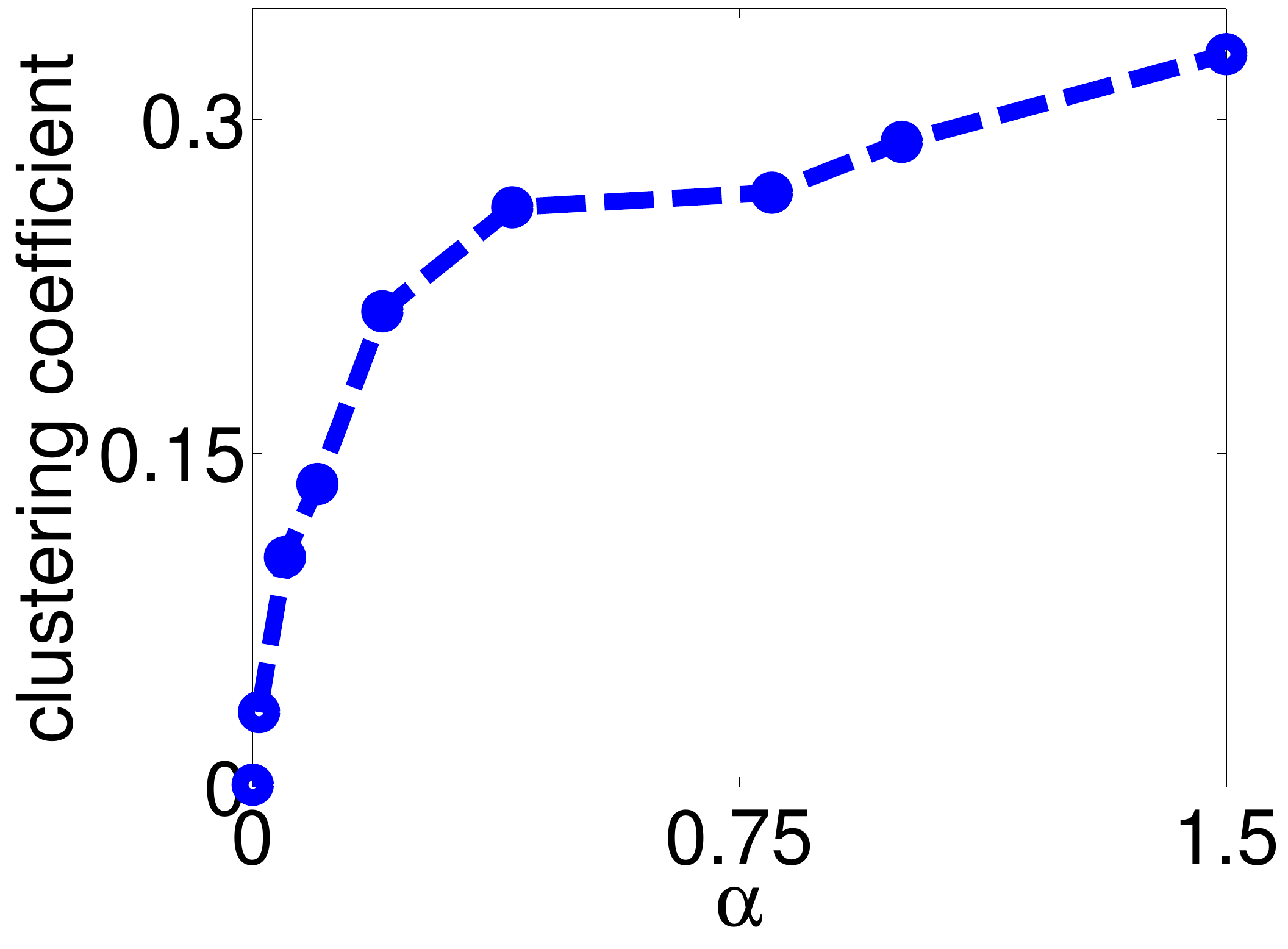}\\
      (a) Diameter, $\alpha=0.1$ & (b) Diameter, $\beta=0.1$ & (c) CC, $\alpha=0.1$ & (d) CC, $\beta=0.1$ \\
  \end{tabular}
  \caption{Diameter and clustering coefficient for network sparsity 0.001. Panels (a) and (b) show the diameter against sparsity over time for fixed $\alpha=0.1$, and for fixed $\beta = 0.1$ respectively. Panels (c) and (d) show the clustering coefficient (CC) against $\beta$ and $\alpha$, respectively.} 
  \label{fig:dens-shrinking-beta-varying-clust-coef}
\end{figure}

%
%
%
%

\subsection{Network Visualization}
Figure~\ref{fig:vis-beta} visualizes several snapshots of the largest connected component (LCC) of two 300-node networks for two particular realizations of our model, under two 
different values of $\beta$. In both cases, we used $\mu= 2 \times 10^{-4}$, $\alpha=1$, and $\eta=1.5$.
The top two rows correspond to $\beta=0$ and represent one end of the spectrum, $\ie$, Erdos-Renyi random network. Here,
the network evolves uniformly.
The bottom two rows correspond to $\beta=0.8$ and represent the other end, $\ie$, scale-free networks. Here, the network
evolves locally, and clusters emerge naturally as a consequence of the local growth.
They are depicted using a combination of forced directed and Fruchterman Reingold layout with Gephi\footnote{http://gephi.github.io/}.
Moreover, the figure also shows the retweet events (from others as source) for two nodes, $A$ and $B$, on the bottom row. These two nodes arrive almost at the same time and establish links to two other 
nodes. However, node $A$'{}s followees are more central, therefore, $A$ is being exposed to more retweets. Thus, node $A$ performs more retweets than $B$ does. It again shows how information diffusion is affected by network structure.
%
%
%
%
%
%
%
Overall, this figure clearly illustrates that by careful choice of parameters we can generate networks with a very different structure. 
\begin{figure} [!!h]
        \centering
       \begin{tabular}{c c c}   
         \includegraphics[width=0.202\textwidth]{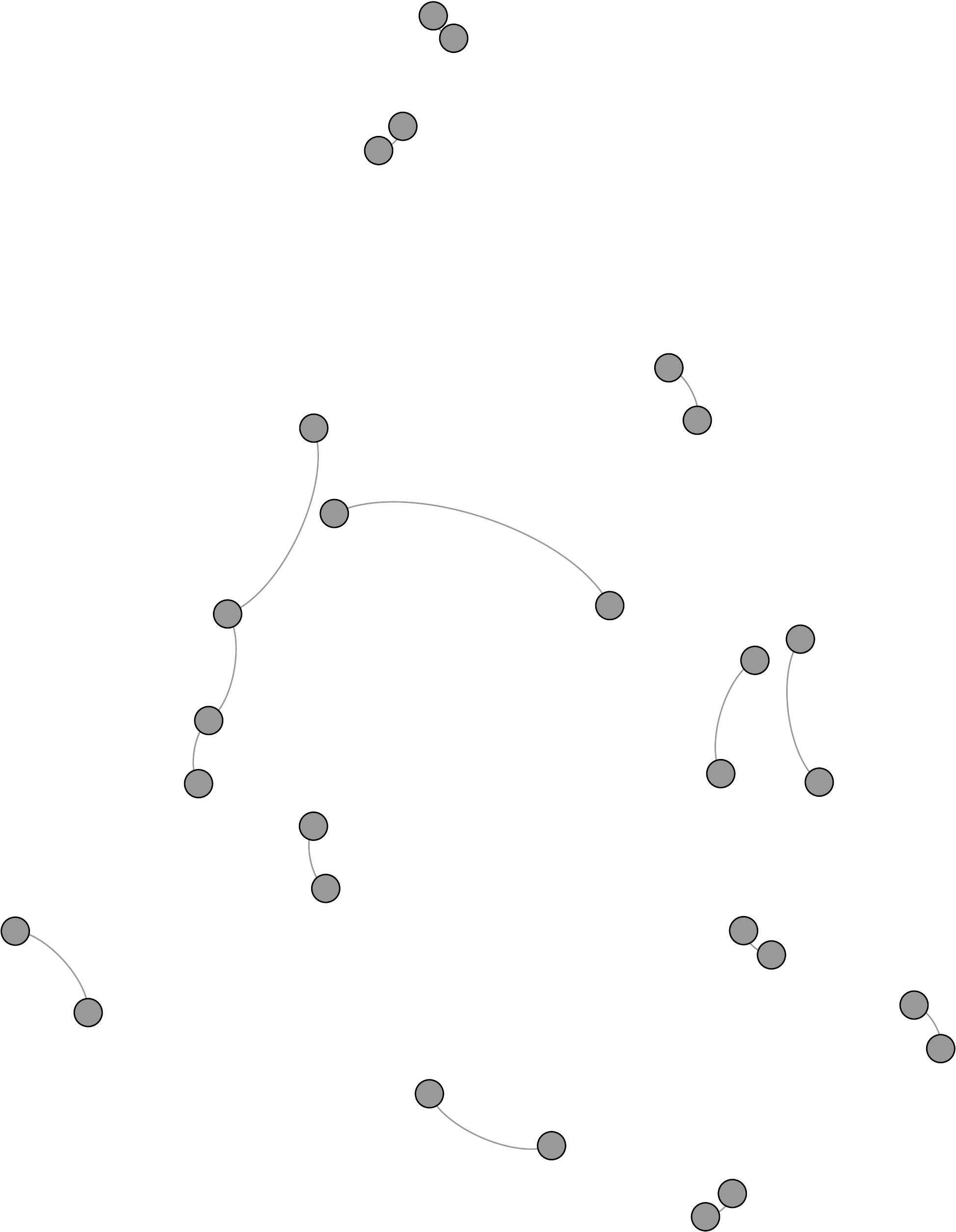} &
         \includegraphics[width=0.202\textwidth]{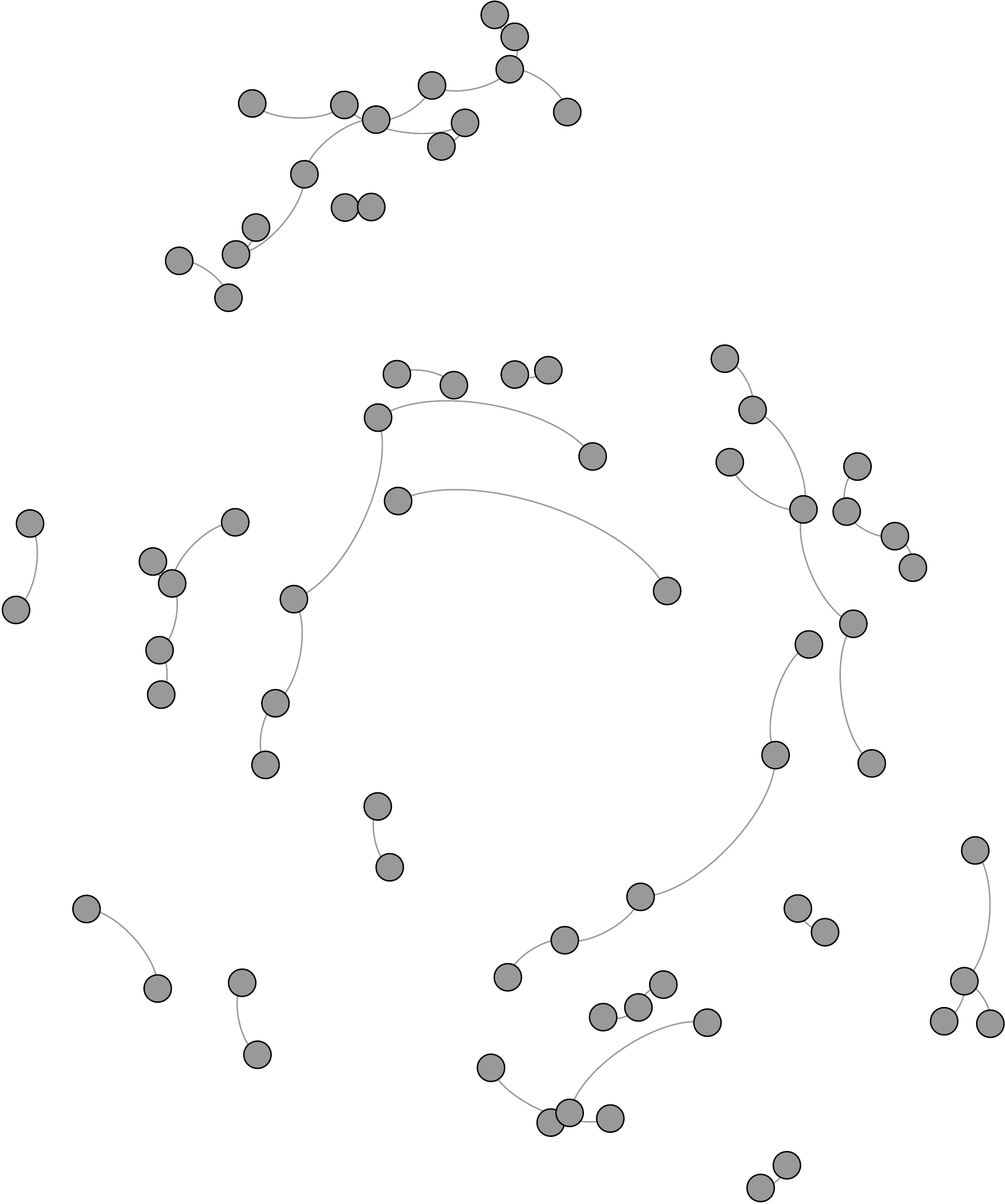} &
         \includegraphics[width=0.202\textwidth]{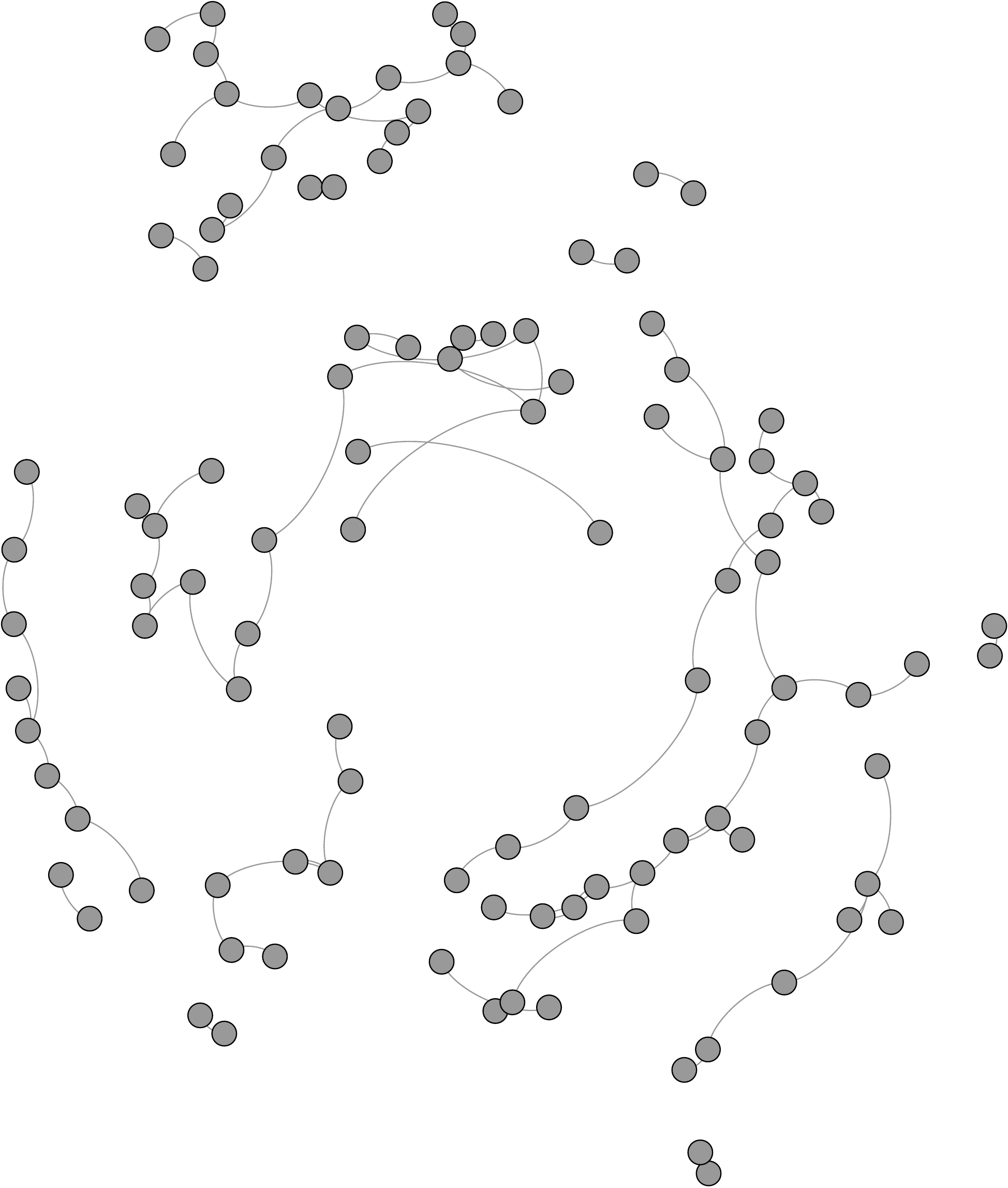}   \\
         t = 5 & t=20 & t=35 \vspace{1mm} \\
         \includegraphics[width=0.202\textwidth]{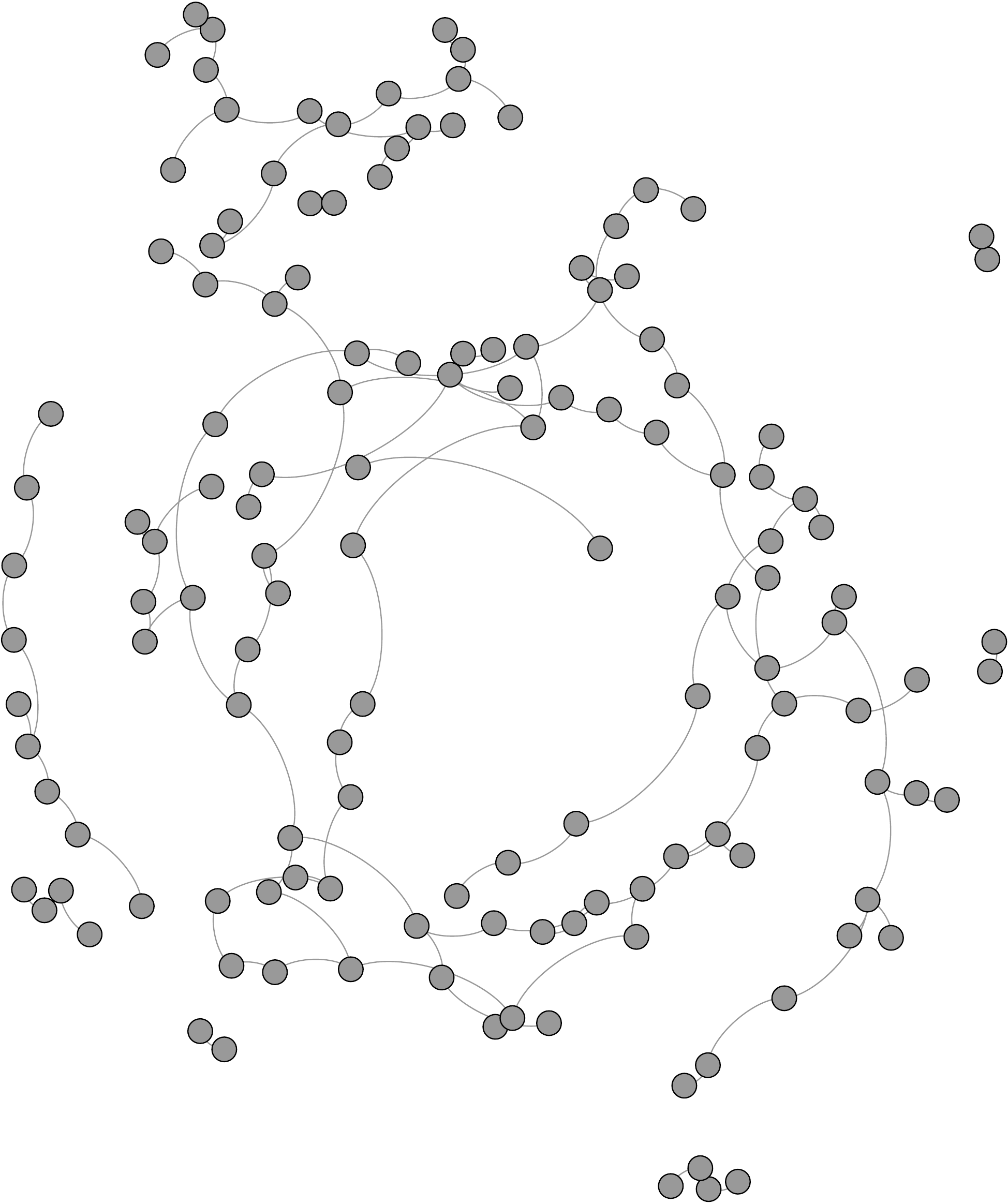} &
         \includegraphics[width=0.202\textwidth]{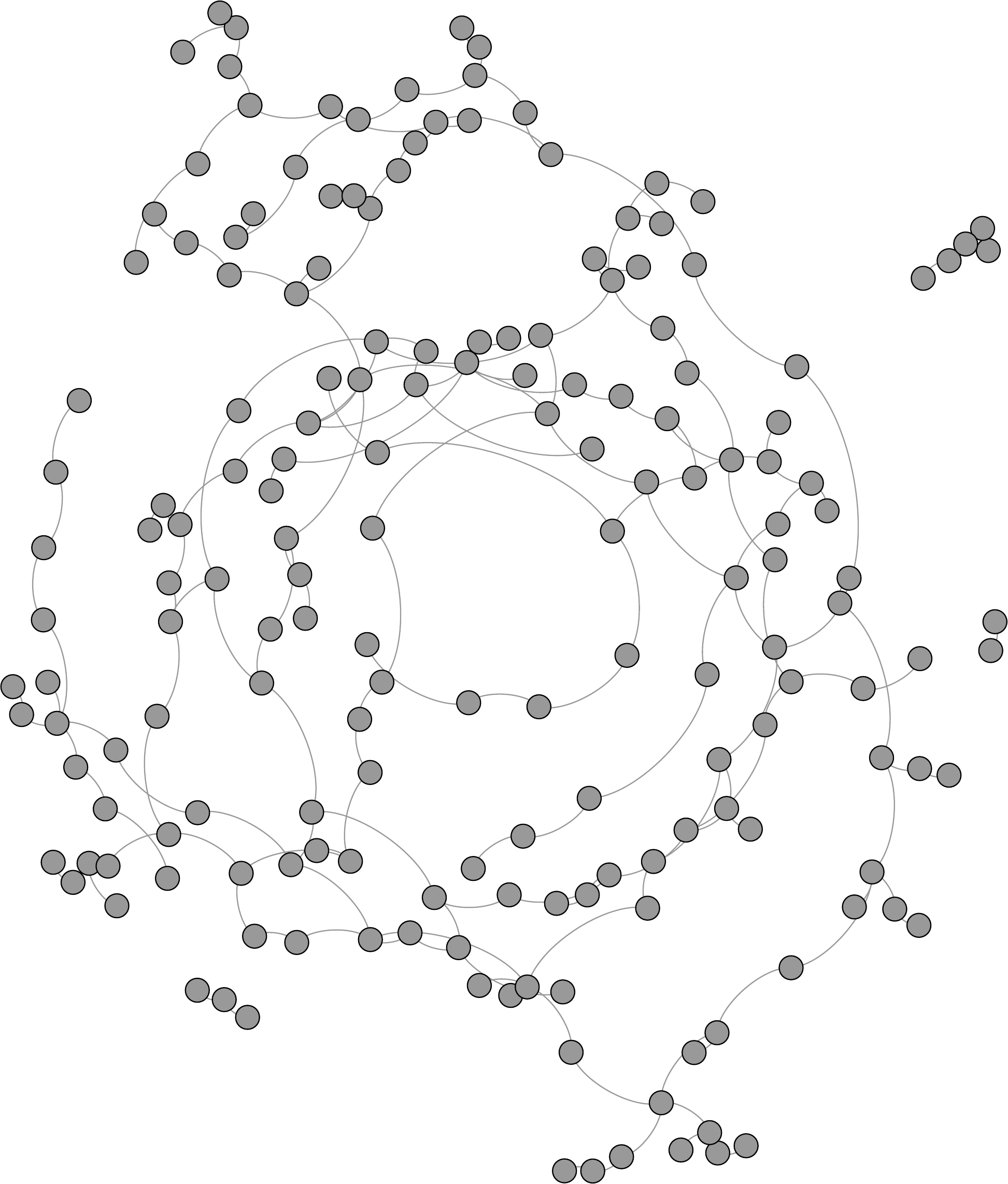} &
          \includegraphics[width=0.202\textwidth]{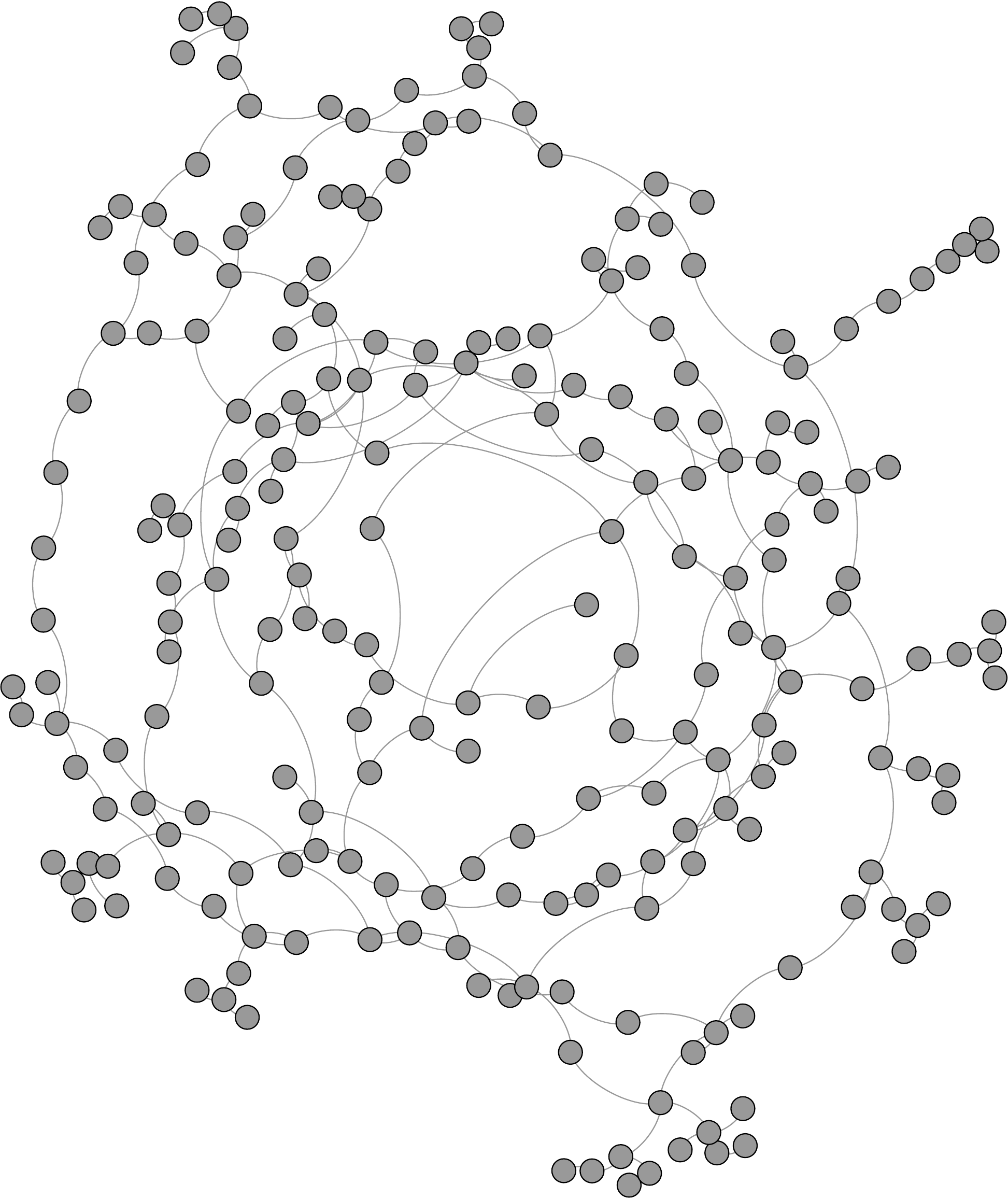} \\
         t=50 & t=65 & t=80 \vspace{1mm} \\
                 \includegraphics[width=0.202\textwidth]{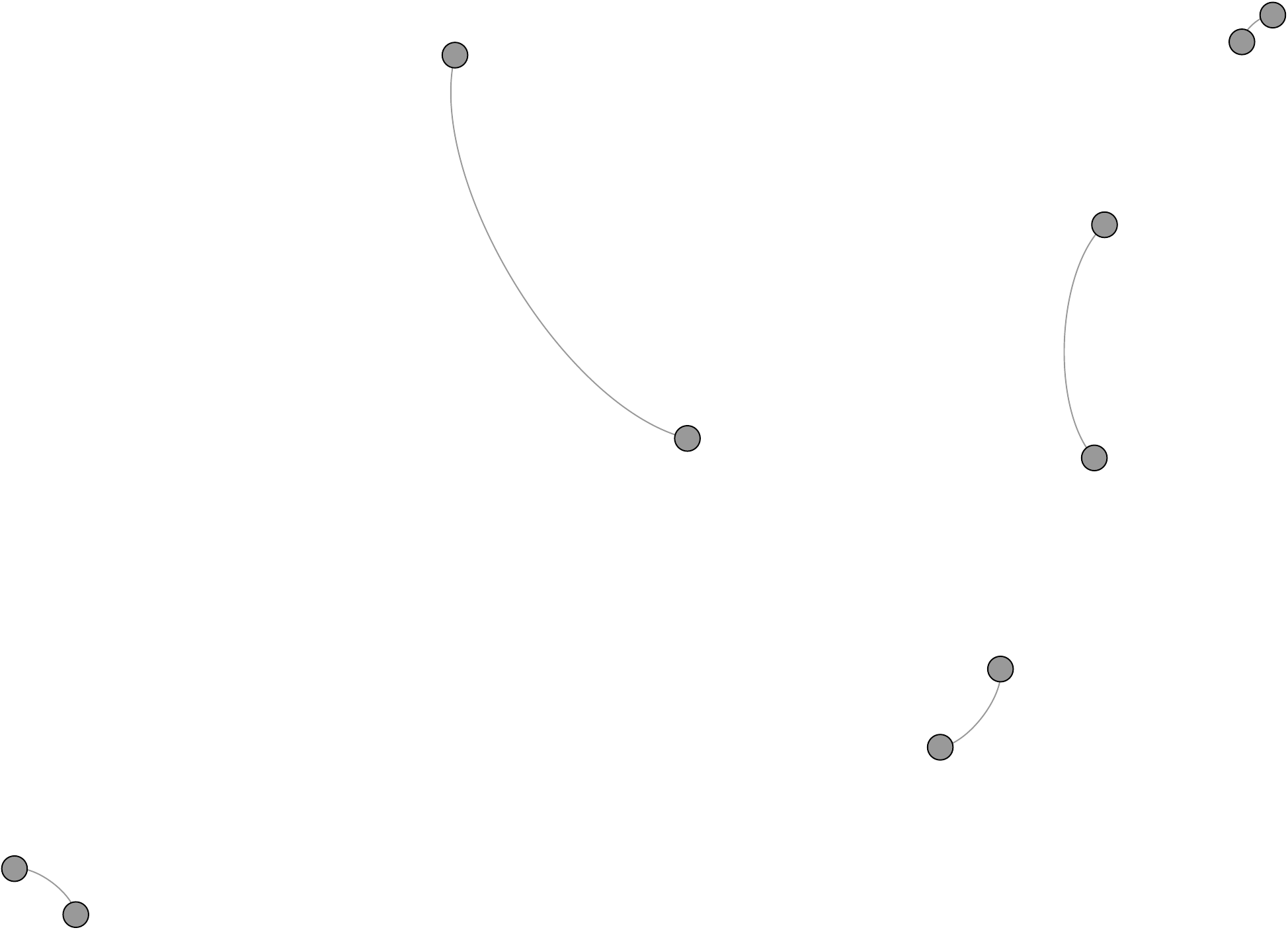} &
         \includegraphics[width=0.202\textwidth]{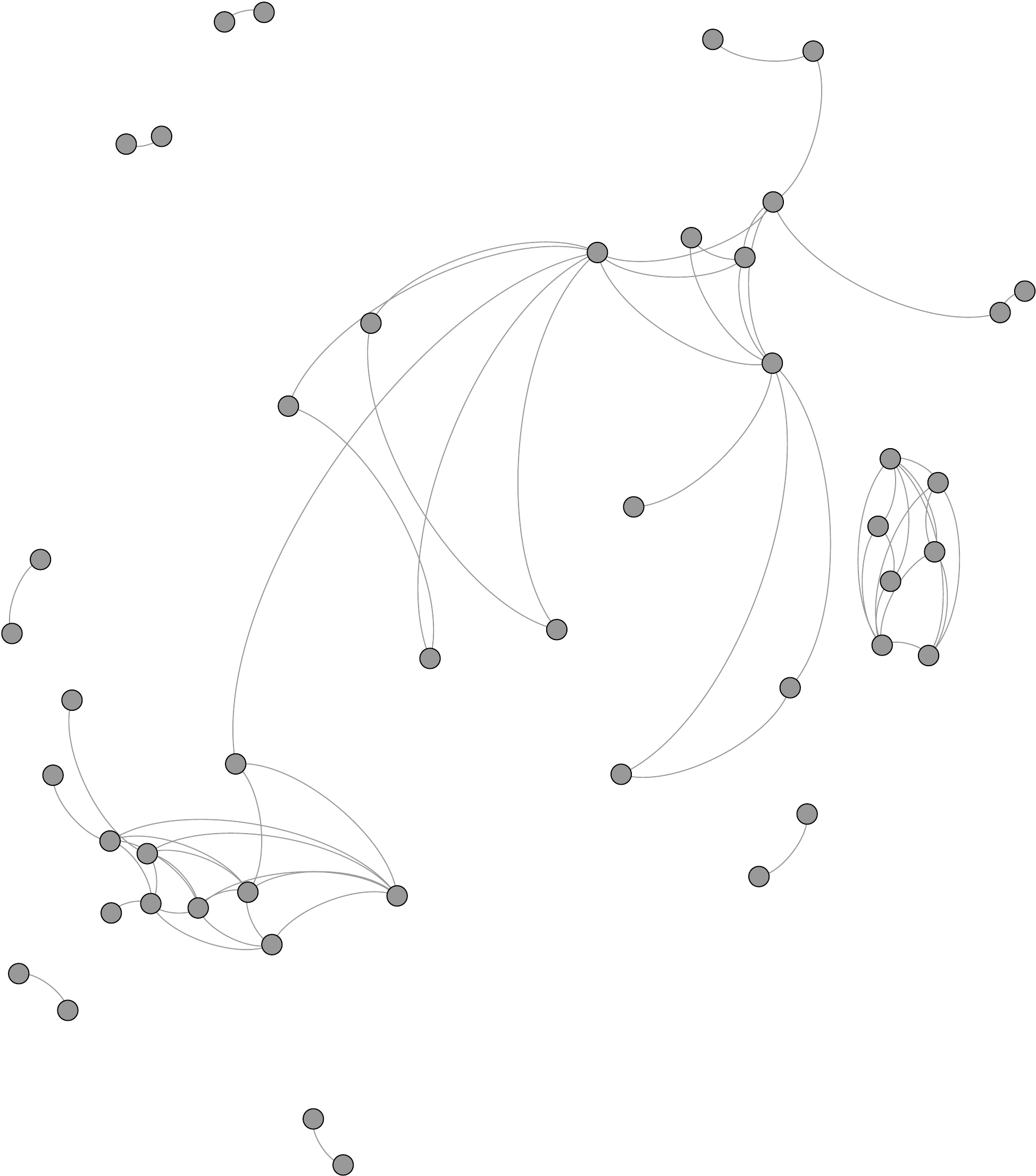} &
         \includegraphics[width=0.202\textwidth]{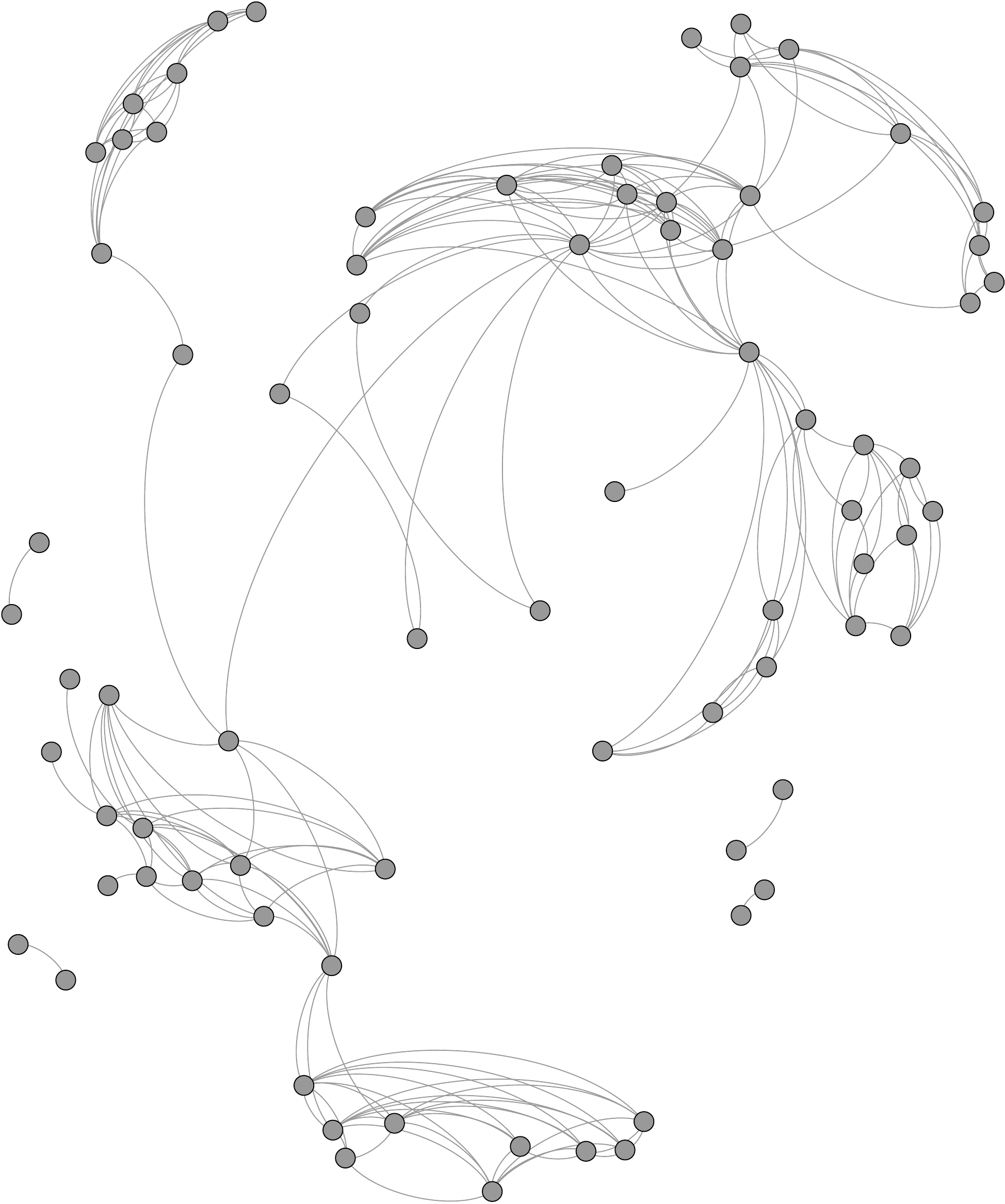}   \\
         t = 5 & t=20 & t=35 \vspace{1mm} \\
         \includegraphics[width=0.202\textwidth]{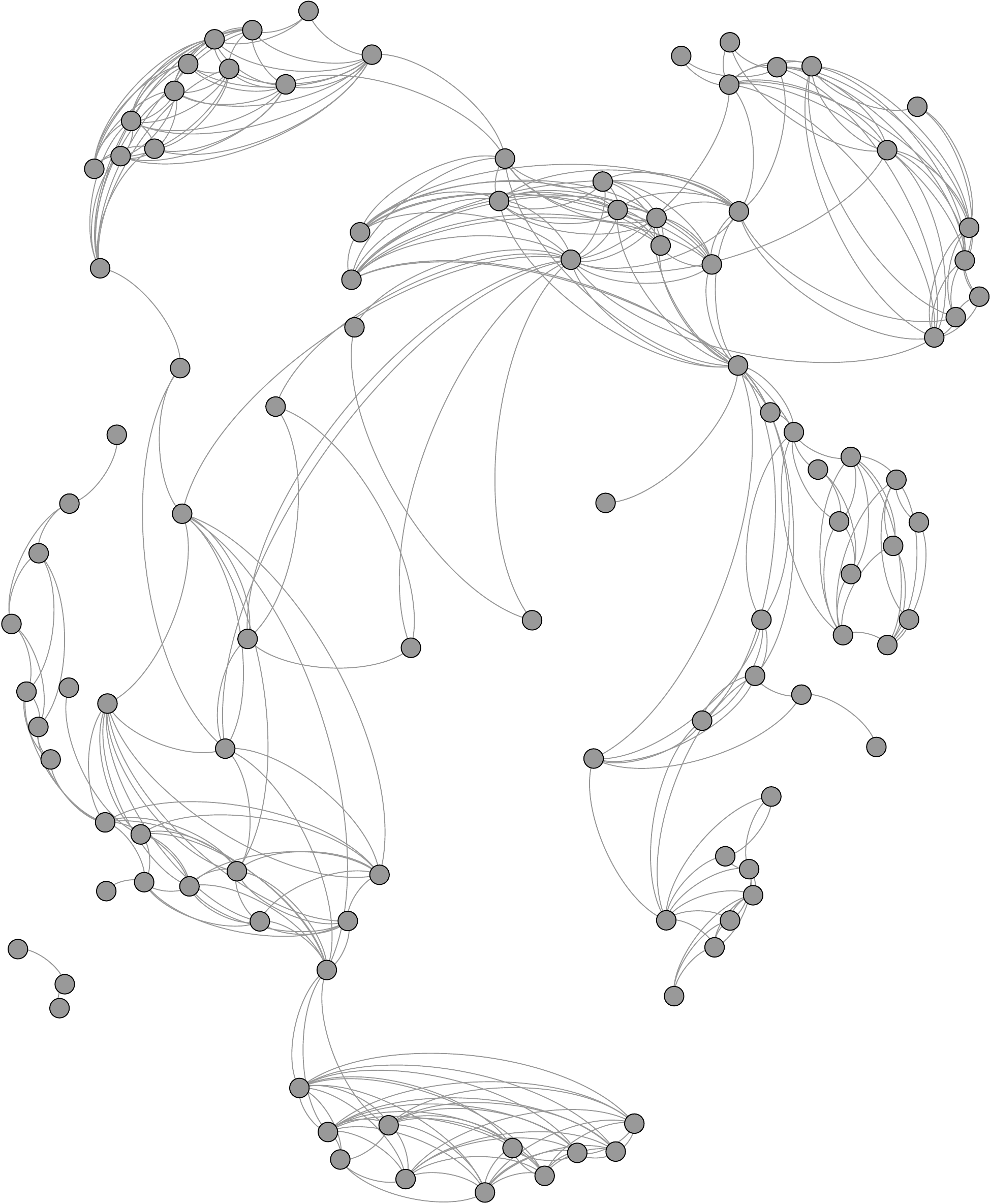} &
         \includegraphics[width=0.202\textwidth]{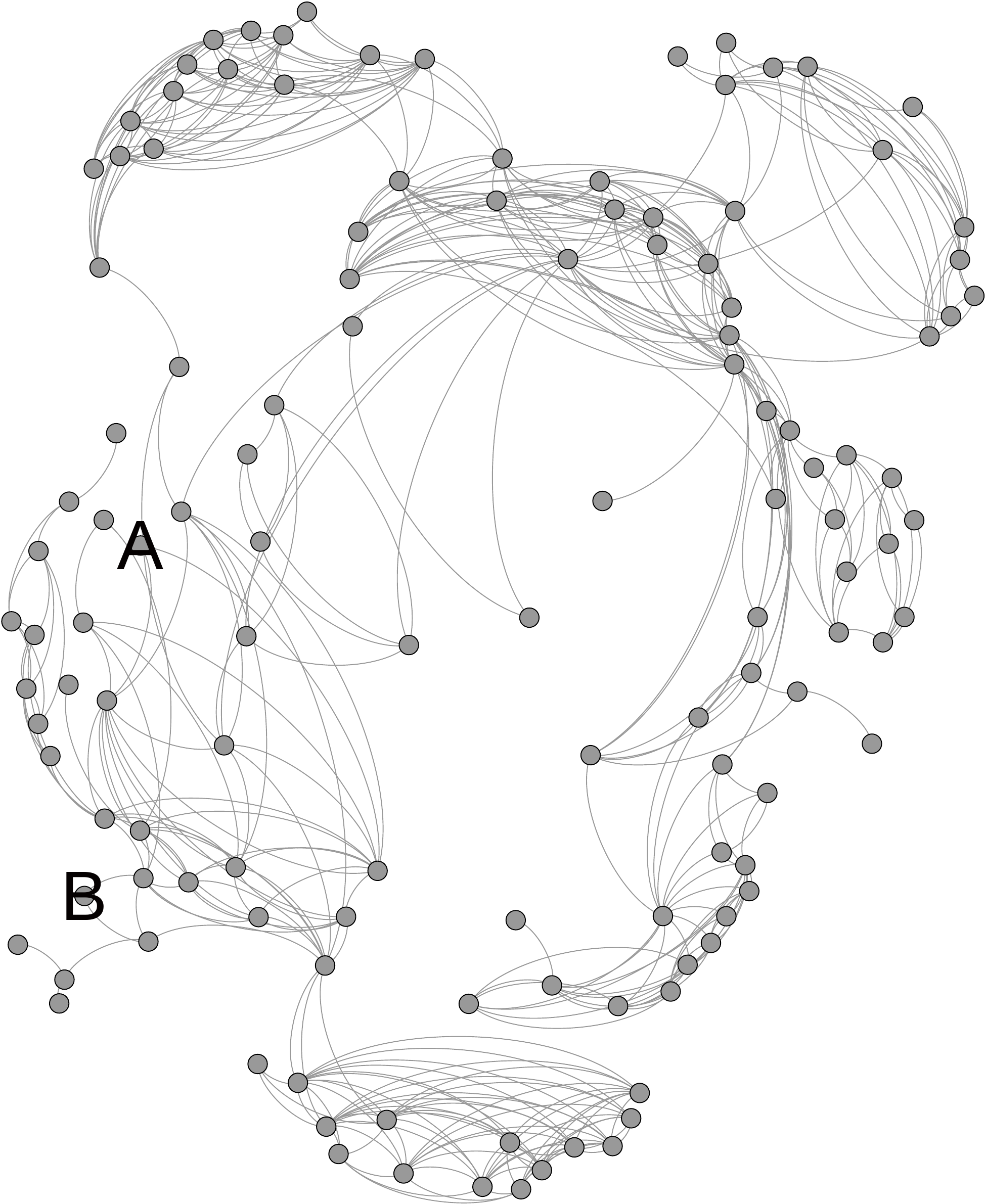} &
          \includegraphics[width=0.202\textwidth]{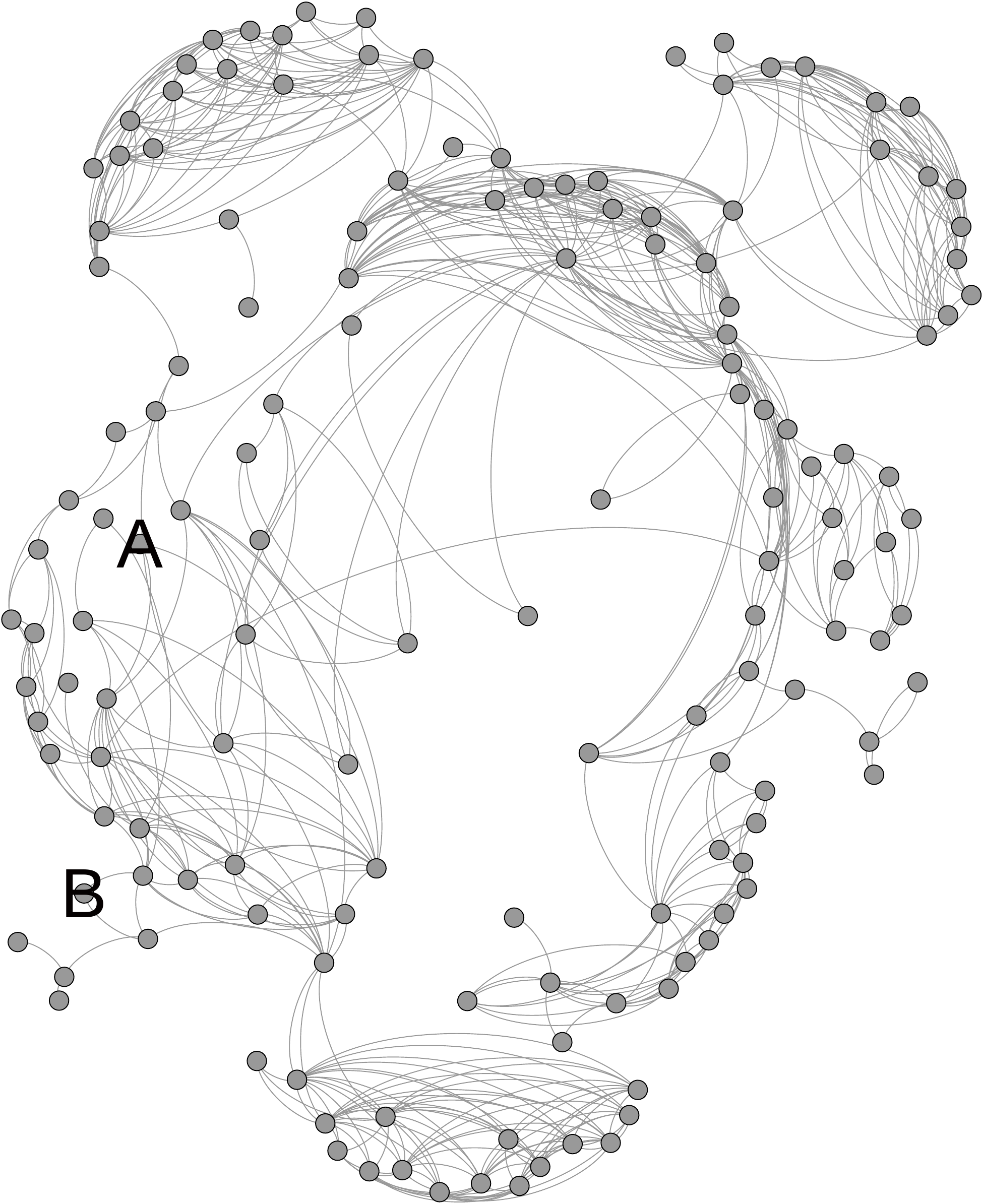} \\
         t=50 & t=65 & t=80 \vspace{1mm} \\
         &
         \includegraphics[width=0.18\textwidth]{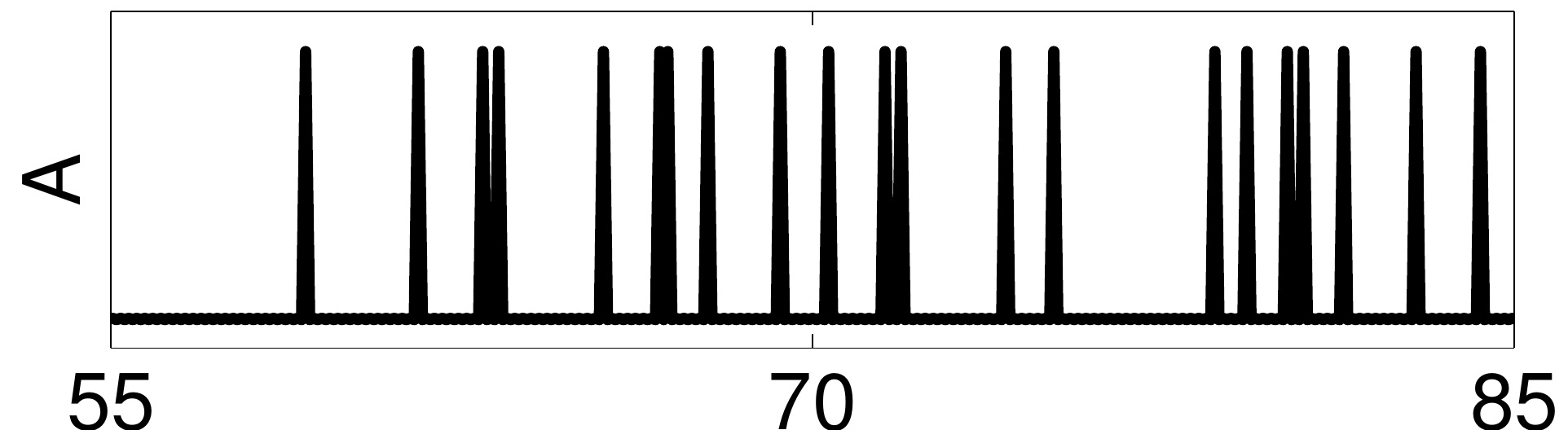}  & \includegraphics[width=0.18\textwidth]{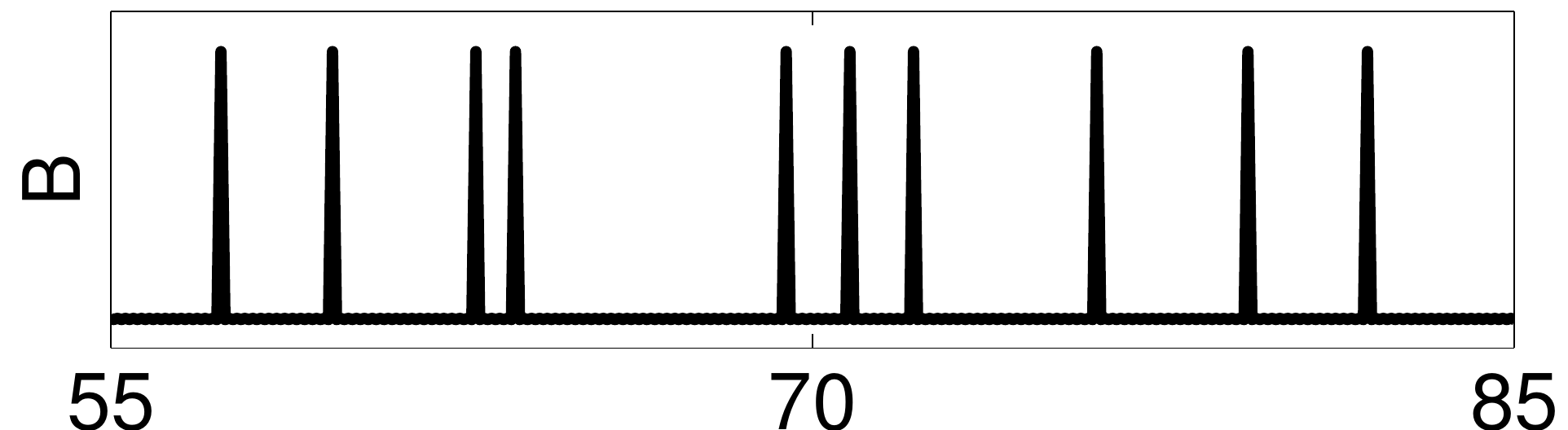}
          \end{tabular}
        \caption{Evolution of two networks: one with $\beta = 0$ (1st and 2nd rows) and another one with $\beta=0.8$ (3rd and 4th rows), and spike trains of nodes A and B (5th row).
}
        \label{fig:vis-beta}
\end{figure}

\begin{figure} [!!h]
        \vspace{-2mm}
        \centering
                         \includegraphics[width=0.76\textwidth]{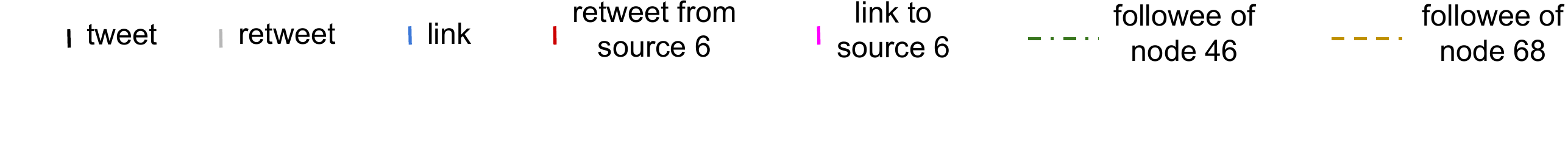} \\
                         \vspace{-8mm}
                 \includegraphics[width=.93\textwidth]{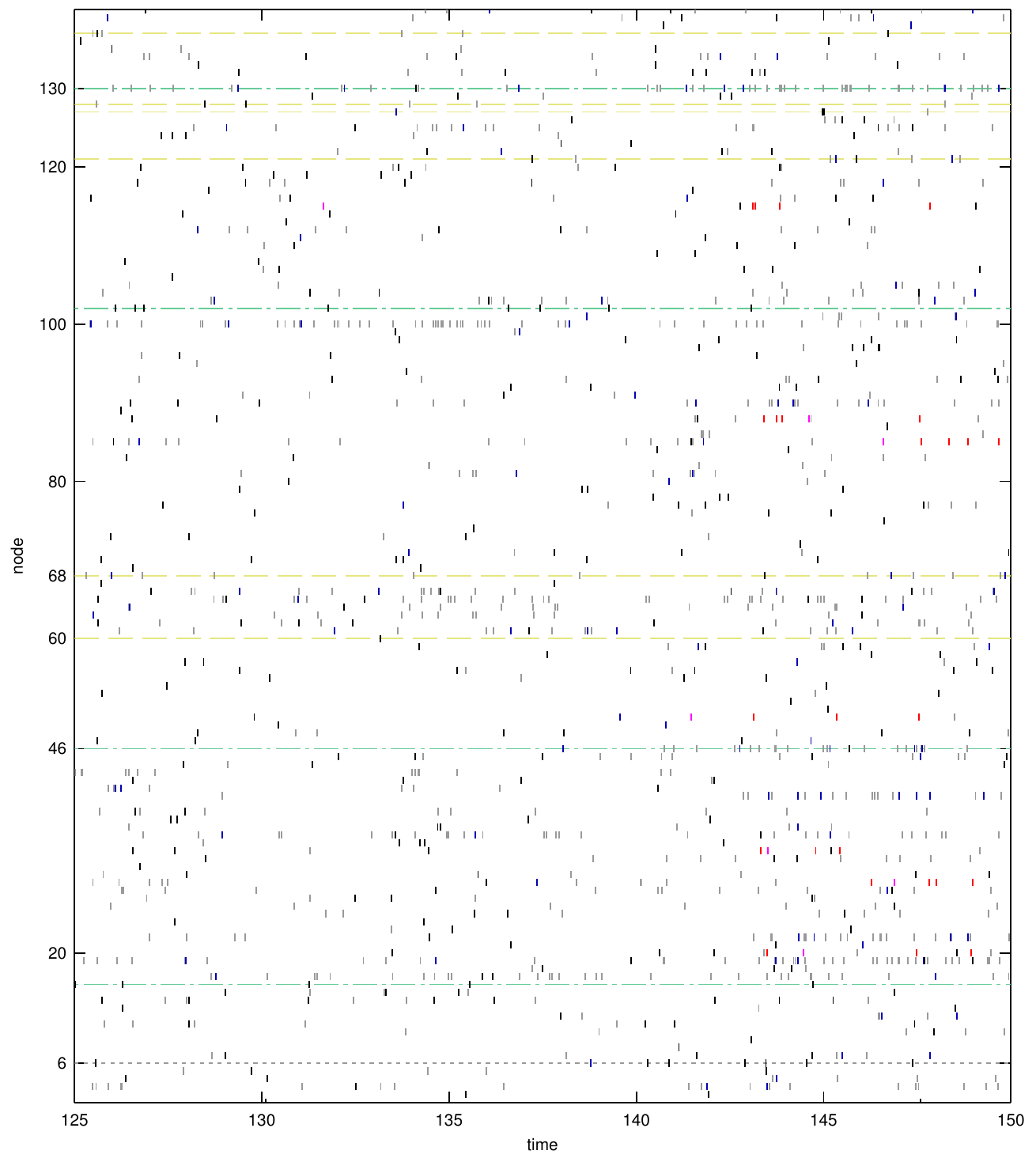}
        \caption{Coevolutionary dynamics of events for the network shown in Figure~\ref{fig:nets-for-spikes}. \\
Information Diffusion $\longrightarrow$ Network Evolution: When node 6 joins the network a few nodes follow her and retweet her posts. Her tweets being propagated (shown in red) turning her to a valuable source of information. Therefore, those retweets are followed by links created to her (shown in magenta). \\
        Network Evolution $\longrightarrow$ Information Diffusion: Nodes 46 and 68 both have almost the same number of followees. However, as soon as node 46 connects to node 130 (which is a central node and retweets very much) her activity dramatically increases compared to node 68.}
        \label{fig:spikes-vis}
\end{figure}

\begin{figure} [!!h]
        \centering
        \begin{tabular}{c c c}
         \includegraphics[width=0.30\textwidth]{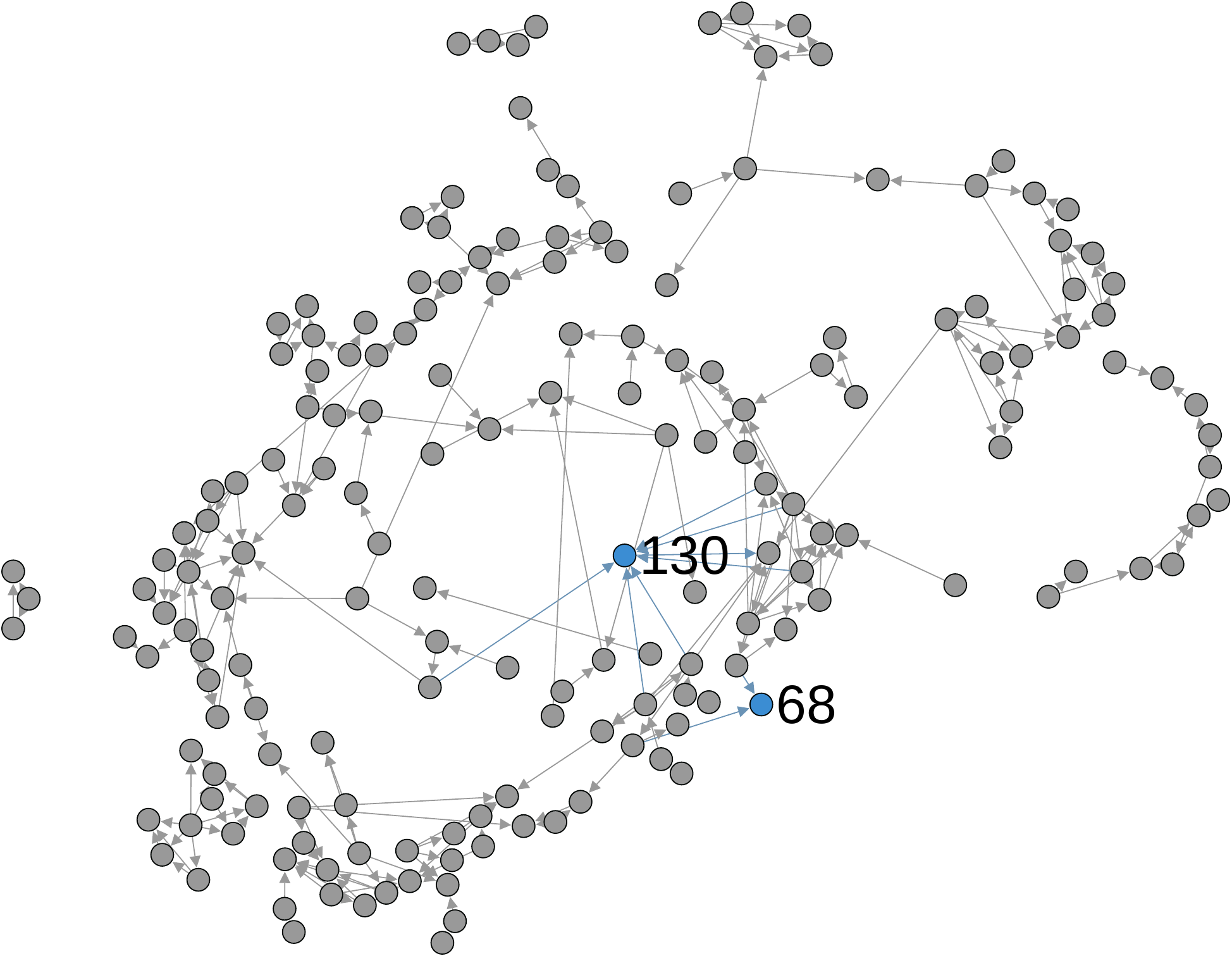} &
         \includegraphics[width=0.30\textwidth]{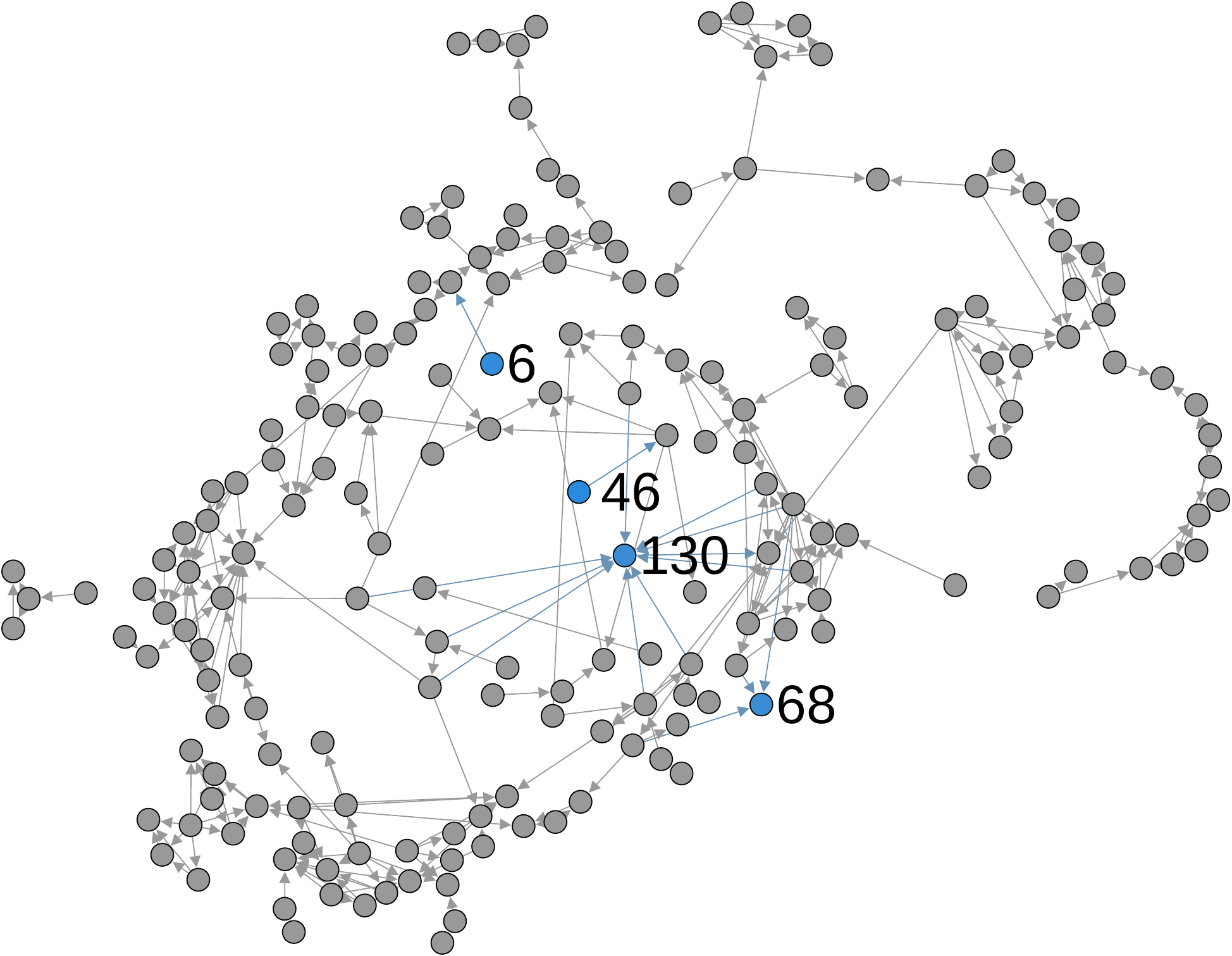} &
          \includegraphics[width=0.30\textwidth]{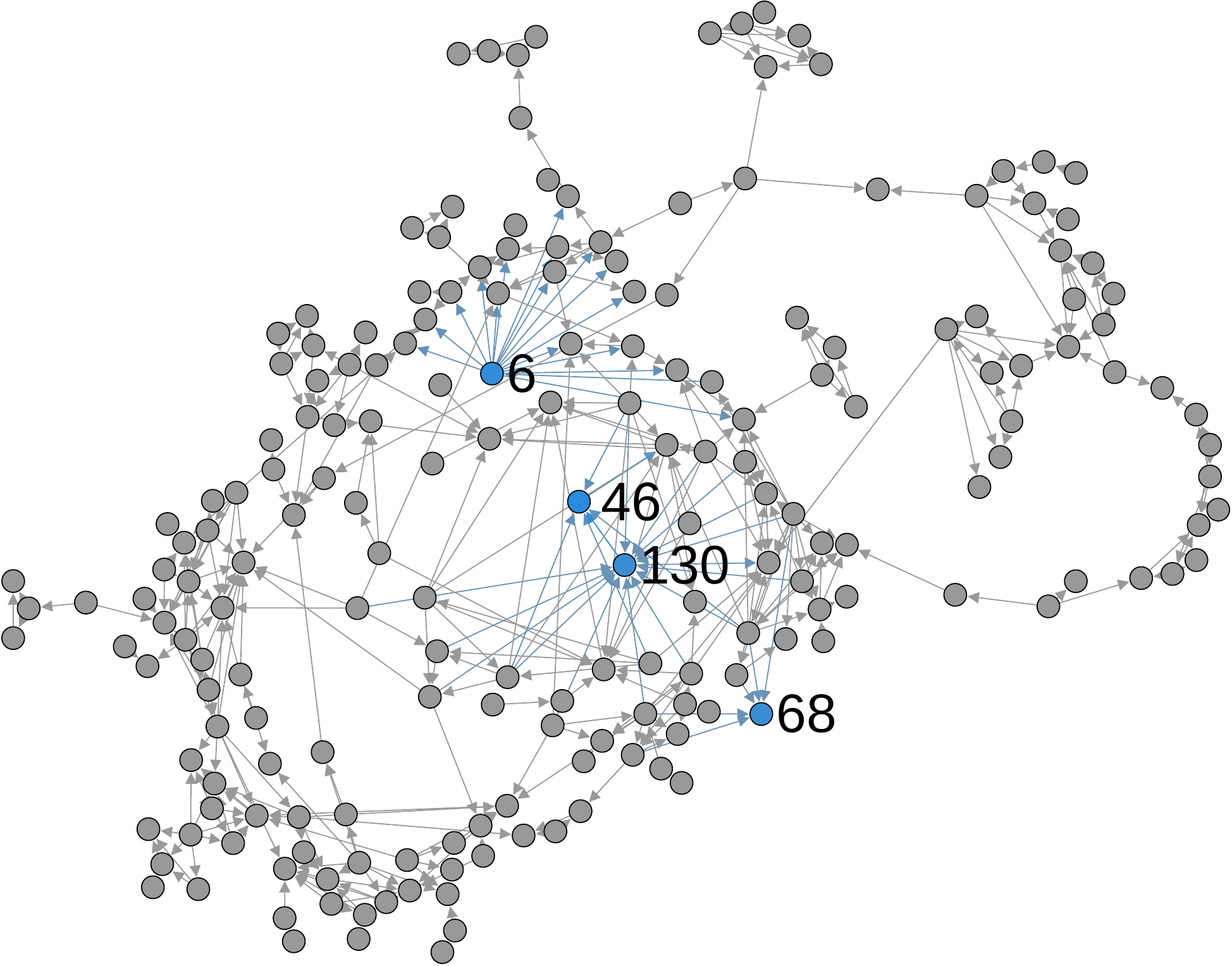} \\
          t=125 & t=137 & t=150
          \end{tabular}
          \caption{Network structure in which events from Figure~\ref{fig:spikes-vis} take place, at different times. 
          }  
                  \label{fig:nets-for-spikes}
\end{figure}

Figure~\ref{fig:spikes-vis} illustrates the spike trains (tweet, retweet, and link events) for the first 140 nodes of a network simulated with a similar set of parameters as above and Figure \ref{fig:nets-for-spikes} shows three snapshots of the network at different times.
First, consider node 6 in the network. After she joins the network, a few nodes begin to follow him. Then, when she starts to tweet, her tweets are retweeted many times by others (red spikes) in the figure and these retweets subsequently boost the number of nodes
that link to her (Magenta spikes). This clearly illustrates the scenario in which information diffusion triggers changes on the network structure.
Second, consider nodes 46 and 68 and compare their associated events over time. After some time, node 46 becomes much more active than node 68. To understand why, note that soon after time 137, node 46 followed node 130, which is a very central node ($\ie$ 
following a lot of people), while node 68 did not.
This clearly illustrates the scenario in which network evolution triggers changes on the dynamics of information diffusion.

\vspace{-2mm}
\subsection{Cascade Patterns}
Our model can produce the most commonly occurring cascades structures as well as heavy-tailed cascade size and depth distributions, as observed in historical Twitter data reported in ~\cite{GoeWatGol12}.
Figure~\ref{fig:cascades-structures-size-depth-alpha-varying} summarizes the results, which provide empirical evidence that the higher the $\alpha$ ($\beta$) value, the shallower and wider the 
cascades.

\section{Experiments on Model Estimation and Prediction on Synthetic Data} \label{sec:synthetic}
In this section, we first show that our model estimation method can accurately recover the true model parameters from historical link and diffusion events data and
then demonstrate that our model can accurately predict the network evolution and information diffusion over time, significantly outperforming two state of the art 
me\-thods~\cite{WenRatPerGonCasBonSchMenFla13,AntDov13,MyeLes14} at predicting new links, and a baseline Hawkes process that does not consider network 
evolution at predicting new events.

\subsection{Experimental Setup}
Throughout this section, we experi\-ment with our model considering $m$$=$$400$ nodes.
We set the model parameters for each node in the network by drawing samples from $\mu$$\sim$$U(0, 0.0004)$,  $\alpha$$\sim$$U(0, 0.1)$, $\eta$$\sim$$U(0, 1.5)$ and $\beta$$\sim$$U(0, 0.1)$. 
We then sample up to 60,000 link and information diffusion events from our model using Algorithm~\ref{alg:simulation} and average over 8 different simulation runs.
%

%
\begin{figure} [t]
        \centering
        \begin{tabular}{c c c}
        \includegraphics[width=0.21\textwidth]{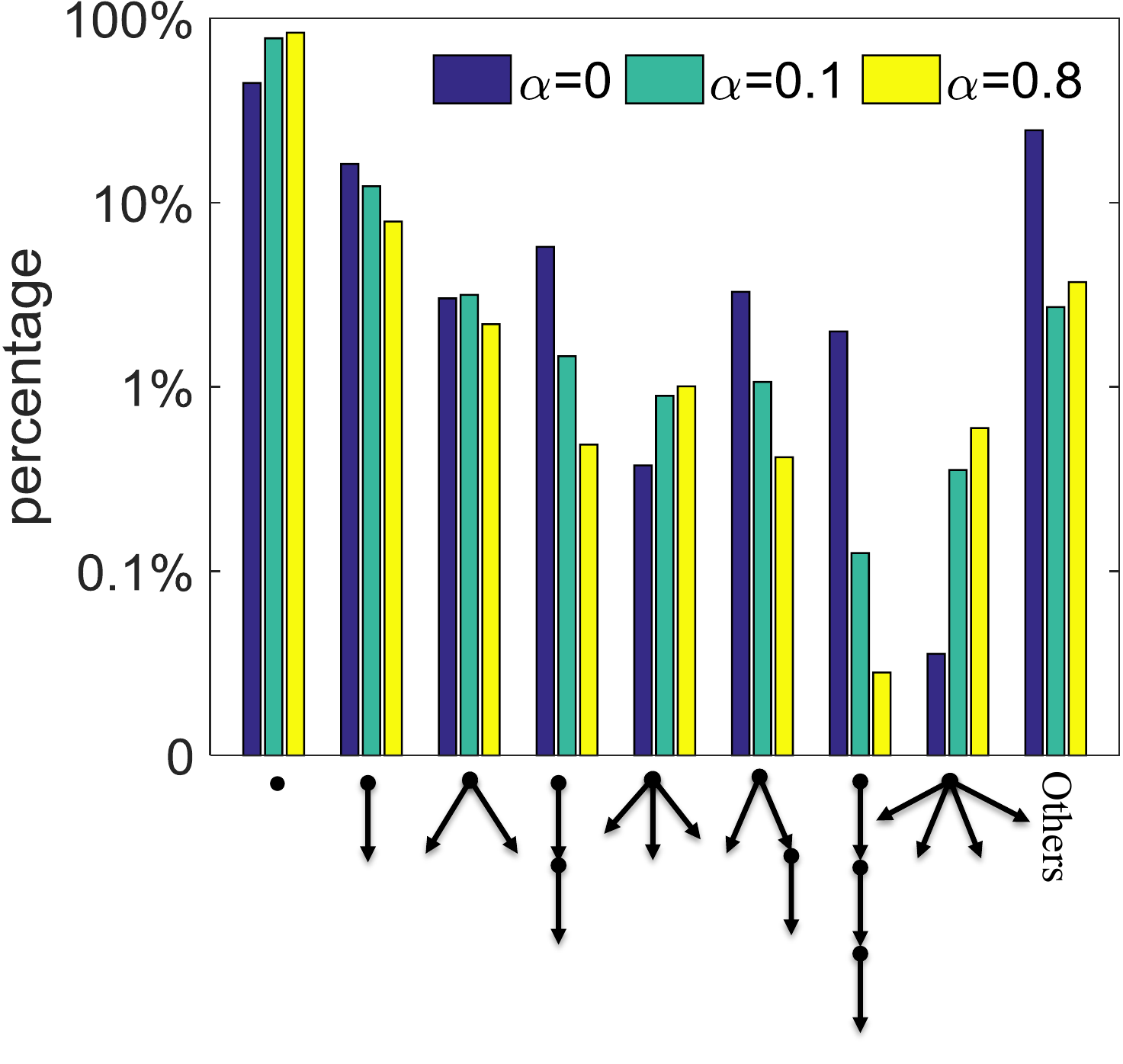} & \includegraphics[width=0.26\textwidth]{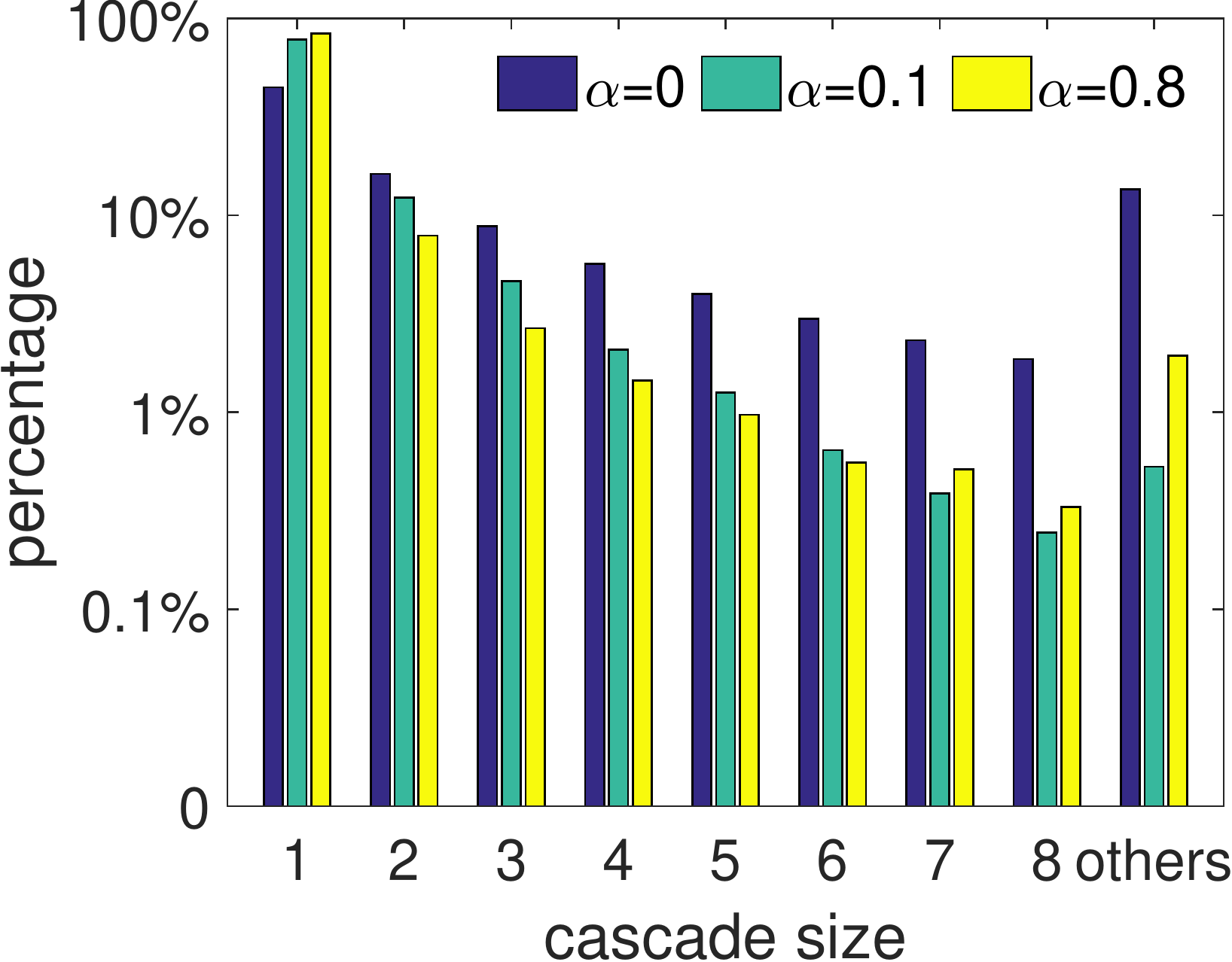} &
        \includegraphics[width=0.26\textwidth]{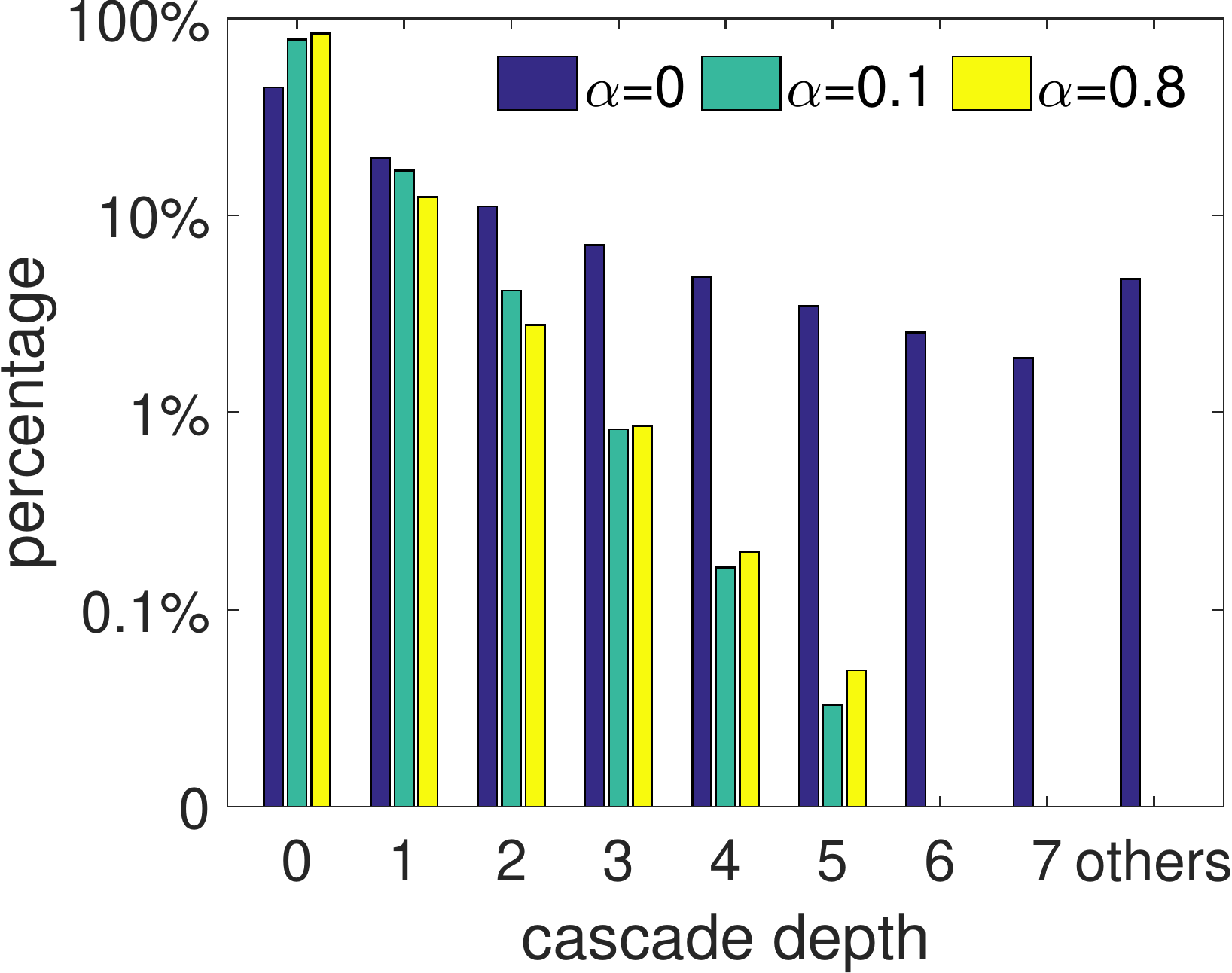} \\
         (a) & (b) & (c) \\
        \includegraphics[width=0.21\textwidth]{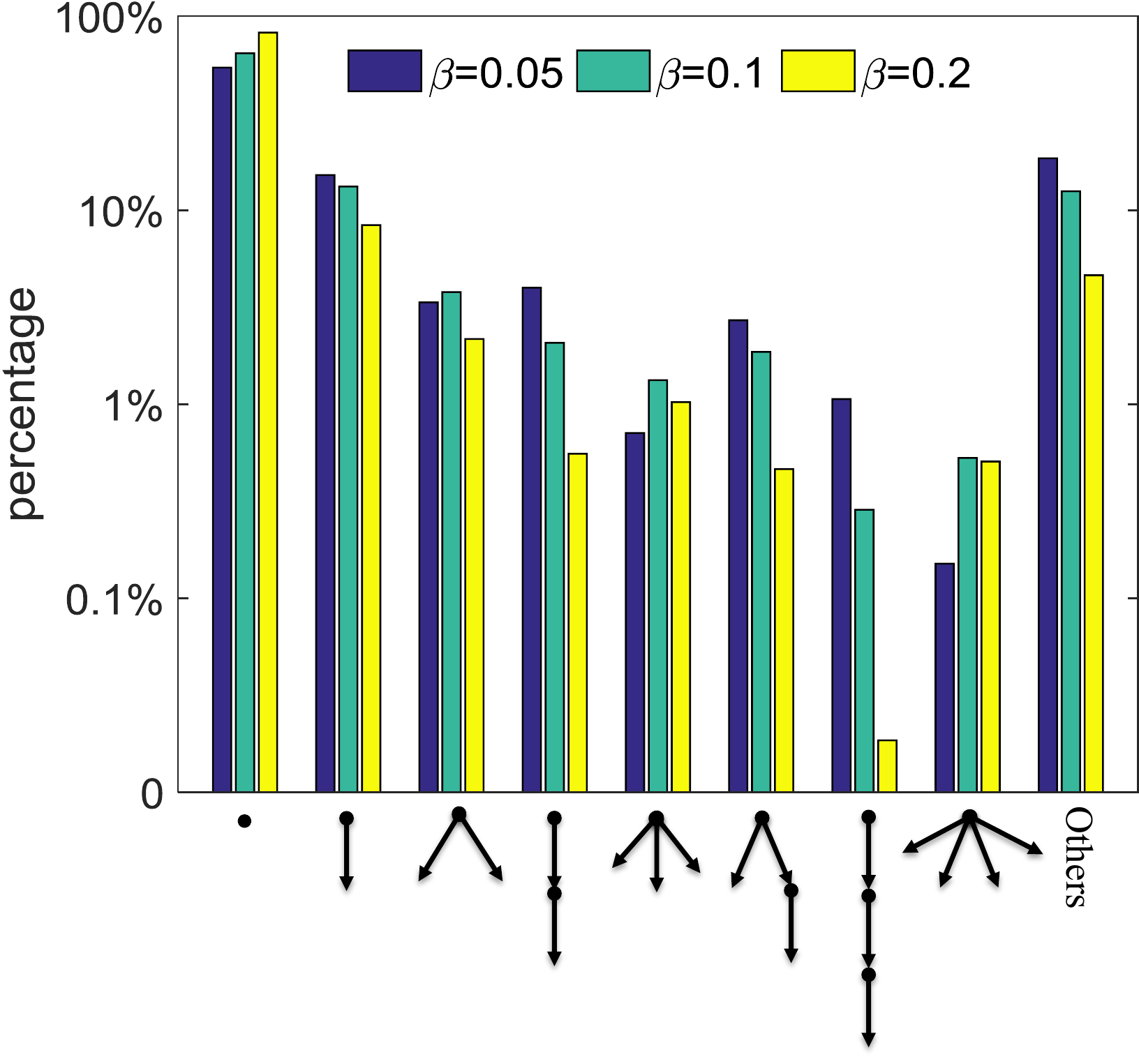} &
         \includegraphics[width=0.26\textwidth]{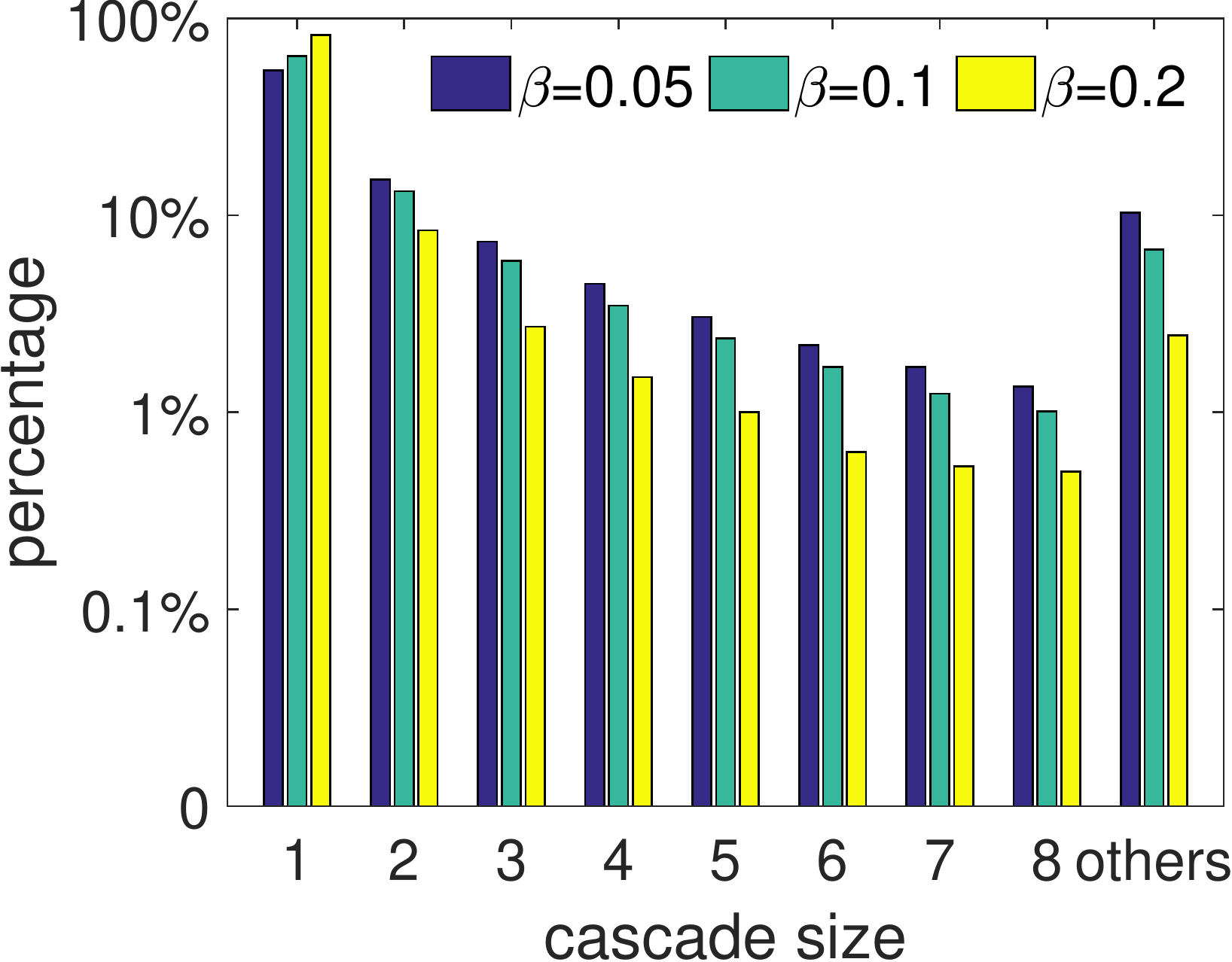} &
        \includegraphics[width=0.26\textwidth]{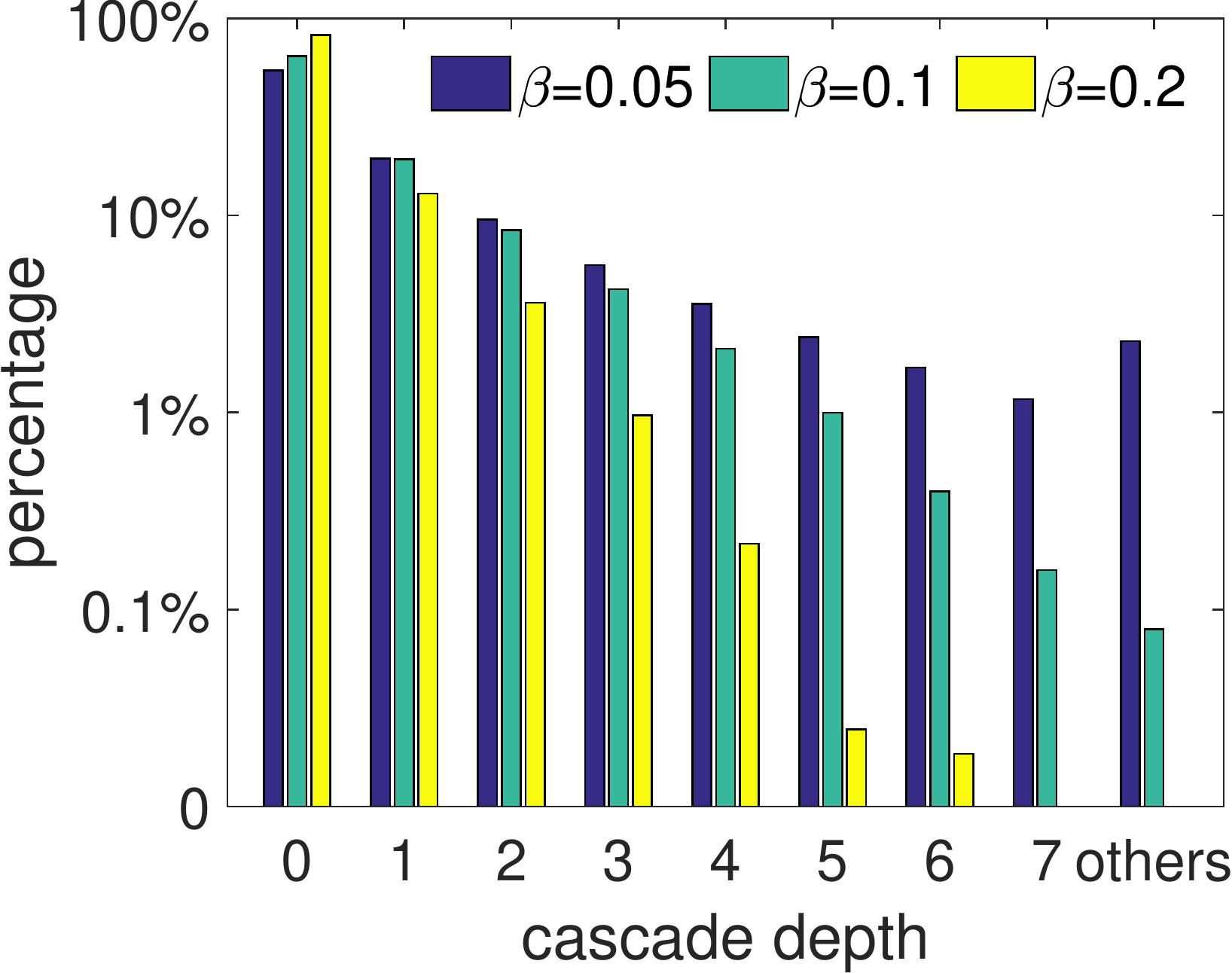} \\
         (d) & (e) & (f) \\
        \end{tabular}
        \caption{Distribution of cascade structure, size and depth for different $\alpha$ ($\beta$) va\-lues and fixed $\beta = 0.2$ ($\alpha = 0.8$).}
        \label{fig:cascades-structures-size-depth-alpha-varying}
\end{figure}

\subsection{Model Estimation}
We evaluate the accuracy of our model estimation procedure via two measures: (i) the relative mean absolute error (\ie, $\EE[|x-\hat{x}|/x]$, MAE) between the estimated parameters ($x$) and the true parameters ($\hat{x}$), 
(ii) the Kendall'{}s rank correlation coefficient between each estimated parameter and its true value, and (iii) test log-likelihood.
Figure~\ref{fig:params-inference} shows that as we feed more events into the estimation procedure, the estimation becomes more accurate. 
%
%

\subsection{Link Prediction}
We use our model to predict the identity of the source for each test link event, given the historical events before the time of the prediction, and compare its performance with two state of the art methods, which 
we denote as TRF~\cite{AntDov13} and WENG~\cite{WenRatPerGonCasBonSchMenFla13}. 
TRF measures the probability of creating a link from a source at a given time by simply computing the proportion of new links created from the source over all total 
created links up to the given time.
WENG considers several link creation strategies and makes a prediction by combining these strategies.

Here, we evaluate the performance by computing the pro\-ba\-bility of all potential links using our model, TRF and WENG and then compute (i) the average rank of all true (test) events (AvgRank) and, (ii) the success 
probability that the true (test) events rank among the top-1 potential events at each test time (Top-1).
%
Figure~\ref{fig:prediction-synthetic} summarizes the results, where we trained our model with an increasing number of events.
Our model outperforms both TRF and WENG for a significant margin.
%
%

\subsection{Activity Prediction} 
We use our model to predict the identity of the node that generates each test diffusion event, given the historical events before the time of the prediction, and compare its performance with a baseline 
consisting of a Hawkes process without network evolution. For the Hawkes baseline, we take a snapshot of the network right before the prediction time, and use all historical retweeting events 
to fit the model. 
Here, we evaluate the performance via the same two measures as in the link prediction task and summarize the results in Figure~\ref{fig:prediction-synthetic} against an increasing number of training 
events.
The results show that, by modeling the network evolution, our model performs significantly better than the baseline.
\begin{figure} [t]
        \centering
        \begin{tabular}{c c c}
        \includegraphics[width=0.25\textwidth]{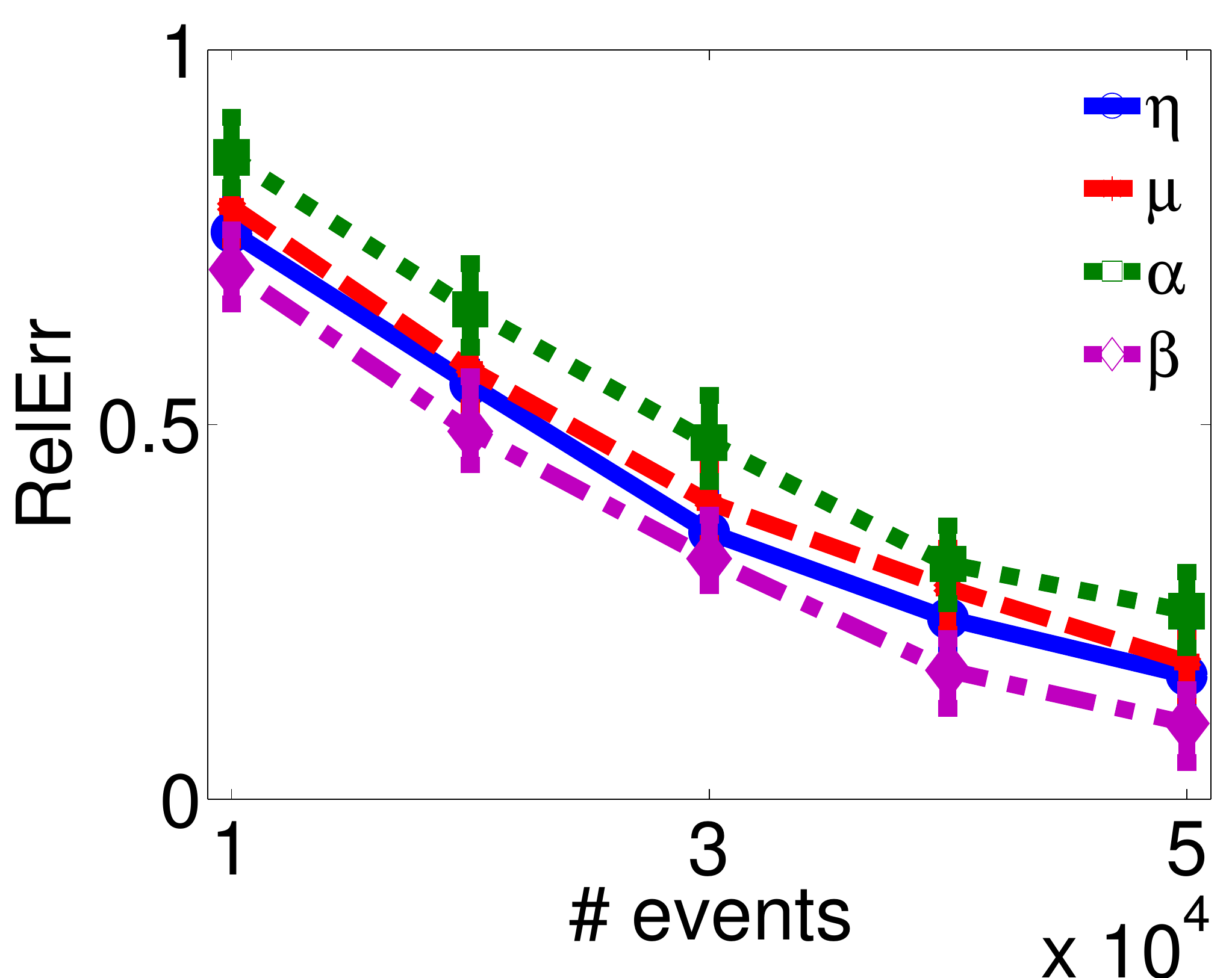} &
        \includegraphics[width=0.25\textwidth]{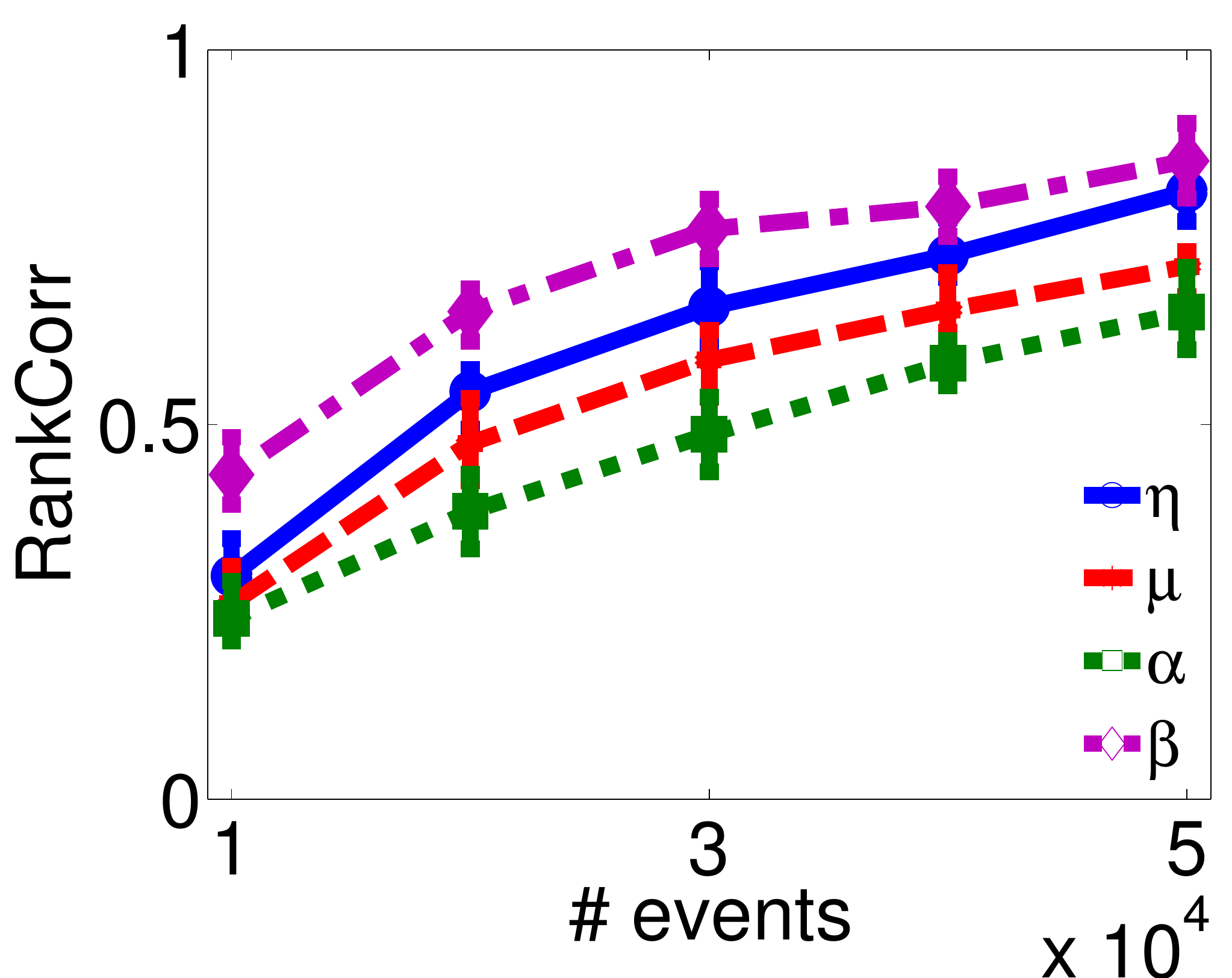} &
        \includegraphics[width=0.25\textwidth]{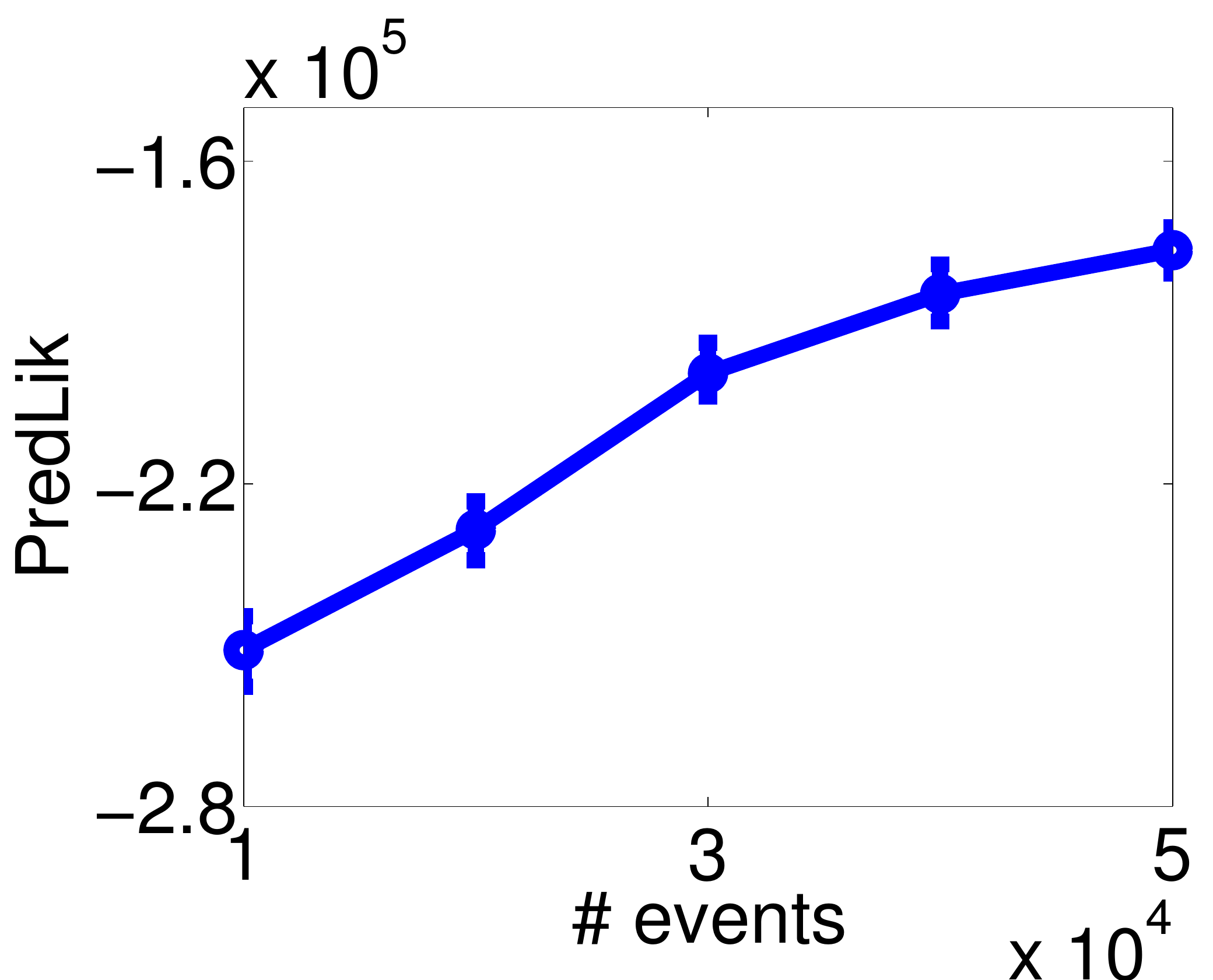} \\
        (a) Relative MAE & (b) Rank correlation & (c) Test log-likelihood \\
        \end{tabular}
        \caption{Performance of model estimation for a 400-node synthetic network.}
        \label{fig:params-inference}
\end{figure}
\begin{figure*}[t]
        \centering
        \begin{tabular}{c c c c}
         \includegraphics[width=0.20\textwidth]{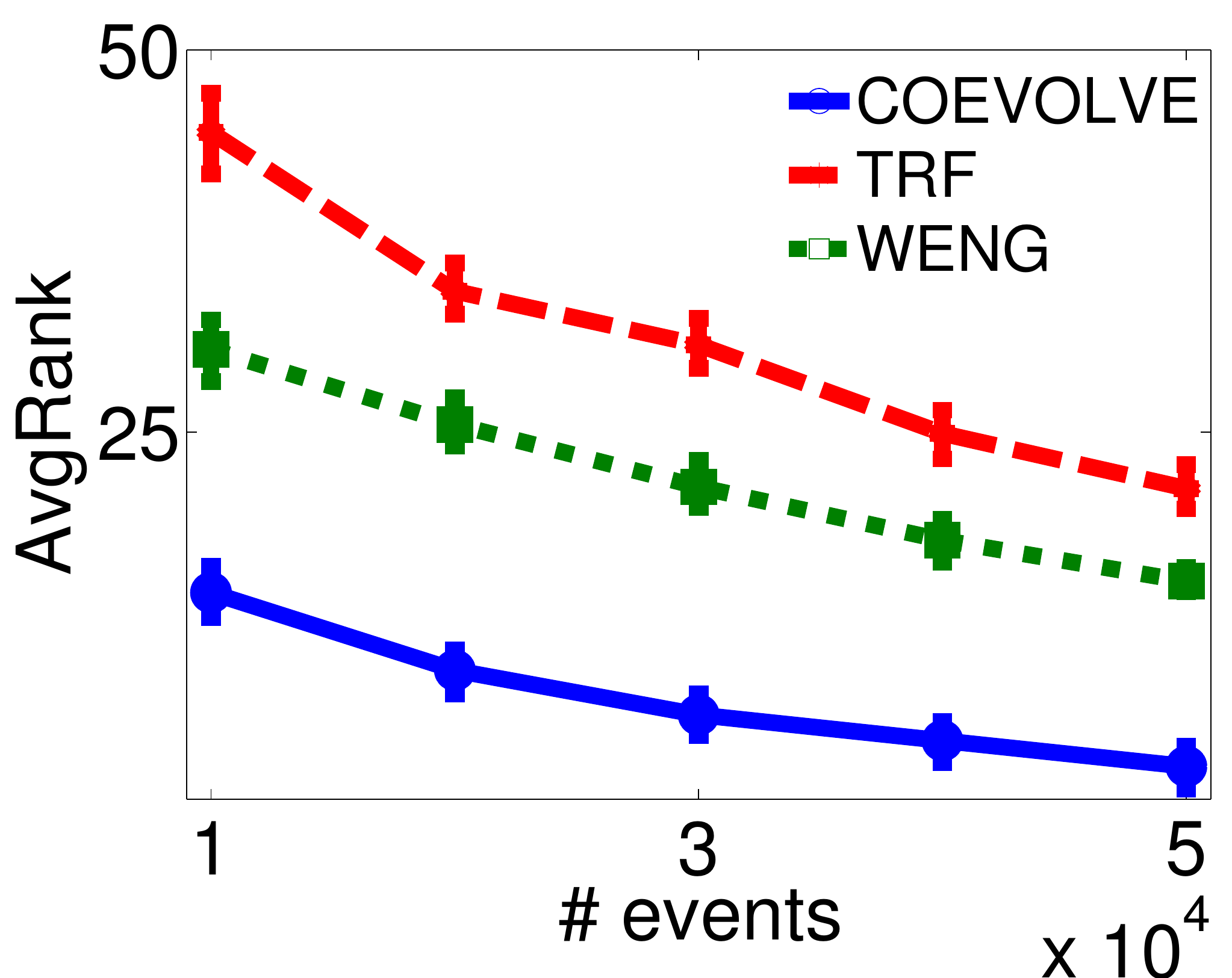} \hspace{2mm} &
          \includegraphics[width=0.20\textwidth]{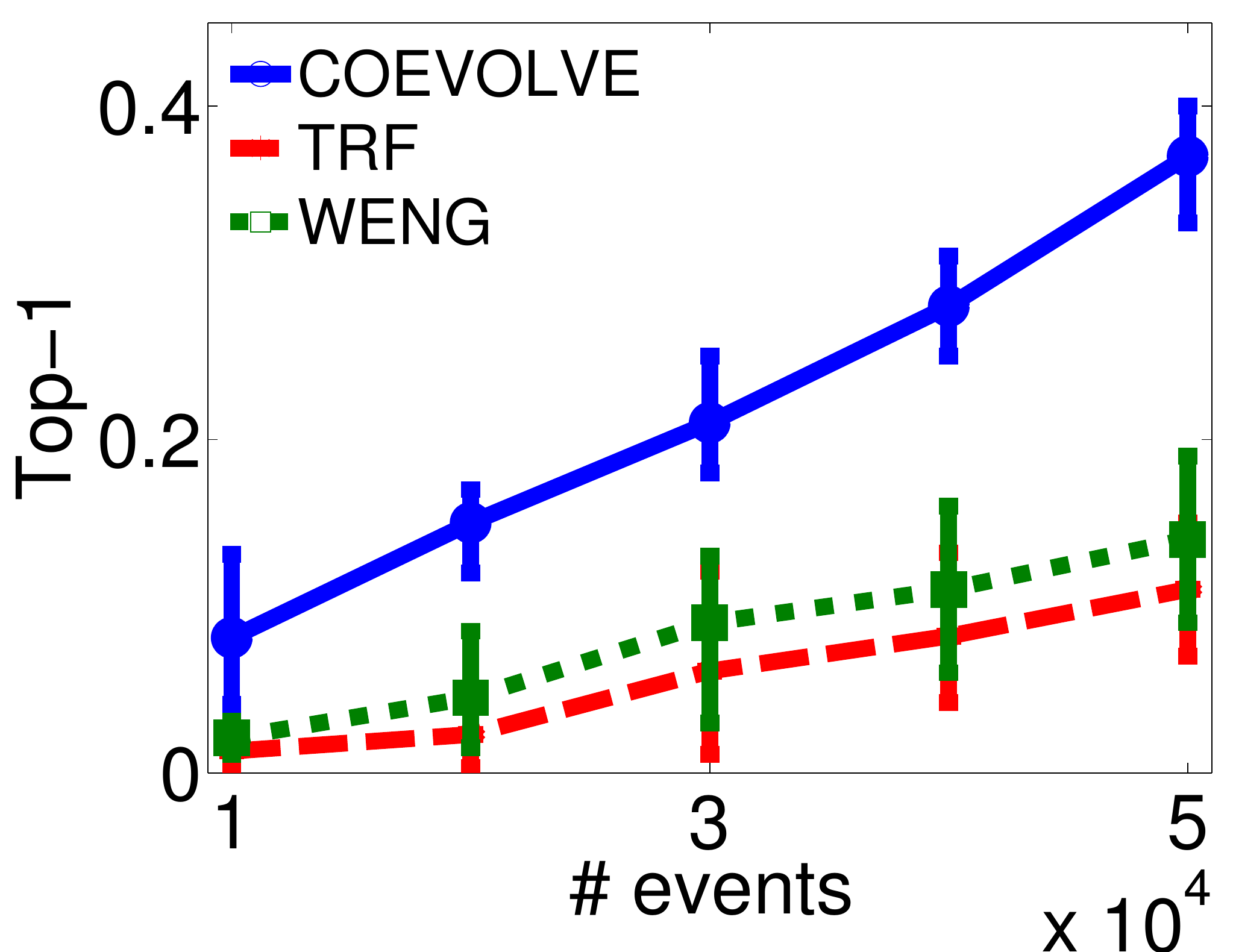}  \hspace{2mm} &
           \includegraphics[width=0.20\textwidth]{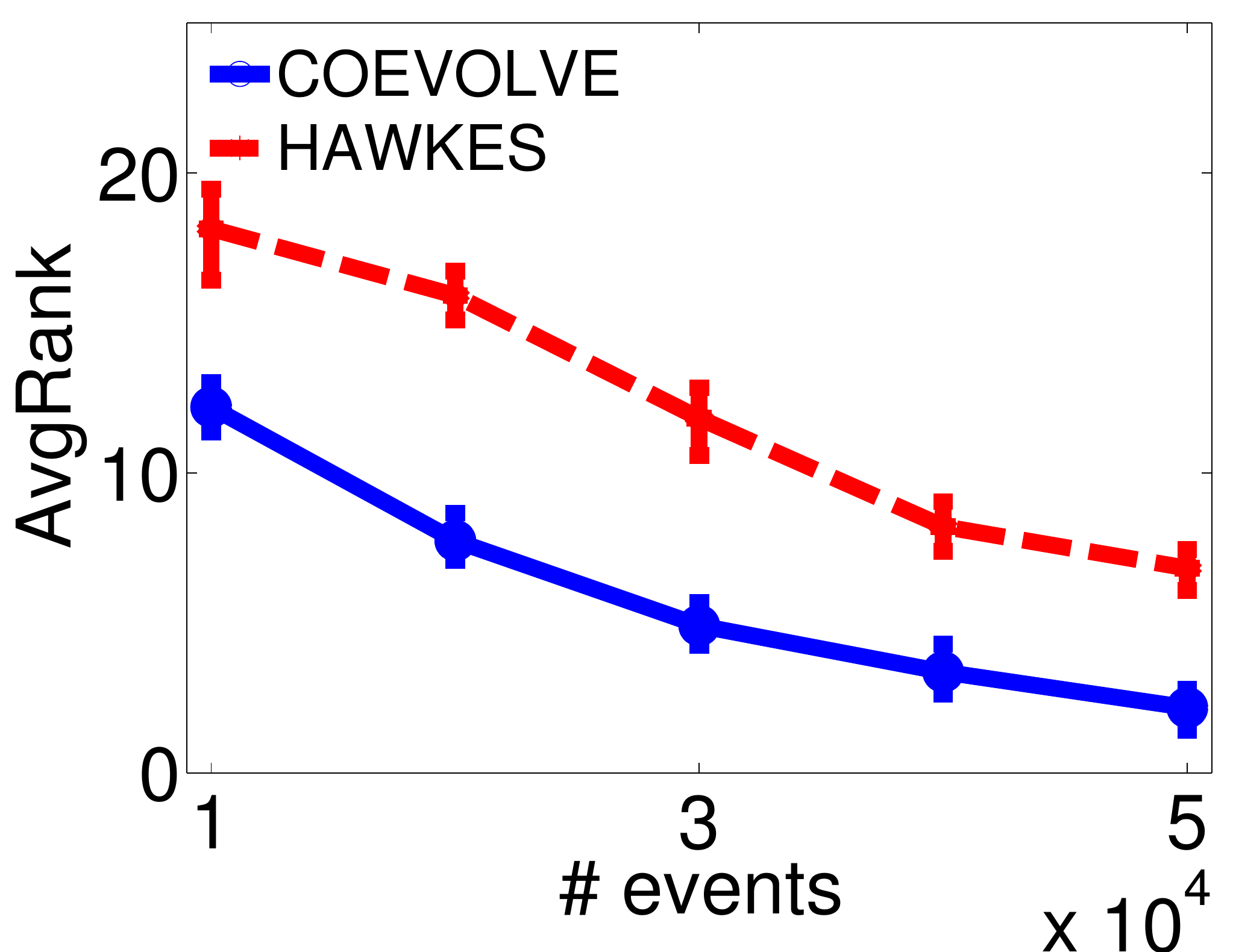} \hspace{2mm}  &
          \includegraphics[width=0.20\textwidth]{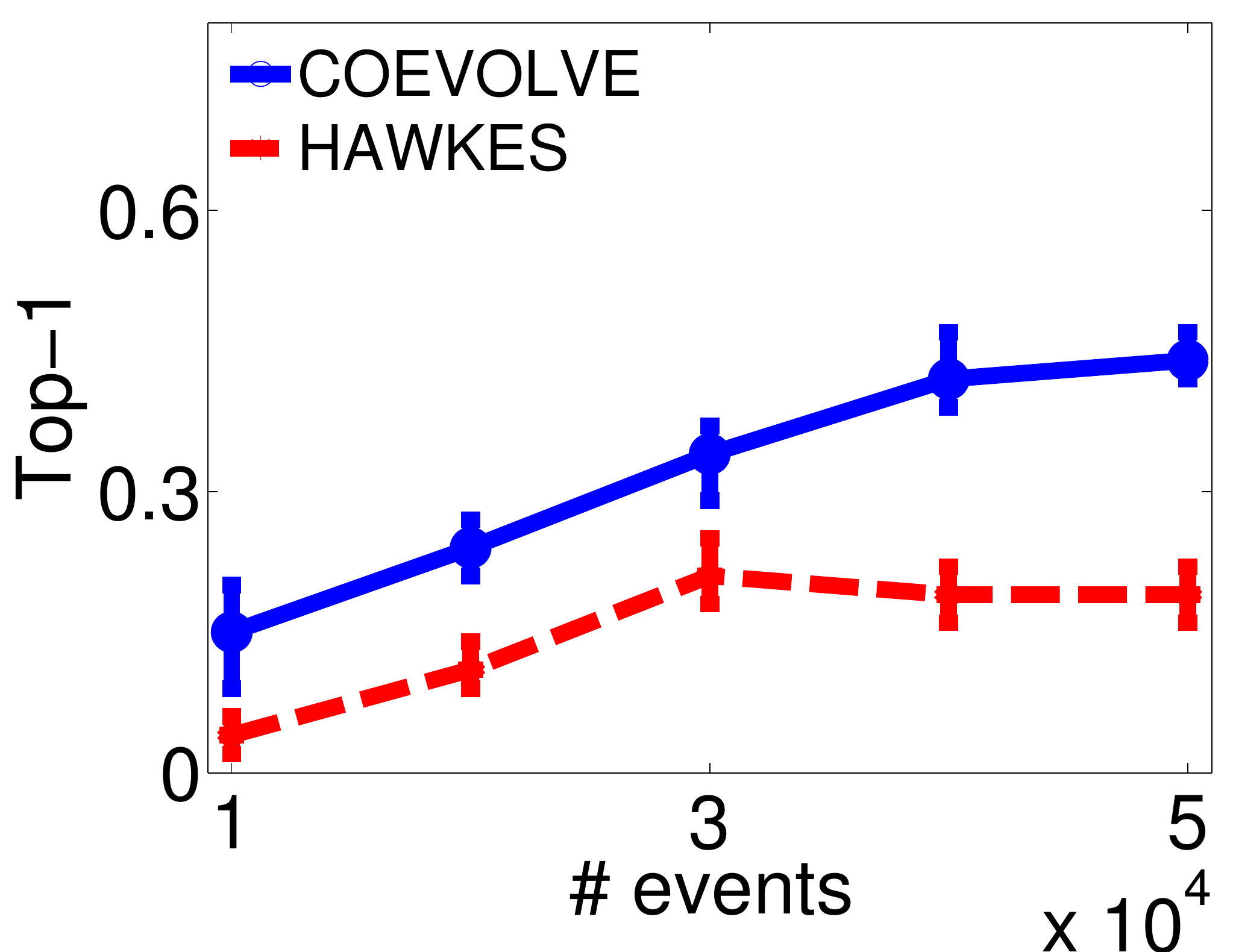} \\
          (a) Links: AR &  (b) Links: Top-1 & (c) Activity: AR & Activity: Top-1 \\
        \end{tabular}
        \caption{Prediction performance for a 400-node synthetic network by means of average rank (AR) and success probability that the true (test) events rank among the top-1 events (Top-1).}
        \label{fig:prediction-synthetic}
\end{figure*}

\section{Experiments on Coevolution and Prediction on Real Data}
\label{sec:real}
In this section, we validate our model using a large Twitter dataset containing nearly 550,000 tweet, retweet and link events 
from more than  280,000 users~\cite{AntDov13}.
We will show that our model can capture the co-evolutionary dynamics and, by doing so, it predicts retweet and link creation events more accurately than several alternatives. 
%

\subsection{Dataset Description \& Experimental Setup}
We use a dataset that contains both link events as well as tweets/retweets from millions of Twitter users~\cite{AntDov13}.
In particular, the dataset contains data from three sets of users in 20 days; nearly 8 million tweet, retweet, and link events by more than 6.5 million users.
The first set of users (8,779 users) are source nodes $s$, for whom all their tweet times were collected. The second set of users (77,200 users) are the follo\-wers of the first set of users,
for whom all their retweet times (and source identities) were collected. The third set of users (6,546,650 users) are the users that start following at least one user in the first 
set during the recording period, for whom all the link times were collected.

In our experiments, we focus on all events (and users) during a 10-day period (Sep 21 2012 - 30 Sep 2012) and used the information before Sep 21 to construct the initial social network (original links between users).  We model the co-evolution in the second 10-day period using our framework.
More specifically, in the coevolution modeling, we have 5,567 users in the first layer who post 221,201 tweets. In the second layer 101,465 retweets are generated by the whole 77,200 users in that interval. And in the third layer we have 198,518 users who create 219,134 links to 1978 users (out of 5567) in the first layer. 

We split events into a training set (covering 85\% of the retweet and link events) and a test set (covering the remaining 15\%) according to time, \ie, all events 
in the training set occur earlier than those in the test set. We then use our model estimation procedure to fit the parameters from an increasing proportion of events from the training data. 
%


%
%
%
\begin{figure} [t]
        \centering
        \begin{tabular}{c c c c}
        \hspace{-4mm}
         \includegraphics[width=0.23\textwidth]{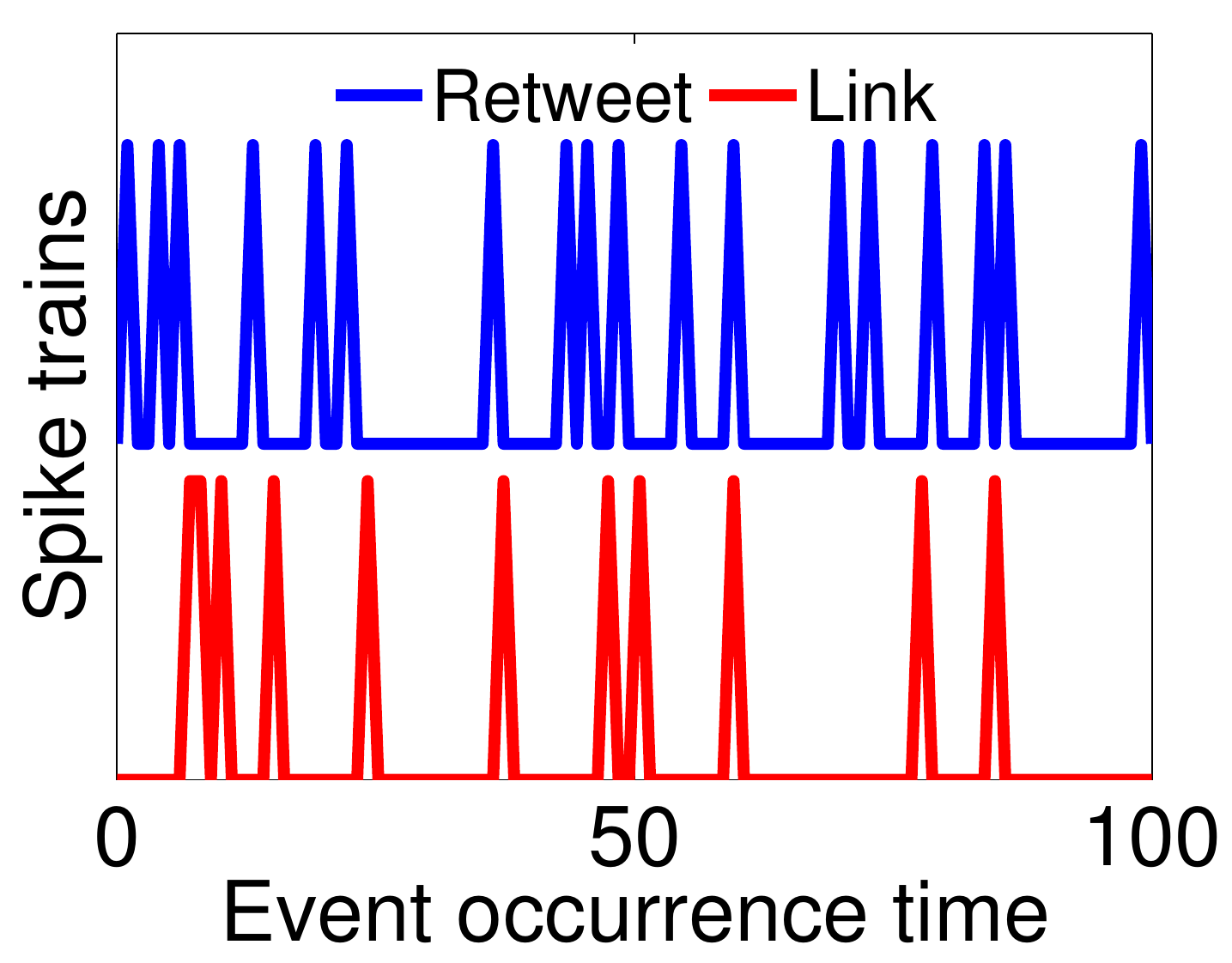} &
         \hspace{-6mm}
          \includegraphics[width=0.24\textwidth]{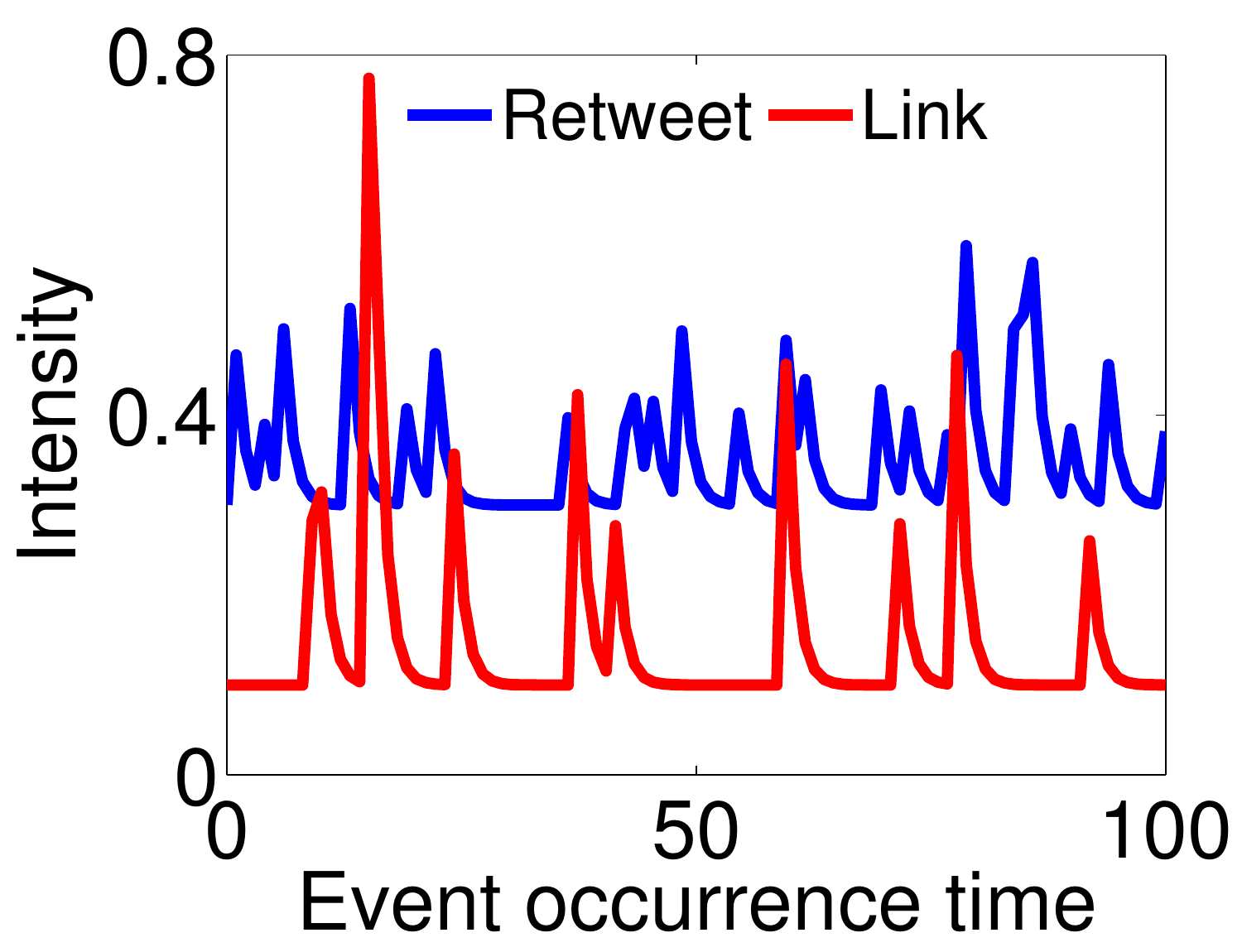} &
          \hspace{2mm}
          \includegraphics[width=0.23\textwidth]{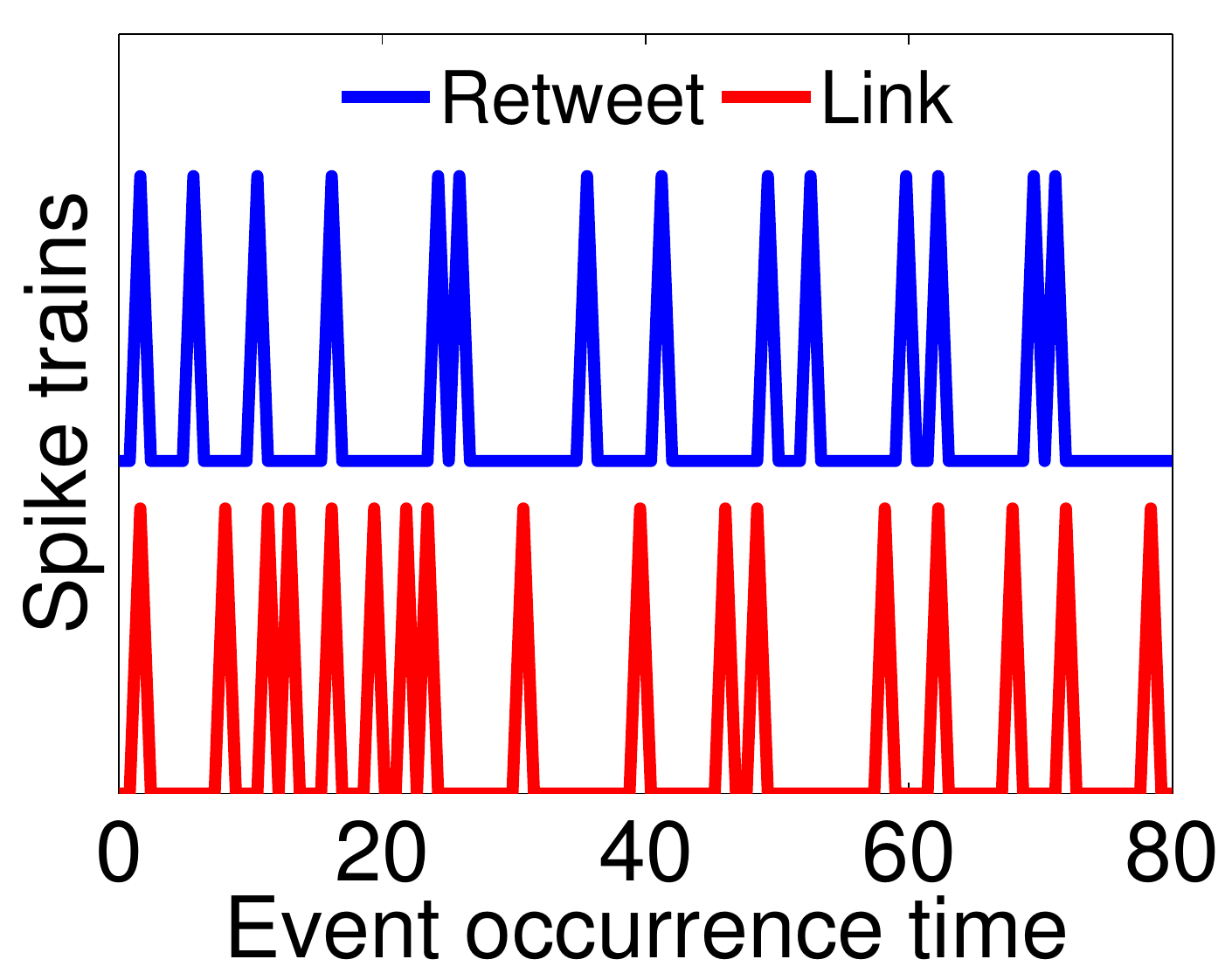} &
                   \hspace{-6mm}
          \includegraphics[width=0.24\textwidth]{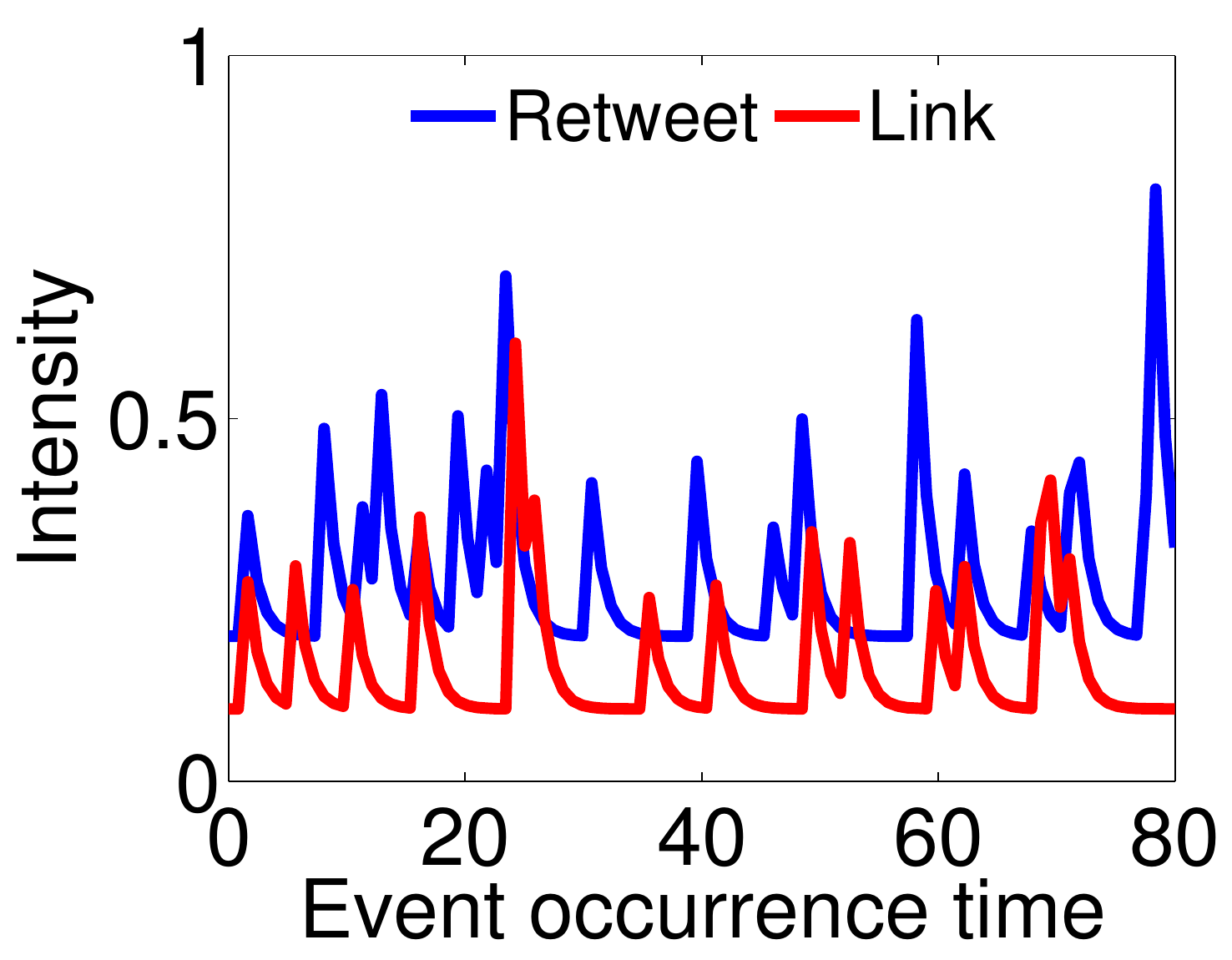}
        \vspace{3mm}
           \\
           (a) & (b) & (c) & (d) \\
        \hspace{-4mm}
         \includegraphics[width=0.23\textwidth]{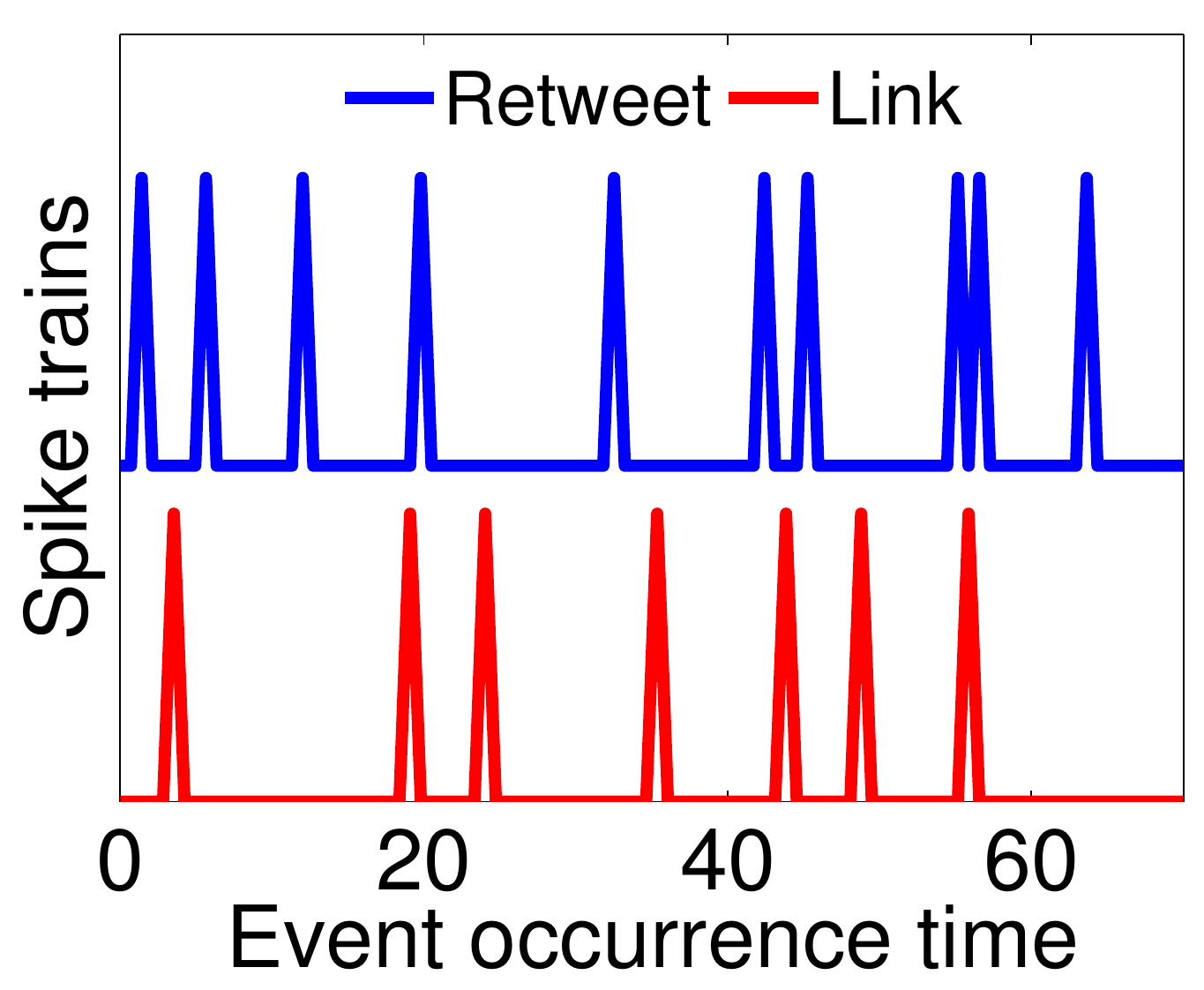} &
         \hspace{-5mm}
          \includegraphics[width=0.24\textwidth]{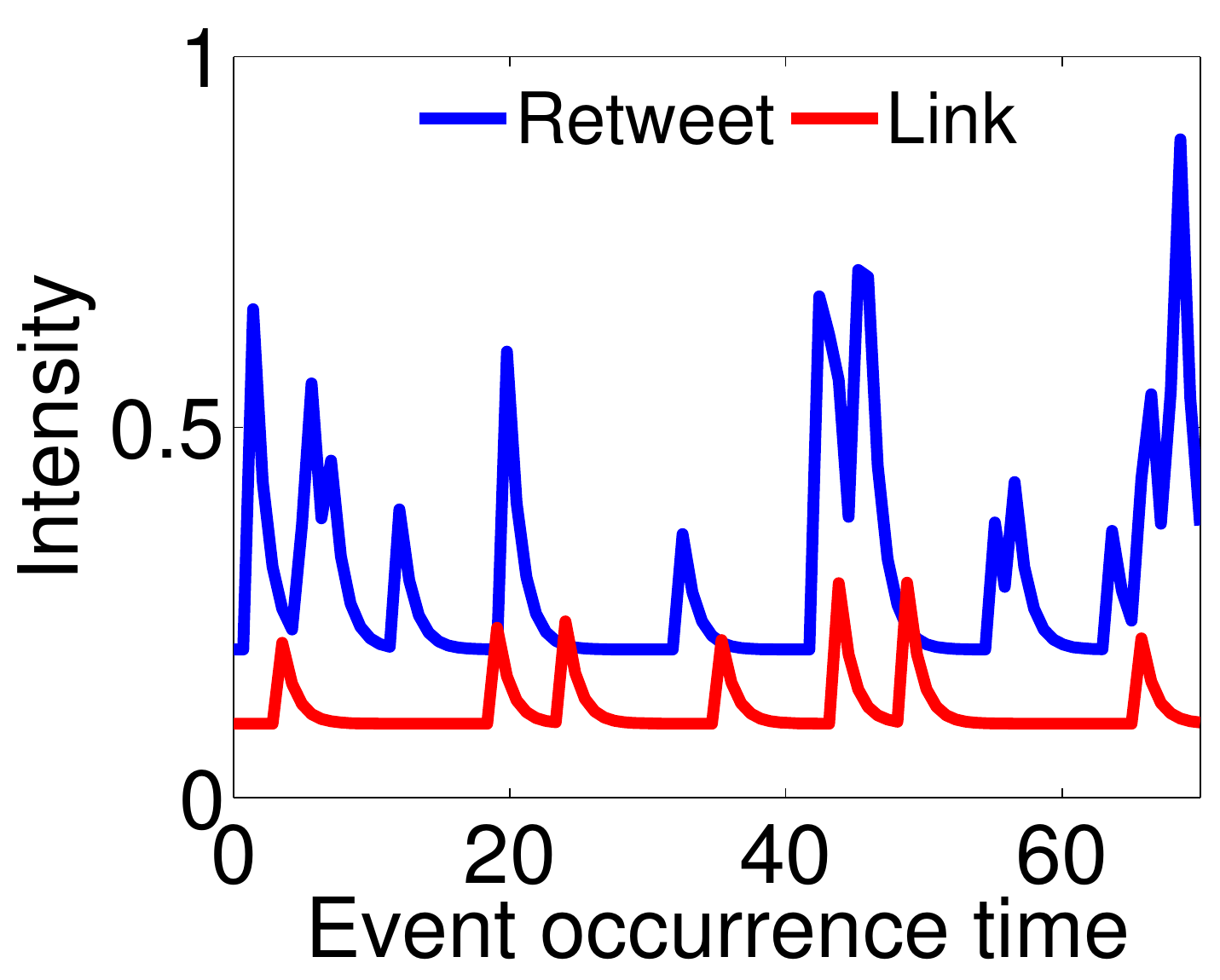} &
          \hspace{2mm}
          \includegraphics[width=0.23\textwidth]{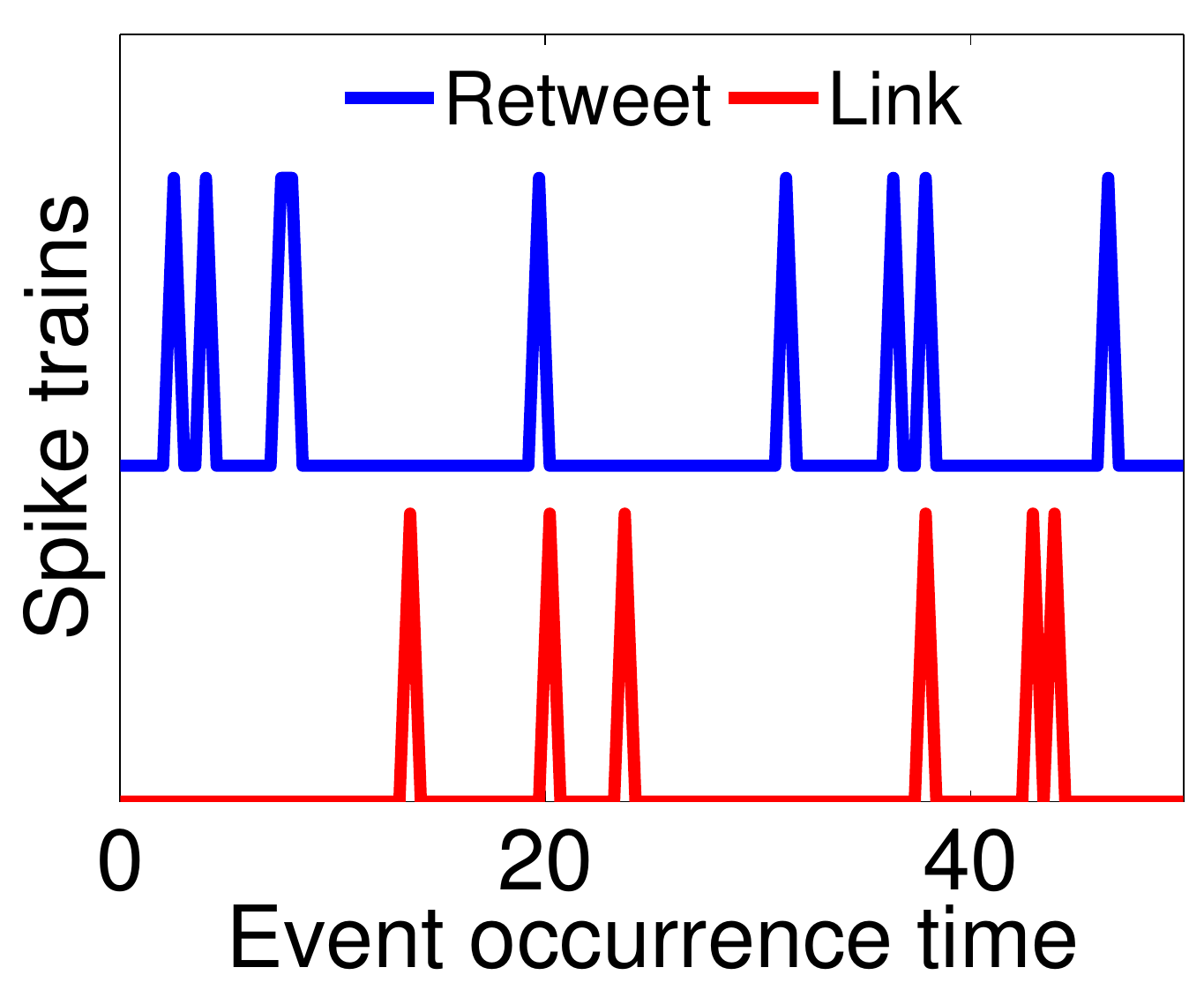} &
                   \hspace{-5mm}
          \includegraphics[width=0.24\textwidth]{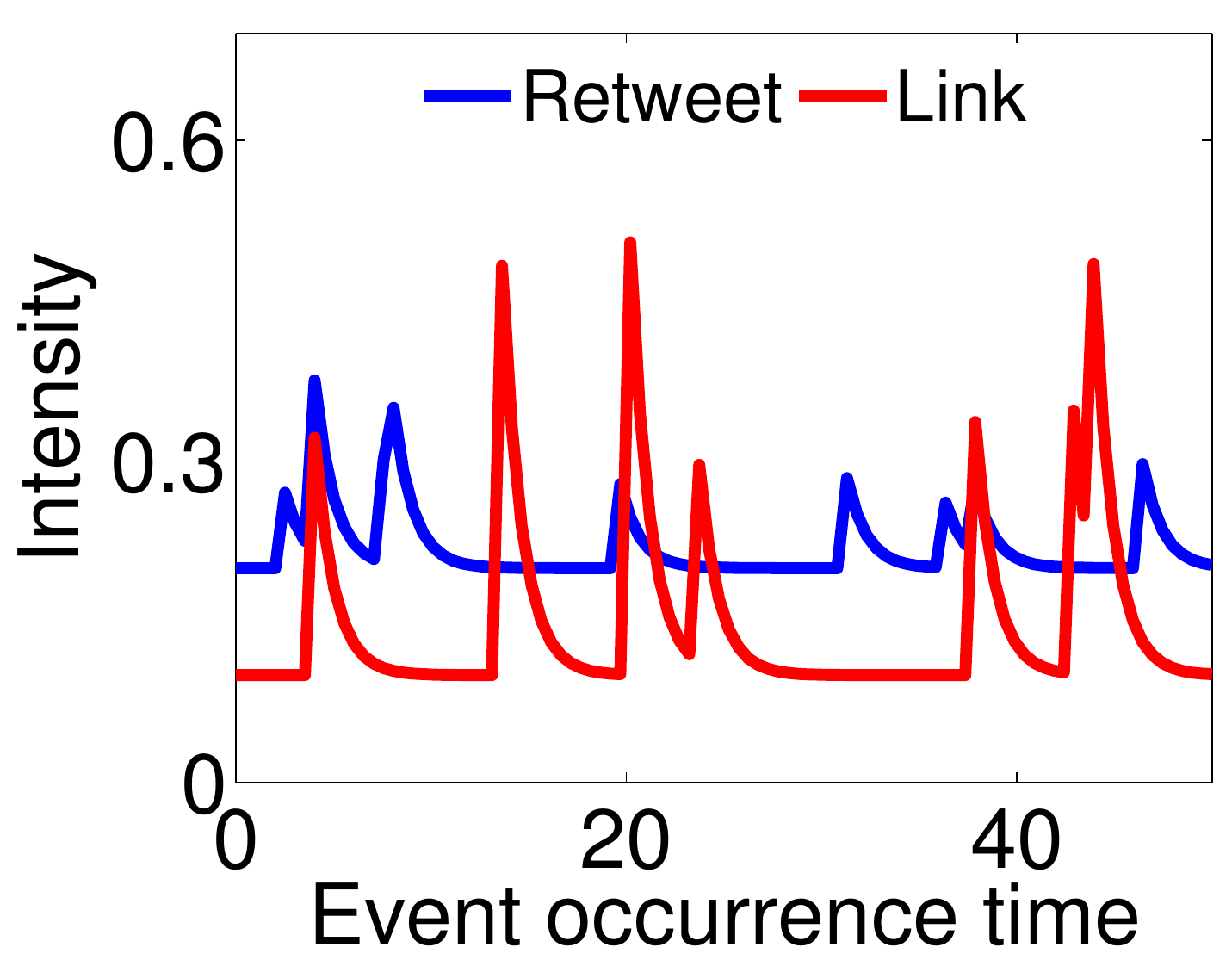}
          \vspace{3mm}
           \\
           (e) & (f) & (g) & (h) \\
        \end{tabular}
        \caption{Link and retweet behavior of 4 typical users in the real-world dataset. Panels (a,c,e,g) show the spike trains of link and retweet events and Panels (b,d,f,h) show the estimated link and retweet 
        intensities}
        \label{fig:coevolution-behavior-real}
\end{figure}
\begin{figure} [t]
        \centering
        \begin{tabular}{cccc}
          \includegraphics[width=0.24\textwidth]{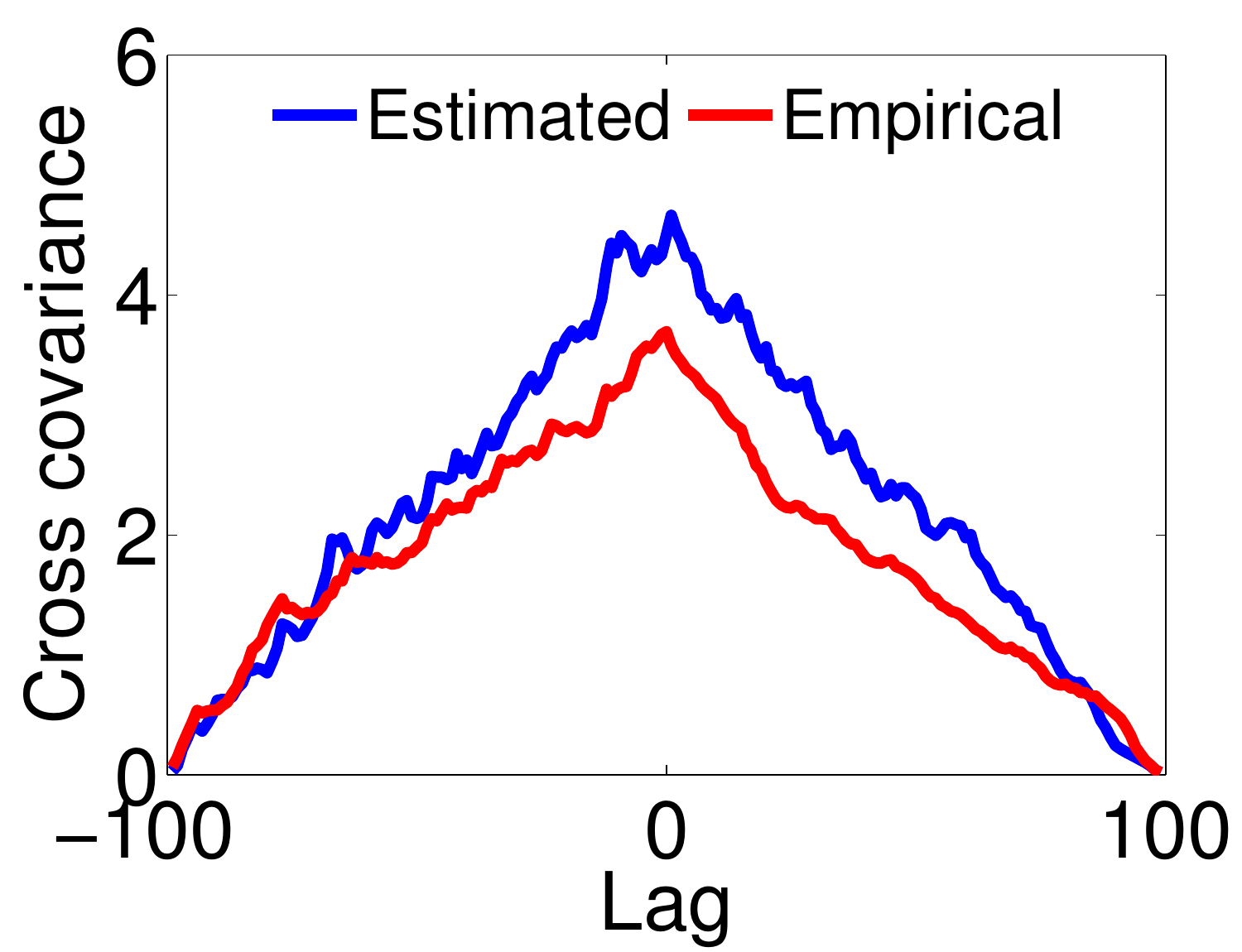} \hspace{-3mm} &
          \includegraphics[width=0.24\textwidth]{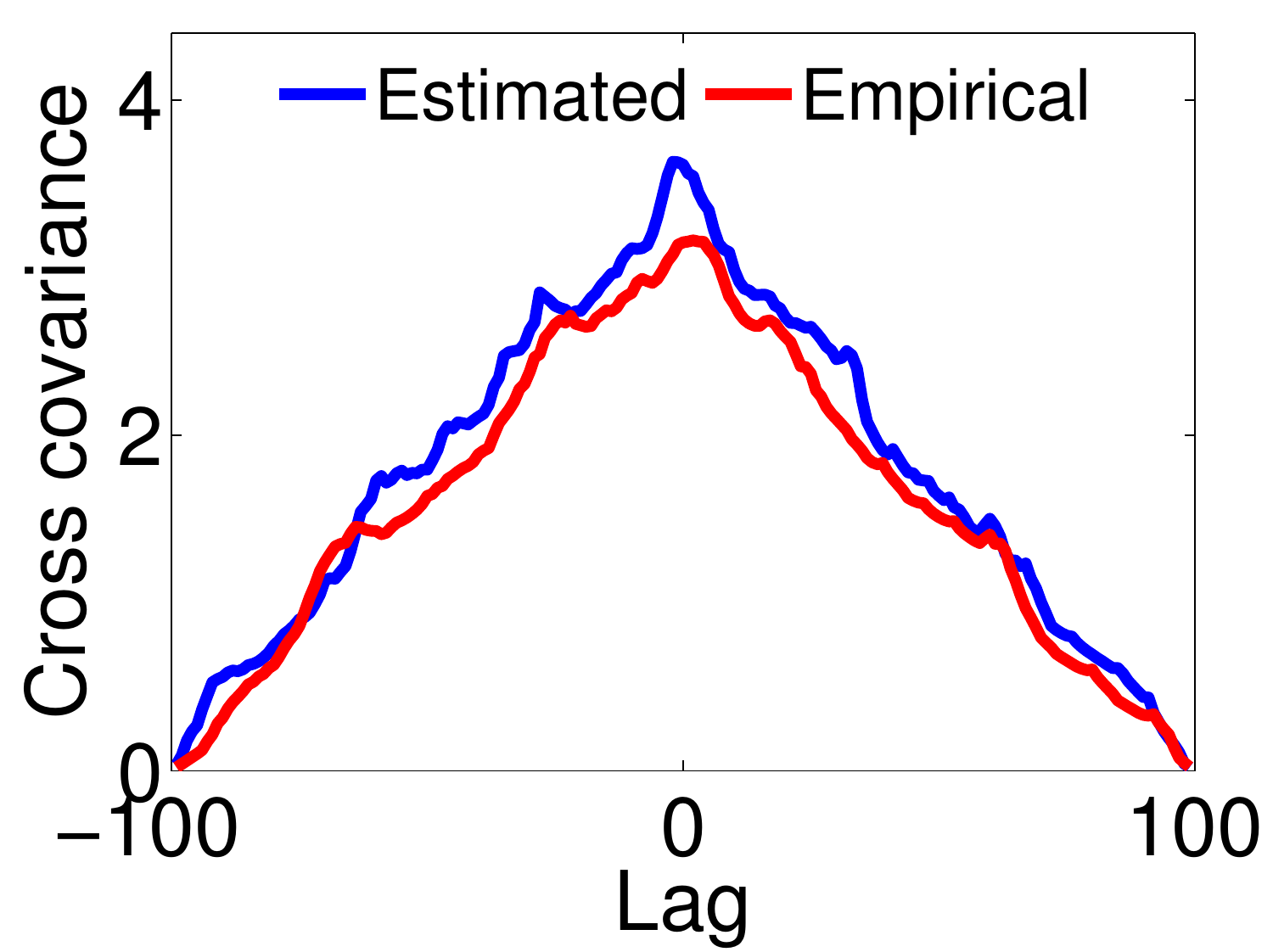}  \hspace{-3mm} &
           \includegraphics[width=0.24\textwidth]{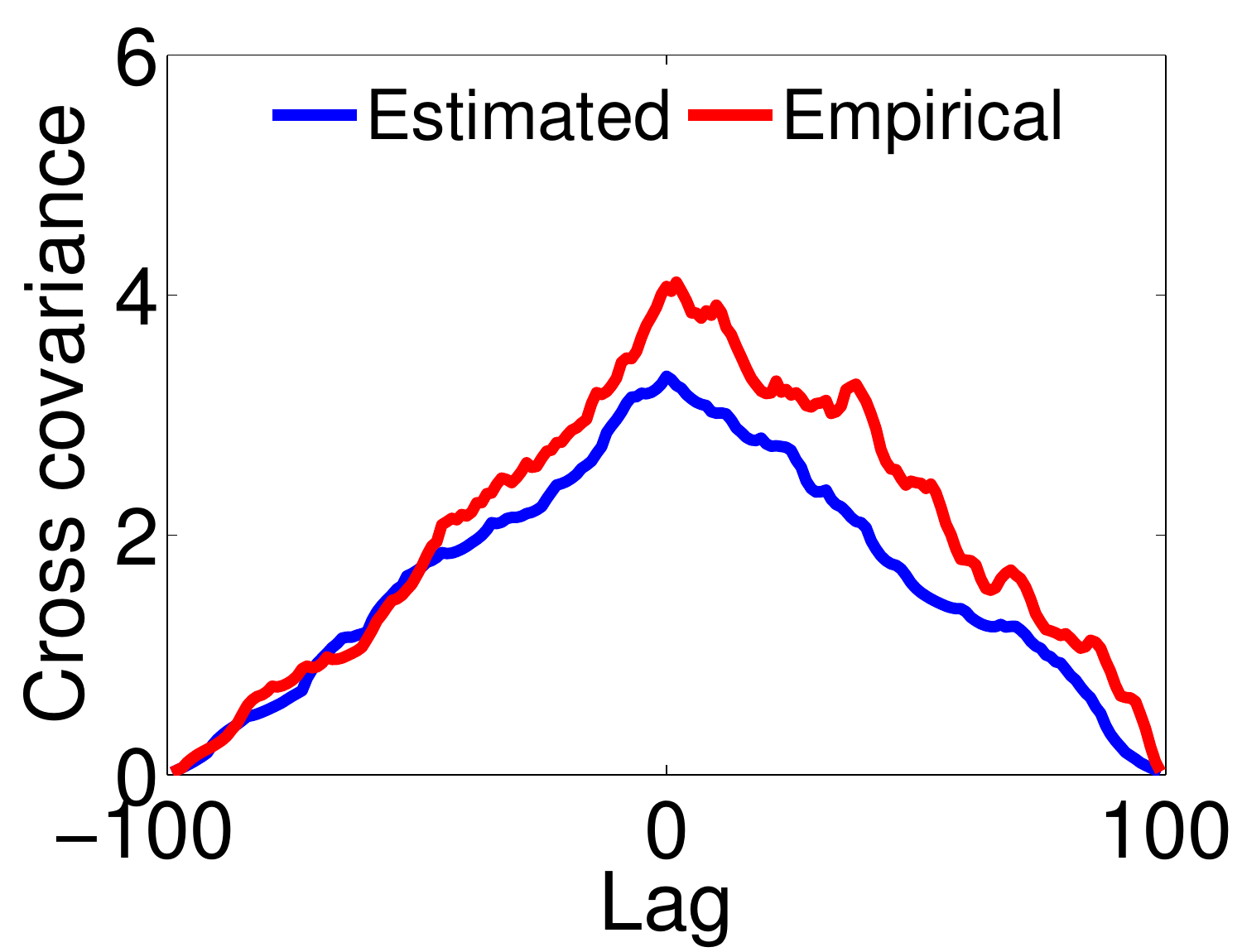} \hspace{-3mm}  &
          \includegraphics[width=0.24\textwidth]{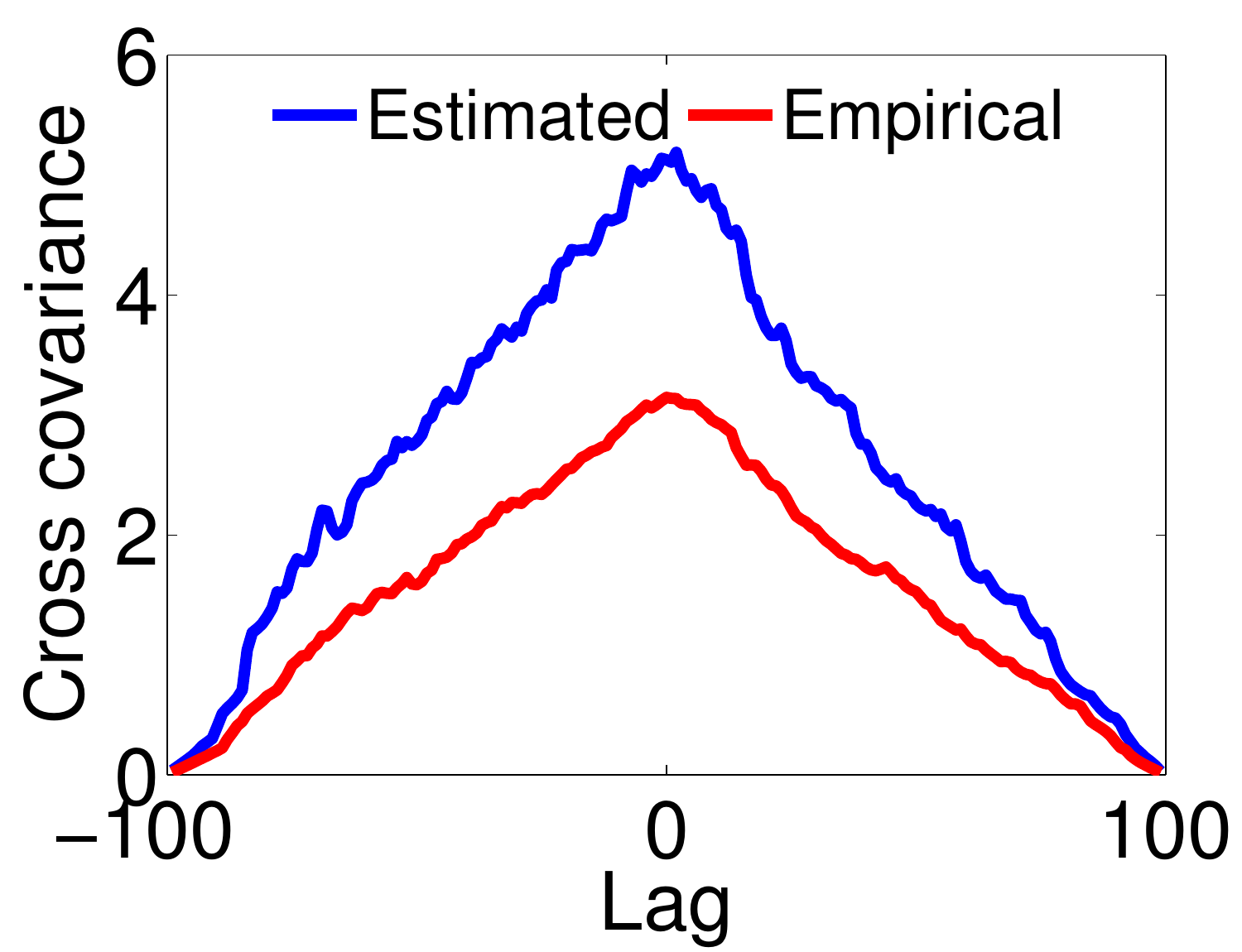} \\    
          (a) & (b) & (c) & (d) \\   
        \end{tabular}
        \caption{Empirical and simulated cross covariance of link and retweet intensities for 4 typical users. }
        \label{fig:crosscovariance-realdata}
\end{figure}

\subsection{Retweet and Link Coevolution} Figures~\ref{fig:coevolution-behavior-real} visualizes the retweet and link events, aggregated across different targets, and the corresponding intensities given by 
our trained model for four source nodes, picked at random.
Here, it is already apparent that retweets (of his posts) and link creations (to him) are clustered in time and often follow each other, and our fitted model intensities successfully track such behavior.
Further, Figure~\ref{fig:crosscovariance-realdata} compares the cross-covariance between the empirical retweet and link creation intensities and between the retweet and link creation intensities
given by our trained model, computed across multiple realizations, for the same nodes. 
For all nodes, the similarity between both cross-covariances is striking and both has their peak around 0, $\ie$, retweets and link creations are highly correlated and co-evolve over time. 
For ease of exposition, as in Section~\ref{sec:properties}, we illustrated co-evolution using four nodes, however, we found consistent results across nodes.

To further verify that our model can capture the coevolution, we compute the average value of the empirical cross covariance function, denoted by $m_{cc}$, per user. 
Intuitively, one could expect that our model estimation method should assign higher $\alpha$ and/or $\beta$ values to users with high $m_{cc}$. Figure~\ref{fig:cross-vs-params} confirms
this intuition on 1,000 users, picked at random. Whenever a user has high $\alpha$ and/or $\beta$ value, she exhibits a high cross covariance between her created links and retweets.

\begin{figure} [t]
        \centering
        \begin{tabular}{c}
         \includegraphics[width=0.25\textwidth]{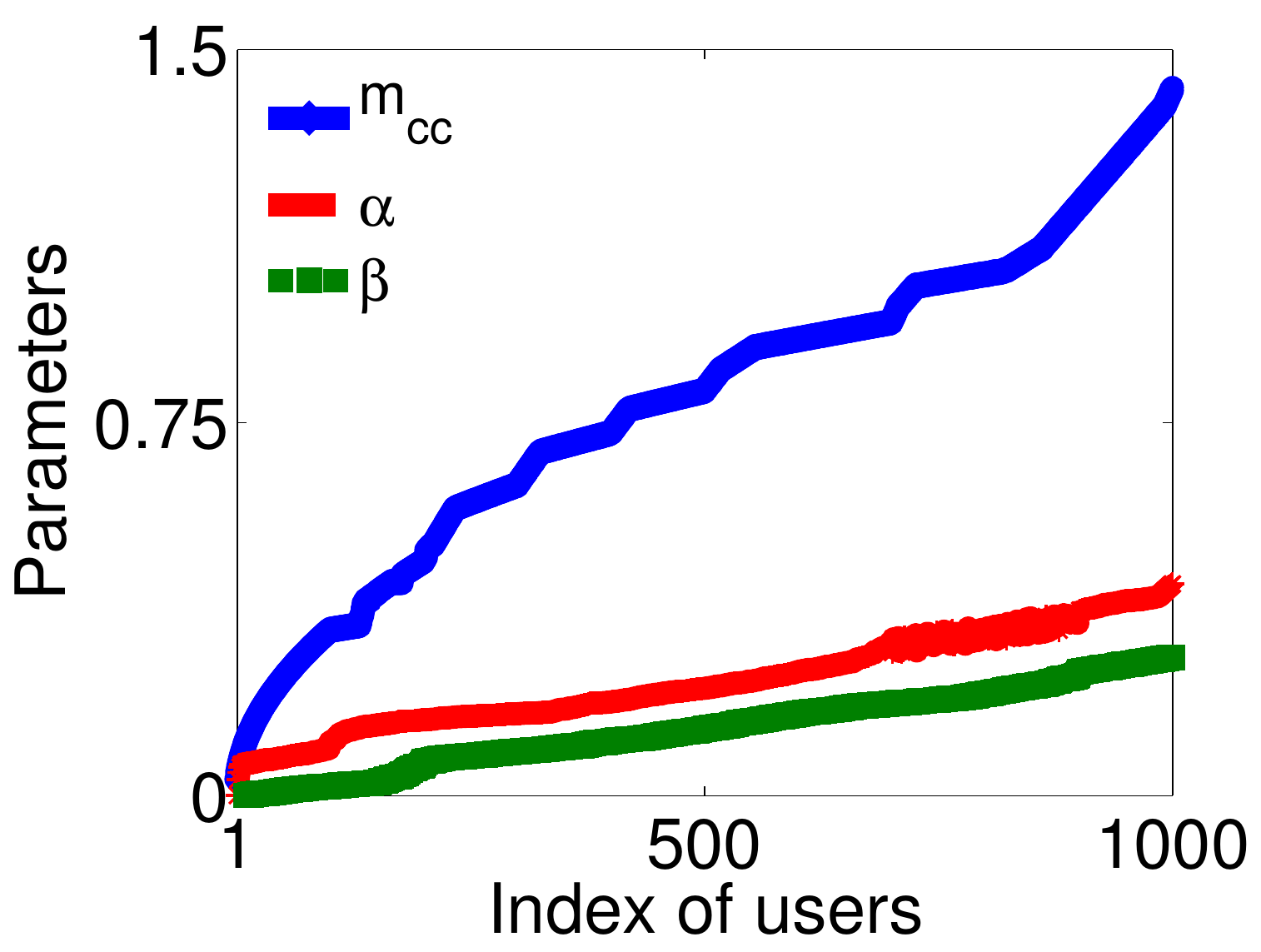}      
        \end{tabular}
        \caption{Empirical cross covariance and learned model parameters for 1,000 users, picked at random}
        \label{fig:cross-vs-params}
\end{figure}

\subsection{Link prediction}
We use our model to predict the identity of the source for each test link event, given the historical (link and retweet) events before the time of the prediction, and compare its performance with the same 
two state of the art methods as in the synthetic experiments, TRF~\cite{AntDov13} and WENG~\cite{WenRatPerGonCasBonSchMenFla13}. 
%
%
%

We evaluate the performance by computing the pro\-ba\-bility of all potential links using different methods, and then compute (i) the average rank of all true (test) events (AvgRank) and, (ii) the success probability 
(SP) that the true (test) events rank among the top-1 potential events at each test time (Top-1).
We summarize the results in Figure~\ref{fig:prediction-realdata}(a-b), where we consider an increasing number of training retweet/tweet events.
Our model outperforms TRF and WENG consistently. For example, for $8 \cdot 10^{4}$ training events, our model achieves a SP $2.5$x times larger than TRF and WENG.

\subsection{Activity prediction}
We use our model to predict the identity of the node that generates each test diffusion event, given the historical events before the time of the prediction, and compare its performance with a baseline consisting of 
a Hawkes process without network evolution. For the Hawkes baseline, we take a snapshot of the network right before the prediction time, and use all historical retweeting events to fit the model. 
Here, we evaluate the performance the via the same two measures as in the link prediction task and summarize the results in Figure~\ref{fig:prediction-realdata}(c-d) against an increasing number of 
training events.
The results show that, by modeling the co-evolutionary dynamics, our model performs significantly better than the baseline.
\begin{figure} [t]
        \centering
        \begin{tabular}{cccc}
          \includegraphics[width=0.23\textwidth]{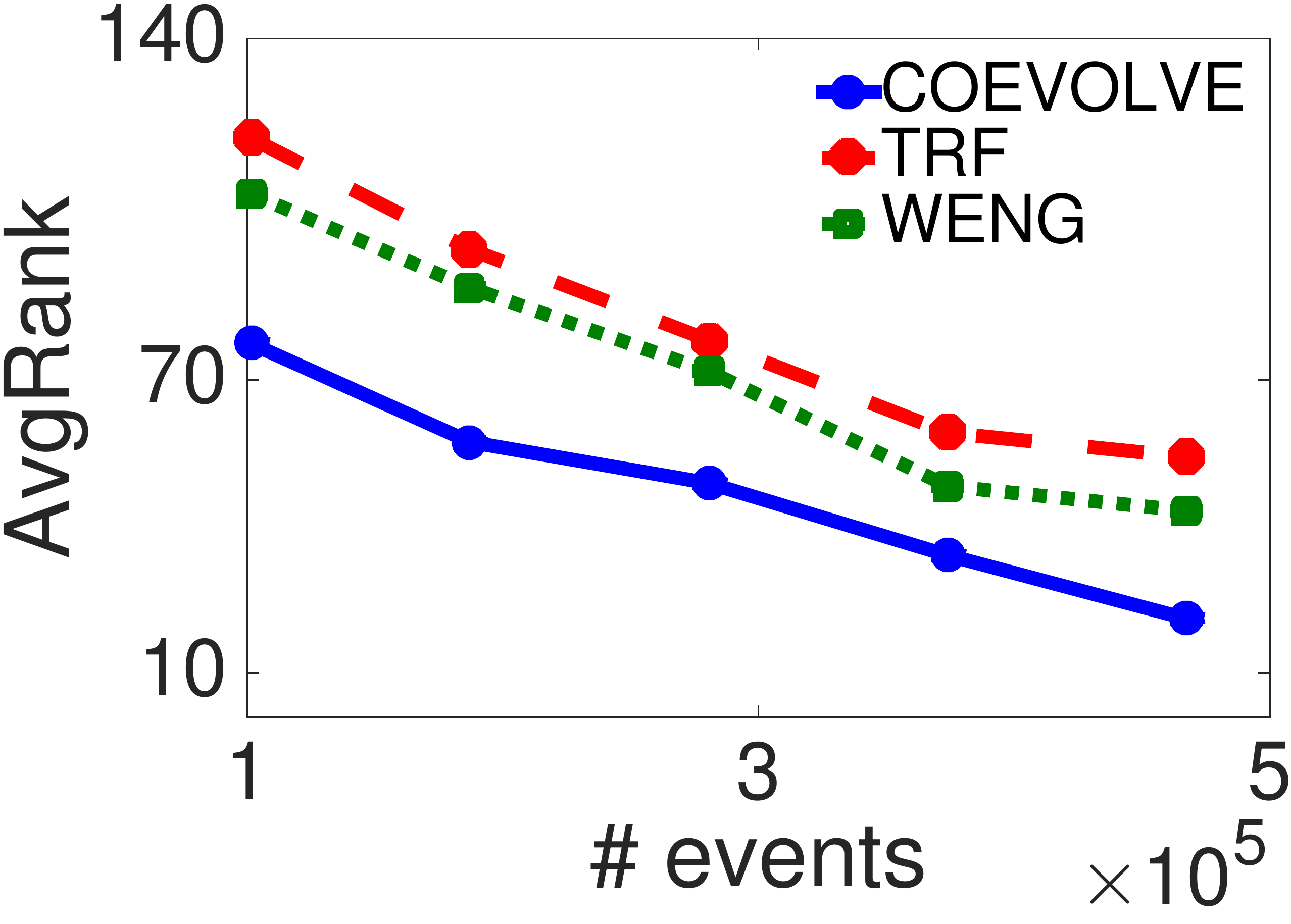} \hspace{-2mm} &
          \includegraphics[width=0.23\textwidth]{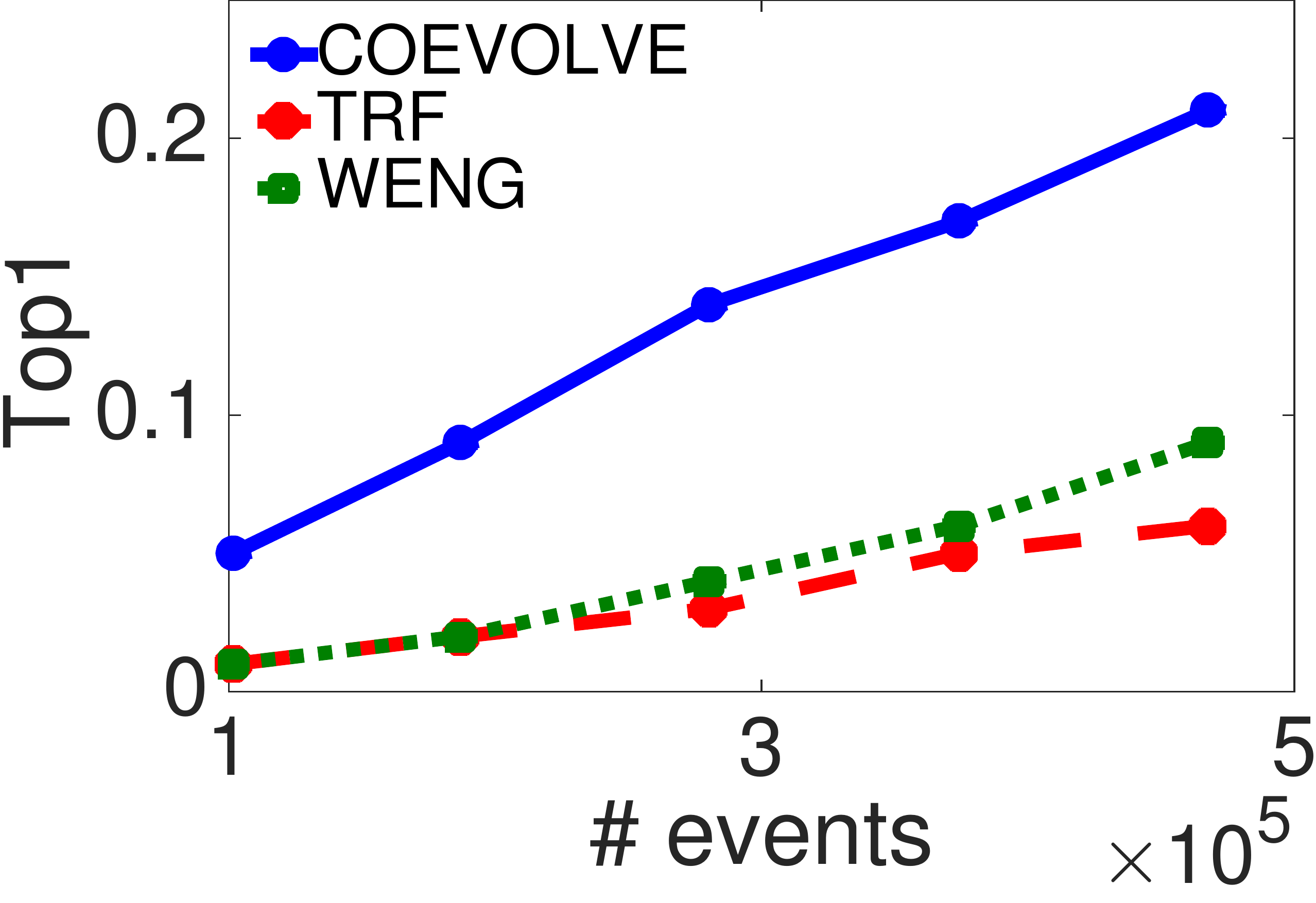}  \hspace{-2mm} &
           \includegraphics[width=0.23\textwidth]{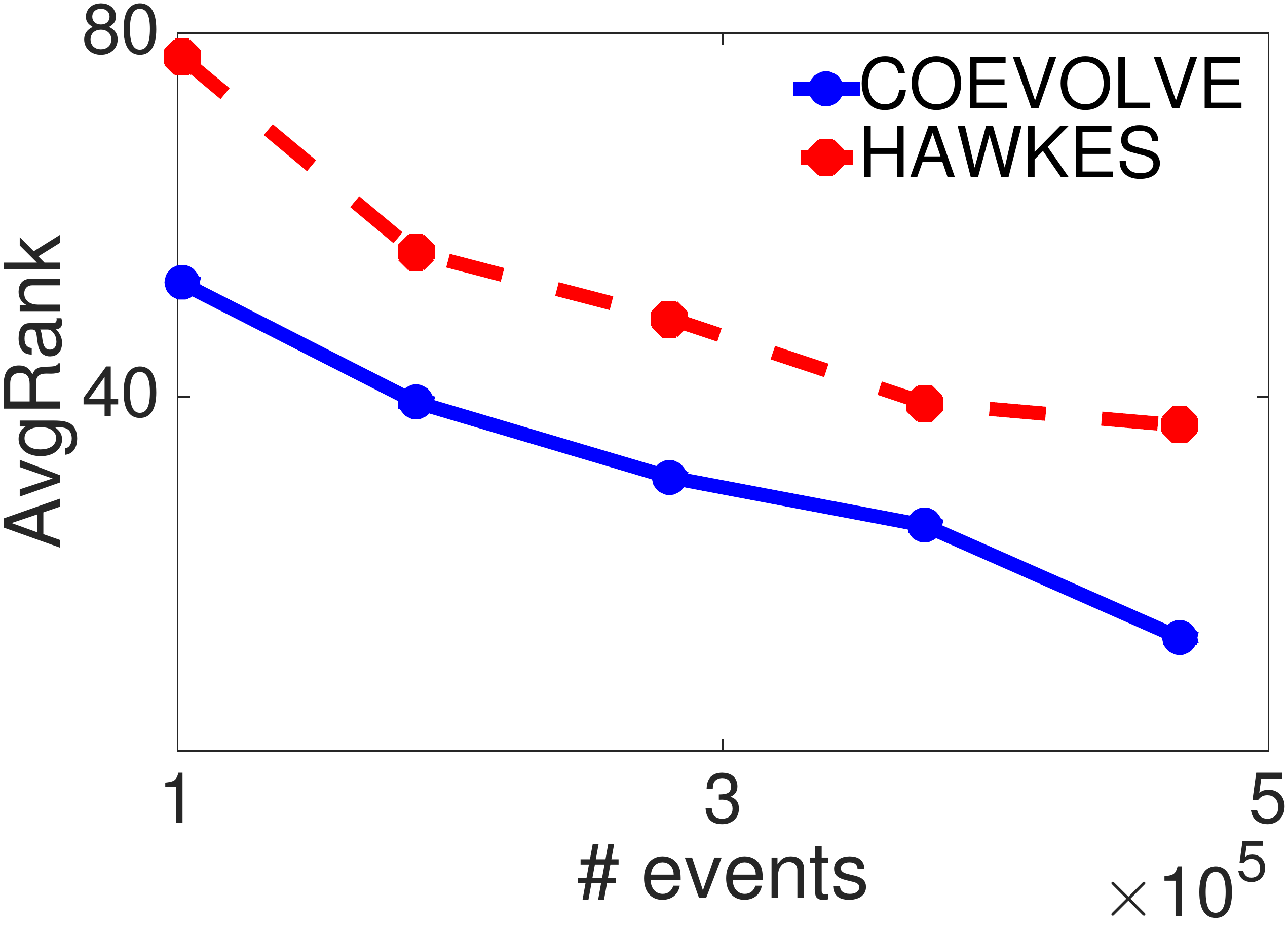} \hspace{-2mm}  &
          \includegraphics[width=0.23\textwidth]{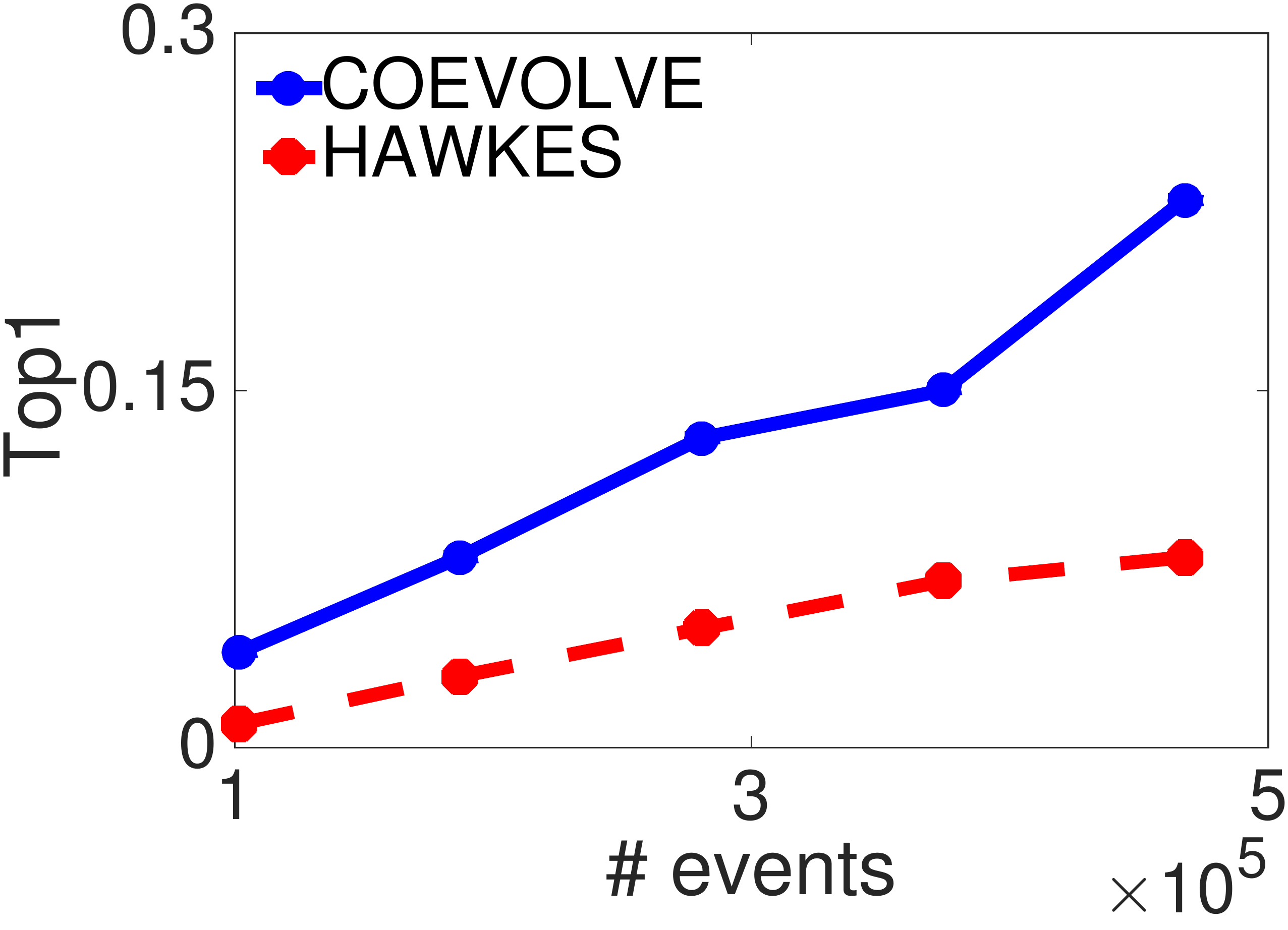} \\
          (a) Links: AR &  (b) Links: Top-1 & (c) Activity: AR & Activity: Top-1 \\         
        \end{tabular}
        \caption{Prediction performance in the Twitter dataset by means of average rank (AR) and success probability that the true (test) events rank among the top-1 events (Top-1).}
        \label{fig:prediction-realdata}
\end{figure}

\subsection{Model Checking}
Given all the subsequent event times generated using a Hawkes process, \ie, $t_i$ and $t_{i+1}$, according to the time changing theorem~\cite{DalVer2007}, the intensity integrals
$\int_{t_i}^{t_{i+1}} \lambda(t) \, dt$ should conform to the unit-rate exponential distribution.
Figure~\ref{fig:qqplots} presents the quantiles of the intensity integrals computed using intensities with the parameters estimated from the real Twitter data against the quantiles 
of the unit-rate exponential distribution. 
It clearly shows that the points approximately lie on the same line, giving empirical evidence that a Hawkes process is the right model to capture the real dynamics.

\begin{figure} [t]
	\centering
	\begin{tabular}{c c}
		\includegraphics[width=0.23\textwidth]{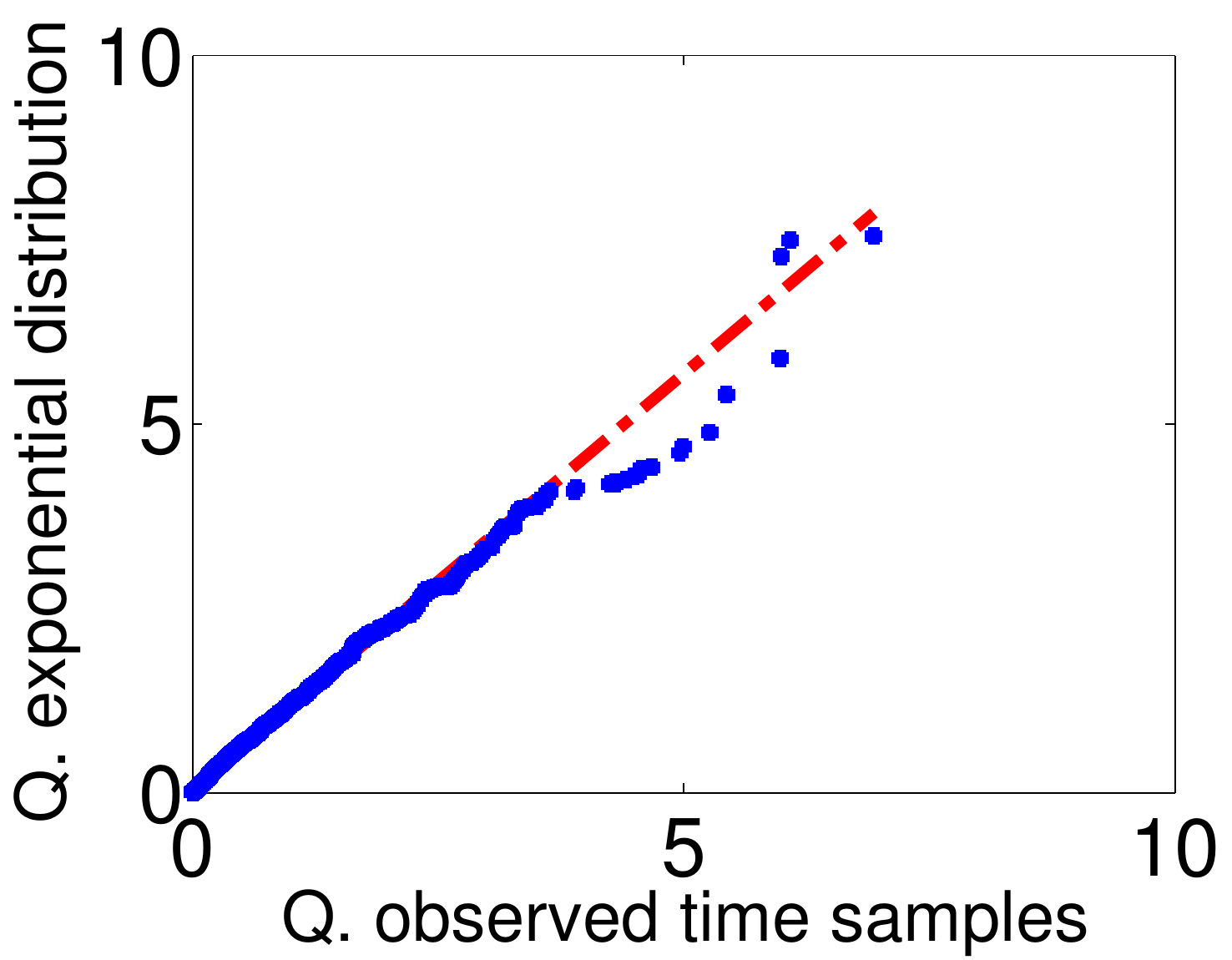} &
		\includegraphics[width=0.23\textwidth]{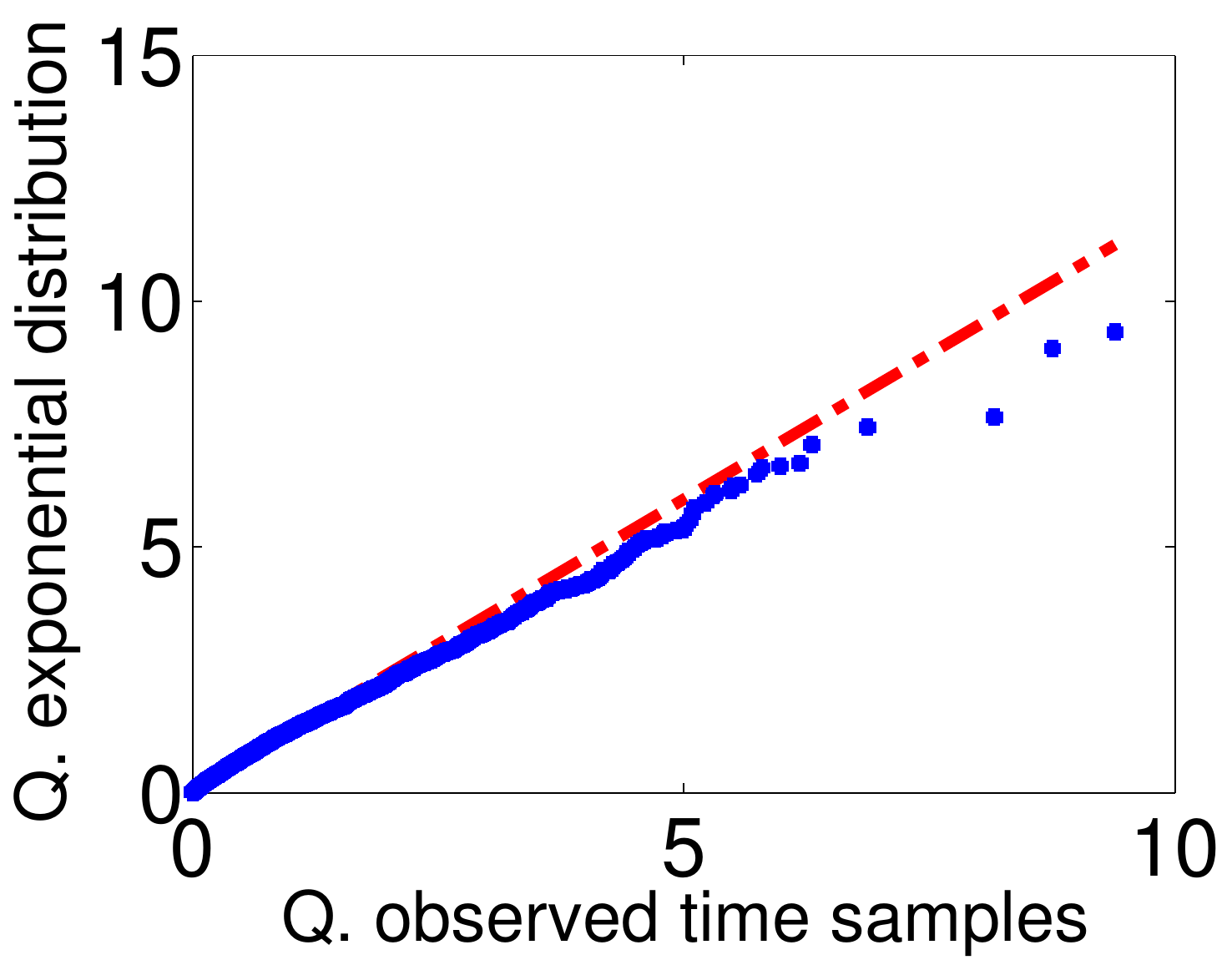} \\
		(a) Link process & (b) Retweet process   
	\end{tabular}
	\caption{Quantile plots of the intensity integrals from the real link and retweet event time}
	\label{fig:qqplots}
\end{figure}




\section{Related Work} 
\label{sec:related}
In this section, we survey related works in modeling temporal networks followed by a subsection on co-evolution dynamics. Next, we review the literature on information diffusion models. Finally, we conclude this section by works that are closely related and are developed for almost the same goal.

\xhdr{Temporal Networks}
Much effort has been devoted to modeling the evolution of social networks~\cite{phan2015natural, Doreian13, wang2011time, Newman10, Barrat08}. Of the proposed methods in characterizing link creation, triadic closure~\cite{Gra73} is a simple but powerful principle to model the evolution based on shared friends.
Modeling timing and rich features of social interactions has been attracting increasing interest in the social network modeling community~\cite{GolZheFieAir10}. However, most of these models use timing information as discrete indices. 
The dynamics of the resulting time-discretized model can be quite sensitive to the chosen discretization time steps;
Too coarse a discretization will miss important dynamic features of the process, and too fine a discretization will increase the computational and inference costs of the algorithms.
In contrast, the events we try to model tend to be asynchronous with a number of different time scales.
\cite{Heise89} used rule-based methods to model the evolution of the graph over time. \cite{Girvan02} analyzed community structure over time and \cite{LesBacKle09} studied the interaction of the friendship graph among group members and group growth.
Recently, \cite{SniLuc06} used a Cox-intensity Poisson model with exponential random graphs to model friendship dynamics. \cite{BraLerSni} extended this model to the temporal sequence of interactions that take place in the social network, but with insufficient model flexibility, 
and limited scalability. 
Modeling temporal dynamics of interactions in this way provides new opportunities for identifying network topology at multiple scales~\cite{GhoLer12} and for early detection of popular resources~\cite{HogLer12,LerGalVerSteetal11}. 
However, these works largely fail to model the interdependency between events generated by different users, which is one of the focuses of our proposed framework. 
Most of this line of work is summarized in a recent survey~\cite{Holme15}, with a short section devoted to point process based approaches.

\xhdr{Co-evolution Dynamics}
In machine learning and several other communities, both the dynamics on the network and the dynamics of the network have been extensively studied, and
combining the two is a natural next step. For example, \cite{bhattacharya2015analyzing} claimed that content generation in social networks is influenced not  just  by  their  personal features  like  age  and  gender,  but  also  by  their social network structure. Furthermore, research has been done to address the co-evolution problems,
for example, in the complex network literature, under the name of {\it adaptive system}~\cite{GroBla08,GroSay09,SayPesSchBusetal13}. 
The main premise is that the evolution of the topology depends on the dynamics of the nodes in the network, and a feedback loop can be created between the two, which allows dynamical exchange of information. 
It has been shown that adaptive networks are capable of self-organizing towards dynamically critical states, like phase transitions by the interplay between the two processes on different time scales~\cite{BorRoh00}. 
In a different context, epidemiologists have found that nodes may rewire their links to try to avoid contact with the infected ones~\cite{GroDliBla06,ZanRis08}. 
Co-evolutionary models have been also developed for collective opinion formation, investigating whether the coevolutionary dynamics will eventually lead to consensus or fragmentation of the population~\cite{ZscBohSeiHueetal12}. However, this line of research tends to be less data-driven.
Moreover, although the general nonlinear dynamic-system based methods usually address co-evolutionary phenomena that are macroscopic in nature, they lack the inference power of statistical 
generative models which are more adapted to teasing out microscopic details from the data. 
Finally, we would also like to mention a different line of research exemplified by the actor-oriented models developed by~\cite{Snijders14}, where a continuous-time Markov chain on the space of directed networks is specified by local node-centric probabilistic link change rules, and MCMC and method of moments are used for parameter estimation. Hawkes processes we used are generally non-Markovian and making use of event history far into the past.

\xhdr{Information Diffusion} The presence of timing information in event data and the ability to model such information bring up the interesting question of how to use the learned model for time-sensitive inference or decision making. Furthermore, the development of online social networks has attracted a lot of empirical studies of the online influence patterns of online communities~\cite{AgaLiuTanYu08, GruGuhLibTom04, SunCheLiuWanetal11, GuoBluWalHel15}, micro blogs~\cite{WenLimJiaHe10,BakHofMasWat11} and so on. However, these works usually consider only relatively simple models for the influence, which may not be very predictive. For more mathematically oriented works, based on information cascades (a special case of asynchronous event data) from social networks, discrete-time diffusion models have been fitted to the cascades~\cite{SaiNakKim08,GoyBonLak10} and used for decision making, such as identifying influencer~\cite{AgaLiuTanYu08}, maximizing information spread~\cite{KemKleTar03,RodSch12}, and 
marketing planing~\cite{RicDom02, DomRic01,BhaGoyLak12,BhaChaPar10}. 
Several recent experimental comparisons on both synthetic and real world data showed that continuous-time models yield significant improvement in settings such as recovering hidden diffusion network topologies from cascade data~\cite{DuSonYuaSmo12,GomBalSch11,YanZha13}, predicting the timings of future events~\cite{DuSonWooZha13,GomLesSch13}, finding source of information cascades~\cite{FarGomDuZamZhaSon15}. Besides this, Point process modeling of activity in network is becoming increasingly popular \cite{lian2015multitask, parikh2012conjoint, hall2014tracking}. 
These time-sensitive modeling and decision making problems can usually be framed into optimization problems and are usually difficult to solve. This brings up interesting optimization problems, such as efficient submodular function optimization with provable guarantees~\cite{GoyBonLakVen10,KemKleTar03}, sampling methods~\cite{Lian2014, gunawardana2011model} for inference and prediction, and convex framework proposed in~\cite{FarDuGomValZhaSon14} to make decisions to shape the activity to a variety of objectives. Furthermore, the high dimensional nature of modern event data makes the evaluation of objective function of the optimization problem even more expensive. Therefore, more 
accurate modeling and sophisticated algorithm needed to be designed to tackle the challenges posed by modern event data applications.

The work most closely related to ours is the empirical study of information diffusion and network evolution~\cite{GroBla08,SinWagStr12,WenRatPerGonCasBonSchMenFla13,AntDov13,MyeLes14}.
Among them,~\cite{WenRatPerGonCasBonSchMenFla13} was the first to show experimental evidence  that information diffusion influences network evolution in microblogging sites both at system-wide and individual levels. 
In particular, they studied \emph{Yahoo! Meme}, a social micro-blogging site similar to \-Twitter, which was active between 2009 and 2012, and showed that 
the likelihood that a user $u$ starts following a user $s$ increases with the number of messages from $s$ seen by $u$. 
\cite{AntDov13} investiga\-ted the temporal and statistical characteristics of retweet-driven connec\-tions within the Twitter network and then identified the number of retweets as a key factor to infer such connections.
\cite{MyeLes14} showed that the Twitter network can be characterized by steady rates of change, interrupted by sudden bursts of new connections, triggered by retweet cascades. 
They also developed a method to predict which retweets are more likely to trigger these bursts.
Finally, \cite{TraFarSonZha15} utilized multivariate Hawkes process to establish a connection between temporal properties of activities and the structure of the network. In contrast to our work they studied the static properties, $\eg$, community structure and inferred the latent clusters using the observed activities.

However, there are fundamental differences between the above-mentioned studies and our work. 
First, they only cha\-rac\-te\-rize the effect that information diffusion has on the network dynamics, but not the bidirectional
influence. 
%
%
In contrast, our probabilistic generative model takes into account the bidirectional influence between information diffusion and network dynamics. 
Second, previous studies are mostly empirical and only make binary predictions on link creation events. For example, the work of ~\cite{WenRatPerGonCasBonSchMenFla13, AntDov13} 
predict whether a new link will be created based on the number of retweets; and, \cite{MyeLes14} predict whether a burst of new links will occur based on the number of retweets and users'{} similarity.
However, our model is able to learn parameters from real world data, and predict the precise timing of both diffusion and new link events.


\section{Extensions}
\label{sec:extensions}
The basic model presented in Section~\ref{sec:model} is just a show-case of the potential of point processes in modeling networks and processes over them. In this section, we extend our model in a variety of ways. 
More specifically, we explain how the model can be augmented to support link removal, node birth and death, and connection specific parameters. 
We did not perform experiments with these extensions because our real-world dataset does not contain information regarding to link removal and node birth and death. Curating a comprehensive dataset that can be 
used in modeling all these aspects of networks is left as interesting future work.

\subsection{Link deletion} 
We can generalize our model to support link deletion by introducing an intensity matrix $\Xib^{*}(t) = \rbr{\xib_{us}^*(t)}_{u,s \in [m]}$ 
and model each individual intensity as a survival process. 
Assume $\Ab^+(t)$ is the previously defined counting matrix $\Ab(t)$, which indicates the existence of an edge at time $t$. Then, we introduce a new counting matrix $\Ab^-(t) = \rbr{A^-_{us}(t)}_{u,s \in [m]}$, which indicates the lack of an edge at time $t$, and we define it via 
its intensity function as
\begin{align}
\EE[d\Ab^-(t)\, |\,  \Hcal^r(t) \cup \Hcal^l(t)] = \Xib^*(t) \, dt,
\end{align}
Then, we define the intensity as 
\begin{align}
\label{eq:removal-matrix-intensity}
\xib_{us}^*(t) = A^+_{us}(t)(\zeta_u + \nu_s \, \sum_{v \in \Fcal{u}} \kappa_{\omega_3}(t)\star dA^{-}_{vs}(t) ),
\end{align} 
where the term $A^+_{us}(t)$ guarantees that the link has positive intensity to be removed only if it already exists, just like the term $1-A_{us}(t)$ in Equation \eqref{eq:network-evolve-intensity},
%
%
the parameter $\zeta_u $ is the base rate of link deletion and $\nu_s \, \sum_{v \in \Fcal{u}} \kappa_{\omega_3}(t)\star dA^{-}_{vs}(t)$
is the increased link deletion intensity due to increased number of followees of $u$ who decided to unfollow $s$. This is an excitation term due to deleted links to source $s$; given $s$ is unfollowed 
by some followees of $u$, then $u$ may find $s$ not a good source of information too.

Given a pair of nodes $(u, s)$, the process starts with $A^+_{us}(t) = 0$. Whenever a link is created this process ends and a removal process  $A^-_{us}(t)$ starts. Similarly, when the removal process fires, 
the connection is removed and a new link creation process is instantiated. These two processes interleave until the end.



\subsection{Node birth and death} 
We can augment our model to consider the number of nodes $m(t)$ to change over time: 
\begin{align}
m(t) = m_b(t) - m_d(t)
\end{align}
where $m_b(t)$ and $m_d(t)$ are counting processes modeling the numbers of nodes that join and left the network till time $t$, respectively. The way we construct $m_b(t)$ and $m_d(t)$ guarantees that $m(t)$ is always non-negative.

The birth process, $m_b(t)$, is characterized by a conditional intensity function $\phi^*(t)$:
\begin{align}
\EE[d m_b(t)\, |\,  \Hcal^r(t) \cup \Hcal^l(t)] = \phi^*(t) \, dt,
\end{align}
where
\begin{align}
\phi^*(t) = \epsilon + \theta \sum_{u,s \in [m(t)]} \kappa_{\omega_4} (t) \star d N_{us}(t),
\end{align} 
Here, $\epsilon$ is the constant rate of arrival and $\theta \sum_{u,s \in [m(t)]} \kappa_{\omega_4} (t) \star d N_{us}(t)$ is the increased rate of node arrival due to the increased activity of nodes. Intuitively, the higher the overall activity in 
the existing network, the larger the number of new users. 

The construction of the death process, $m_d(t)$, is more involved. 
Every time a new user joins the network, we start a survival process that controls whether she leaves the network. Thus, we can stack all these survival processes in a vector, $\lb(t) = \rbr{l_u(t)}_{u \in [m]}$, characterized by a multidimensional conditional 
intensity function $\sigmab^*(t) = \rbr{\sigma_u(t)}_{u \in [m_b(t)]}$:
\begin{align}
\EE[d \lb(t) | \Hcal^r(t) \cup \Hcal^l(t)]  = \sigmab^{*}(t) \, dt,
\end{align} 

Intuitively, we expect the nodes with lower activity to be more likely to leave the network and thus its conditional intensity function to adopt the following form:
\begin{align}
\sigma_{u}^*(t) = (1-l_u(t)) \left[ \sum_{j=1}^J \pi_{j} g_j(t) + \rbr{h(t) - \sum_{s \in [m(t)]} \kappa_{\omega_5} (t) \star d N_{us}(t) }_+ \right],
\end{align} 
where the term $\rbr{1-l_u(t)}$ ensures that a node is deleted only once, $ \sum_{j=1}^J \pi_{j} g_j(t) $ is the history-independent typical rate of death, shared across nodes, 
which we represent by a grid of known temporal kernels, $\cbr{g_j(t)}$ with unknown coefficients, $\cbr{\pi_{j}}$, and the second term is capturing the effect of activity on the
probability of leaving the network. 
More specifically, if a node is not active, we assume its intensity is upper bounded by $h(t)$ and the most active she becomes, the lower its probability of leaving the network and
the larger the term $\sum_{s \in [m(t)]} \kappa_{\omega_5} (t) \star d N_{us}(t)$. 
The hinge function $(\cdot)_{+}$ guarantees the intensity is always positive.

Then, given the individual death processes the total death process is
\begin{align}
m_d(t) = \sum_{u=1}^{m_b(t)} l_u(t),
\end{align}
which completes the modeling of the time-varying number of nodes.

\subsection{Incorporating features} 
One can simply enrich the model by taking into account the longitudinal or static information of the networked data, \eg,
by conditioning the intensity on additional external features, such as node attributes or edge types.
Let us assume each user $u$ comes with a $K$-dimensional feature vector $\xb_u$ including properties such as her age, job, location, number of followers, number of tweets, etc. 

Then, we can augment the information diffusion intensity as follows. We introduce a $K$-dimensional link intensity parameter $\etab_u$ in which each dimension reflects the contribution of the corresponding element in the feature vector to the intensity 
and replace the baseline rate $\eta_u$ by $\etab_u^{\top} \xb_u$. 
Similarly, we introduce a $K$-dimensional vector $\betab_s$ where each dimension has a corresponding element in the feature vector $\xb_s$ and substitute $\beta_s$ by $\betab_s \xb_s$. Therefore, one can rewrite the original information diffusion intensity given by 
Equation~\eqref{eq:activity-occurrence} as:
\begin{equation}
  \gamma_{us}^*(t) = \II[u = s] \, \etab_u^{\top} \xb_u + \II[u \neq s]\,              \betab_{s}^{\top} \xb_s \sum\nolimits_{v \in \Fcal_u(t)}  \kappa_{\omega_1}(t) \star  \rbr{A_{uv}(t)\,\, dN_{vs}(t)}, 
\end{equation}  

Similarly, we can parameterize the coefficients of the link creation intensity by a $K$-dimensional vector and write the counter-part of Equation~\eqref{eq:network-evolve-intensity-2} incorporating features of the node for computing the intensity:
\begin{equation}
  \lambda_{us}^*(t) = (1-A_{us}(t))(\mub_u^{\top} \xb_u + \alphab_u^{\top} \xb_u  \sum_{v \in \Fcal_u(t)} \kappa_{\omega_2}(t)\star dN_{vs}(t))
\end{equation}

Surprisingly enough, all the results for convexity for parameter learning, and efficient simulation techniques are still valid for this case too. As far as the features contribute to the intensity linearly, the log-likelihood is concave and we can simulate the model as efficiently as the 
original model.

\subsection{Connection specific parameters} 
%
%
Up to this point, the parameters of the link creation and removal, node birth and death and the information diffusion intensities depend on one end point of the interactions. For example $\beta_s$ and $\eta_u$ in the information diffusion intensity 
given by Equation~\eqref{eq:activity-occurrence} only depend on the source and the actor, respectively.
%
However, proceeding with this example, parameters can be made connection specific, \ie, Equation \eqref{eq:activity-occurrence} can be restated as
\begin{equation}
  \gamma_{us}^*(t) = \II[u = s] \, \eta_{us} + \II[u \neq s]\,              \beta_{us} \sum\nolimits_{v \in \Fcal_u(t)}  \kappa_{\omega_1}(t) \star  \rbr{A_{uv}(t)\,\, dN_{vs}(t)}, 
\end{equation}  
where $\eta_{us}$ is the base intensity of $u$ retweeting a tweet originated by $s$ and $\beta_{us}$ is the coefficient of excitement of $u$ to retweet $s$ when one of her followees retweets something 
from $s$.

Given enough computational resources and large amounts of historical data, one can take into account more complex scenarios and larger and more flexible models. For example, the middle user, say $v$, who is along the path of diffusion and forwards the 
tweet originated from $s$ to $u$ can also be taking into consideration, \ie, defining $\beta_{svu}$ as the amount of increase in intensity of user $u$ retweeting from $s$ when user $v$ has just retweeted a post from $s$. All desirable properties of simulation algorithm and parameter estimation method still hold.


\section{Conclusion and Future Works}
\label{sec:discussion}
In this work, we proposed a joint continuous-time model of information diffusion and network evolution, which can capture the coevolutionary dynamics, can mimic the most common static and temporal network patterns 
observed in real-world networks and information diffusion data, and can predict the network evolution and information diffusion more accurately than previous state-of-the-arts.
Using point processes to model intertwined events in information and social networks opens up many interesting venues for future. Our current model is just a show-case of a rich set of possibilities offered by a point process framework, 
which have been rarely explored before in large scale social network mo\-de\-ling. 
There are quite a few directions that remain as future work and are very interesting to explore. For example:
\begin{itemize}
\item A large and diverse range of point processes can also be used instead in the framework and augment the current model without changing the efficiency of simulation and the convexity of parameter estimation.
\item We can incorporate features from previous state of the diffusion or network structure. For example, one can model information overload by adding a nonlinear transfer function on top of the diffusion intensity, or model peer pressure by adding 
a nonlinear transfer function depending on the number of neighbors. 
\item There are situations that the processes are naturally evolve in different time scales. For example, link dynamics is meaningful in the scale of days, however, the resolution in which information propagation occurs is usually in hours or even minutes. 
Developing an efficient mechanism to account for heterogeneity in time resolution would improve the model's ability to predict.

\item We may augment the framework to allow time-varying parameters. The simulation would not be affected and the estimation of time-varying interaction can still be carried out via a convex optimization problem~\cite{ZhoZhaSon13b}.

\item Alternatively, one can use different triggering kernels for the Hawkes processes and learn them to capture finer details of temporal dynamics. 
\end{itemize}
%



\xhdr{Acknowledgement}
The authors would like to thank Demetris Antoniades and Constantine Dovrolis for providing them with the dataset. The research was supported in part by NSF/NIH BIGDATA 1R01GM108341, ONR N00014-15-1-2340, NSF IIS-1218749, NSF CAREER IIS-1350983.


\appendix

\section{Ogata's Algorithm} \label{app:sec:simulation}
\usetikzlibrary{decorations.pathreplacing,calc}
\newcommand{\tikzmark}[1]{\tikz[overlay,remember picture] \node (#1) {};}

\newcommand*{\AddNote}[4]{%
    \begin{tikzpicture}[overlay, remember picture]
        \draw [decoration={brace,amplitude=0.5em},decorate,ultra thick,black]
            ($(#3)!(#1.north)!($(#3)+(0,1)$)$) --  
            ($(#3)!(#2.south)!($(#3)+(0,1)$)$)
                node [align=center, text width=2.9cm, pos=0.5, anchor=west] {#4};
    \end{tikzpicture}
}%

\begin{algorithm}[h!]
\caption{Ogata'{}s Algorithm} \label{alg:ogata}
\begin{algorithmic}[2]
\State {\bf Input:} $U$ dimensional Hawkes process $\cbr{\lambda^*_u(t)}_{u=1\ldots U}$, Due time: $T$
\State {\bf Output:} Set of events: $\Hcal = \cbr{(t_1, u_1), \ldots, (t_n, u_n)}$
\State $t \gets 0$
\State $i \gets 0$
\While{$t < T $} \hspace{2cm}\tikzmark{right}
\State $\lambda^*_{sum}(\tau)  \gets \sum_{u=1}^U \lambda^*_u(\tau)$  \label{while_start}  \tikzmark{top1}
\State $\lambdahat  \gets \max_{t \le \tau \le T} \lambda^*_{sum}(\tau)$ 
\State  $s \sim Exponential( \lambdahat)$ 
\State $t'  \gets t + s$  
\If{$t' \ge T$} 
\State break
\EndIf   \tikzmark{bottom1}
\State $\lambdabar   \gets \lambda^*_{sum}(t')$   \tikzmark{top2}
\State  $d \sim Uniform(0,1)$
\If {$d \times \lambdahat  > \lambdabar $}
\State $t \gets t'$
\State Goto \ref{while_start}
\EndIf   \tikzmark{bottom2}
\State $S \gets 0$   \tikzmark{top3}
\State  $d \sim Uniform(0,1)$
\For{ $u \gets 1$ to $U$}
\State $S \gets S + \lambda^*_u(t')$
\If{$S \ge d$}
\State $i \gets i+1$
\State $u_i \gets u$
\State $t_i \gets t'$
\State $t \gets t'$
\State Goto \ref{while_start}
\EndIf
\EndFor  \tikzmark{bottom3}
\State Given the new event just sampled update intensity functions $\lambda^*_{u}(\tau)$ 
\EndWhile
\end{algorithmic}
\AddNote{top1}{bottom1}{right}{Sampling next event time}
\AddNote{top2}{bottom2}{right}{Rejection test}
\AddNote{top3}{bottom3}{right}{Attribution test}
\end{algorithm}

In this section, we revisit Ogata'{}s algorithm in more details. 
Consider a $U$-dimensional point process in which each dimension $u$ is characterized by a conditional intensity function $\lambda^*_u(t)$.

Ogata'{}s algorithm starts with summing the intensities, $\lambda^*_{sum}(\tau) = \sum_{u=1}^U \lambda^*_u(\tau)$.
Then, assuming we have simulated up to time $t$, the next sample time, $t'$, is the first event drawn from the non-homogenous Poisson process with intensity $\lambda^*_{sum}(\tau)$ which begins at time $t$.
%
%
Here, the algorithm exploits that, given a fixed history, the Hawkes Process is a non-homogenous Poisson process, which runs until the next event happens.
Then, the new event will result in an update of the intensities and a new non-homogenous Poisson process starts.

It can be shown that the waiting time of a non-homogeneous Poisson process is an exponentially distributed random variable with rate equal to integral of the intensity~\cite{Ross06}, $\ie$  
$
s \sim Exponential \rbr{\int_t^{t+s} \lambda^*_{sum} (\tau) \, d \tau}
$.
Thus, the next sample time can be computed as
\begin{align}
t' = \underbrace{t}_{\text{current time}} + \underbrace{s.}_{\text{waiting time for the first event}}
\end{align}
Sampling from a non-homogenous Poisson process is not straight-forward, therefore, Ogata'{}s algorithm uses rejection sampling with a homogenous Poisson process as the proposal distribution. 
More in detail, given
$
\lambdahat = \max_{t \le \tau \le T} \lambda^*_{sum}(\tau),
$
$t'$ is the time of first event of homogenous Poisson Process with rate $\lambdahat$.
Then, we accept the sample time with probability 
$
\lambda^*_{sum}(t')/ \lambdahat.
$
Finally, the dimension firing the event is determined by sampling proportionally to the contribution of the intensity of that user to the total intensity, \ie, $\lambda^*_u(t')/ \lambda^*_{sum}(t')$ for $1\le u \le U$.
%
This procedure is iterated until we reach the end of simulation time $T$. Algorithm~\ref{alg:ogata} presents the complete procedure.

Ogata'{}s algorithm would scale poorly with the dimension of the process, because, after each sample, we would need to re-evaluate the affected intensities and find the upper bound. 
%
As a consequence, a naive implementation to draw $n$ samples require $O(U n^2)$ time complexity, where $U$ is the number of dimensions. This is because for each sample we need 
to find the new summation of intensities, which involves $O(U)$ individual ones, each taking $O(n)$ time to accumulate over this history. In our social networks application, we have $m^2-m$ 
point processes for link creation and $m^2$ ones for retweeting, \ie, $U=O(m^2)$. Therefore, Ogata{}'s algorithm takes $O(m^2 n^2)$ time complexity.

\vskip 0.2in
\bibliographystyle{unsrt}
\bibliography{bibfile}

\begin{thebibliography}{10}

\bibitem{KwaLeeParMoo10}
Haewoon Kwak, Changhyun Lee, Hosung Park, and Sue Moon.
\newblock {W}hat is {T}witter, a social network or a news media?
\newblock In {\em Proceedings of the 19th International Conference on World
  Wide Web}, pages 591--600, New York, NY, USA, 2010. ACM.

\bibitem{CheAdaDowKleLes14}
Justin Cheng, Lada Adamic, P~Alex Dow, Jon~Michael Kleinberg, and Jure
  Leskovec.
\newblock Can cascades be predicted?
\newblock In {\em Proceedings of the 23rd international conference on World
  wide web}, pages 925--936, 2014.

\bibitem{AntDov13}
Demetris Antoniades and Constantine Dovrolis.
\newblock Co-evolutionary dynamics in social networks: A case study of twitter.
\newblock {\em arXiv preprint arXiv:1309.6001}, 2013.

\bibitem{WenRatPerGonCasBonSchMenFla13}
Lilian Weng, Jacob Ratkiewicz, Nicola Perra, Bruno Gon{\c{c}}alves, Carlos
  Castillo, Francesco Bonchi, Rossano Schifanella, Filippo Menczer, and
  Alessandro Flammini.
\newblock The role of information diffusion in the evolution of social
  networks.
\newblock In {\em Proceedings of the 19th ACM SIGKDD international conference
  on Knowledge discovery and data mining}, pages 356--364. ACM, 2013.

\bibitem{MyeLes14}
Seth~A Myers and Jure Leskovec.
\newblock The bursty dynamics of the twitter information network.
\newblock In {\em 23rd International Conference on the World Wide Web}, pages
  913--924, 2014.

\bibitem{GomLesKra10}
Manuel Gomez-Rodriguez, Jure Leskovec, and Andreas Krause.
\newblock Inferring networks of diffusion and influence.
\newblock In {\em Proceedings of the 16th ACM SIGKDD international conference
  on Knowledge discovery and data mining}, pages 1019--1028. ACM, 2010.

\bibitem{GomBalSch11}
Manuel Gomez-Rodriguez, David Balduzzi, and Bernhard Sch{\"o}lkopf.
\newblock Uncovering the temporal dynamics of diffusion networks.
\newblock In {\em Proceedings of the International Conference on Machine
  Learning}, 2011.

\bibitem{DuSonRodZha13}
Nan Du, Le~Song, Manuel Gomez-Rodriguez, and Hongyuan Zha.
\newblock Scalable influence estimation in continuous-time diffusion networks.
\newblock In {\em Advances in Neural Information Processing Systems 26}, 2013.

\bibitem{FarGomDuZamZhaSon15}
Mehrdad Farajtabar, Manuel Gomez-Rodriguez, Nan Du, Mohammad Zamani, Hongyuan
  Zha, and Le~Song.
\newblock Back to the past: Source identification in diffusion networks from
  partially observed cascades.
\newblock In {\em Proceedings of the 18th International Conference on
  Artificial Intelligence and Statistics (AISTATS)}, 2015.

\bibitem{ChaZhaFal2004}
Deepayan Chakrabarti, Yiping Zhan, and Christos Faloutsos.
\newblock R-mat: A recursive model for graph mining.
\newblock {\em Computer Science Department}, page 541, 2004.

\bibitem{LesBacKumTom08}
Jure Leskovec, Lars Backstrom, Ravi Kumar, and Andrew Tomkins.
\newblock Microscopic evolution of social networks.
\newblock In {\em Proceedings of the 14th ACM SIGKDD international conference
  on Knowledge discovery and data mining}, pages 462--470. ACM, 2008.

\bibitem{LesChaKleFaletal10}
Jure Leskovec, Deepayan Chakrabarti, Jon Kleinberg, Christos Faloutsos, and
  Zoubin Ghahramani.
\newblock Kronecker graphs: An approach to modeling networks.
\newblock {\em Journal of Machine Learning Research}, 11(Feb):985--1042, 2010.

\bibitem{Liniger09}
Thomas~Josef Liniger.
\newblock {\em Multivariate Hawkes Processes}.
\newblock PhD thesis, Swiss Federal Institute of Technology Zurich, 2009.

\bibitem{BluBecHelKat12}
Charles Blundell, Jeff Beck, and Katherine~A Heller.
\newblock Modelling reciprocating relationships with hawkes processes.
\newblock In {\em nips}, 2012.

\bibitem{IwaShaGha13}
Tomoharu Iwata, Amar Shah, and Zoubin Ghahramani.
\newblock Discovering latent influence in online social activities via shared
  cascade poisson processes.
\newblock In {\em Proceedings of the 19th ACM SIGKDD international conference
  on Knowledge discovery and data mining}, pages 266--274. ACM, 2013.

\bibitem{ZhoZhaSon13}
Ke~Zhou, Hongyuan Zha, and Le~Song.
\newblock Learning social infectivity in sparse low-rank networks using
  multi-dimensional hawkes processes.
\newblock In {\em Artificial Intelligence and Statistics (AISTATS)}, 2013.

\bibitem{ZhoZhaSon13b}
Ke~Zhou, Hongyuan Zha, and Le~Song.
\newblock Learning triggering kernels for multi-dimensional hawkes processes.
\newblock In {\em International Conference on Machine Learning (ICML)}, 2013.

\bibitem{FarDuGomValZhaSon14}
Mehrdad Farajtabar, Nan Du, Manuel Gomez-Rodriguez, Isabel Valera, Hongyuan
  Zha, and Le~Song.
\newblock Shaping social activity by incentivizing users.
\newblock In {\em Advances in Neural Information Processing Systems (NIPS)},
  2014.

\bibitem{LinAdaRya14}
Scott~W Linderman and Ryan~P Adams.
\newblock Discovering latent network structure in point process data.
\newblock In {\em International Conference on Machine Learning (ICML)}, 2014.

\bibitem{DuFarAhmSmoSon15}
Nan Du, Mehrdad Farajtabar, Amr Ahmed, Alexander~J Smola, and Le~Song.
\newblock Dirichlet-hawkes processes with applications to clustering
  continuous-time document streams.
\newblock In {\em KDD}. ACM, 2015.

\bibitem{ValGom15}
Isabel Valera and Manuel Gomez-Rodriguez.
\newblock Modeling adoption of competing products and conventions in social
  media.
\newblock {\em IEEE International Conference on Data Mining}, 2015.

\bibitem{HunSmyVuAsu11}
David Hunter, Padhraic Smyth, Duy~Q Vu, and Arthur~U Asuncion.
\newblock Dynamic egocentric models for citation networks.
\newblock In {\em Proceedings of the 28th International Conference on Machine
  Learning}, pages 857--864, 2011.

\bibitem{VuHunSmyAsu11}
Duy~Q Vu, David Hunter, Padhraic Smyth, and Arthur~U Asuncion.
\newblock Continuous-time regression models for longitudinal networks.
\newblock In {\em Advances in Neural Information Processing Systems}, pages
  2492--2500, 2011.

\bibitem{LesKleFal05}
Jure Leskovec, Jon Kleinberg, and Christos Faloutsos.
\newblock Graphs over time: densification laws, shrinking diameters and
  possible explanations.
\newblock In {\em Proceedings of the eleventh ACM SIGKDD international
  conference on Knowledge discovery in data mining}, pages 177--187. ACM, 2005.

\bibitem{GoeWatGol12}
Sharad Goel, Duncan~J Watts, and Daniel~G Goldstein.
\newblock The structure of online diffusion networks.
\newblock In {\em Proceedings of the 13th ACM conference on electronic
  commerce}, pages 623--638, 2012.

\bibitem{AalBorGje08}
Odd Aalen, Ornulf Borgan, and Hakon Gjessing.
\newblock {\em Survival and event history analysis: a process point of view}.
\newblock Springer, 2008.

\bibitem{KemKleTar03}
David Kempe, Jon Kleinberg, and {\'E}va Tardos.
\newblock Maximizing the spread of influence through a social network.
\newblock In {\em SIGKDD}, pages 137--146. ACM, 2003.

\bibitem{Ogata81}
Yosihiko Ogata.
\newblock On lewis' simulation method for point processes.
\newblock {\em Information Theory, IEEE Transactions on}, 27(1):23--31, 1981.

\bibitem{Ross06}
Sheldon~M. Ross.
\newblock {\em Introduction to Probability Models, Tenth Edition}.
\newblock Academic Press, Inc., 2011.

\bibitem{BoyVan04}
Stephen Boyd and Lieven Vandenberghe.
\newblock {\em Convex Optimization}.
\newblock Cambridge University Press, Cambridge, England, 2004.

\bibitem{HunLan04}
David~R Hunter and Kenneth Lange.
\newblock A tutorial on mm algorithms.
\newblock {\em The American Statistician}, 58(1):30--37, 2004.

\bibitem{ErdRen60}
Paul Erdos and A~R{\'e}nyi.
\newblock On the evolution of random graphs.
\newblock {\em Publ. Math. Inst. Hungar. Acad. Sci}, 5:17--61, 1960.

\bibitem{BacBolRosUgaVig12}
Lars Backstrom, Paolo Boldi, Marco Rosa, Johan Ugander, and Sebastiano Vigna.
\newblock Four degrees of separation.
\newblock In {\em Proceedings of the 4th Annual ACM Web Science Conference},
  pages 33--42, 2012.

\bibitem{Gra73}
Mark Granovetter.
\newblock The strength of weak ties.
\newblock {\em American journal of sociology}, pages 1360--1380, 1973.

\bibitem{RomKle10}
Daniel~Mauricio Romero and Jon Kleinberg.
\newblock The directed closure process in hybrid social-information networks,
  with an analysis of link formation on twitter.
\newblock In {\em ICWSM}, 2010.

\bibitem{UgaBacKle13}
Johan Ugander, Lars Backstrom, and Jon Kleinberg.
\newblock Subgraph frequencies: Mapping the empirical and extremal geography of
  large graph collections.
\newblock In {\em Proceedings of the 22nd international conference on World
  Wide Web}, pages 1307--1318. International World Wide Web Conferences
  Steering Committee, 2013.

\bibitem{WatStr98}
Duncan~J Watts and Steven~H Strogatz.
\newblock Collective dynamics of small-world networks.
\newblock {\em Nature}, 393(6684):440--442, June 1998.

\bibitem{DalVer2007}
Daryl~J Daley and David Vere-Jones.
\newblock {\em An introduction to the theory of point processes: volume II:
  general theory and structure}.
\newblock Springer Science \& Business Media, 2007.

\bibitem{phan2015natural}
Tuan~Q Phan and Edoardo~M Airoldi.
\newblock A natural experiment of social network formation and dynamics.
\newblock {\em Proceedings of the National Academy of Sciences},
  112(21):6595--6600, 2015.

\bibitem{Doreian13}
Patrick Doreian and Frans Stokman.
\newblock {\em Evolution of social networks}.
\newblock Routledge, 2013.

\bibitem{wang2011time}
Eric Wang, Jorge Silva, Rebecca Willett, and Lawrence Carin.
\newblock Time-evolving modeling of social networks.
\newblock In {\em Acoustics, Speech and Signal Processing (ICASSP), 2011 IEEE
  International Conference on}, pages 2184--2187. IEEE, 2011.

\bibitem{Newman10}
Mark Newman.
\newblock {\em Networks: an introduction}.
\newblock Oxford University Press, 2010.

\bibitem{Barrat08}
Alain Barrat, Marc Barthelemy, and Alessandro Vespignani.
\newblock {\em Dynamical processes on complex networks}.
\newblock Cambridge University Press, 2008.

\bibitem{GolZheFieAir10}
Anna Goldenberg, Alice~X Zheng, Stephen~E Fienberg, and Edoardo~M Airoldi.
\newblock A survey of statistical network models.
\newblock {\em Foundations and Trends{\textregistered} in Machine Learning},
  2(2):129--233, 2010.

\bibitem{Heise89}
David~R Heise.
\newblock Modeling event structures*.
\newblock {\em Journal of Mathematical Sociology}, 14(2-3):139--169, 1989.

\bibitem{Girvan02}
Michelle Girvan and Mark~EJ Newman.
\newblock Community structure in social and biological networks.
\newblock {\em Proceedings of the national academy of sciences},
  99(12):7821--7826, 2002.

\bibitem{LesBacKle09}
J.~Leskovec, L.~Backstrom, and J.~Kleinberg.
\newblock Meme-tracking and the dynamics of the news cycle.
\newblock In {\em Proceedings of the 15th ACM SIGKDD international conference
  on Knowledge discovery and data mining}, pages 497--506. ACM, 2009.

\bibitem{SniLuc06}
Tom~AB Snijders and SR~Luchini.
\newblock Statistical methods for network dynamics.
\newblock In {\em Proceedings of the XLIII Scientific Meeting, Italian
  Statistical Society}. CLEUP, 2006.

\bibitem{BraLerSni}
Ulrik Brandes, J{\"u}rgen Lerner, and Tom~AB Snijders.
\newblock Networks evolving step by step: Statistical analysis of dyadic event
  data.
\newblock In {\em Social Network Analysis and Mining, 2009. ASONAM'09.
  International Conference on Advances in}, pages 200--205. IEEE, 2009.

\bibitem{GhoLer12}
Rumi Ghosh and Kristina Lerman.
\newblock The role of dynamic interactions in multi-scale analysis of network
  structure.
\newblock {\em CoRR}, 2012.

\bibitem{HogLer12}
Tad Hogg and Kristina Lerman.
\newblock Social dynamics of digg.
\newblock {\em EPJ Data Science}, 1(1):1--26, 2012.

\bibitem{LerGalVerSteetal11}
Kristina Lerman, Aram Galstyan, Greg Ver~Steeg, and Tad Hogg.
\newblock Social mechanics: An empirically grounded science of social media.
\newblock In {\em Fifth International AAAI Conference on Weblogs and Social
  Media}, 2011.

\bibitem{Holme15}
Petter Holme.
\newblock Modern temporal network theory: A colloquium.
\newblock {\em arXiv preprint arXiv:1508.01303}, 2015.

\bibitem{bhattacharya2015analyzing}
Prasanta Bhattacharya, Tuan~Q Phan, and Edoardo~M Airoldi.
\newblock Analyzing the co-evolution of network structure and content
  generation in online social networks.
\newblock {\em ECIS 2015 Completed Research Papers}, page~18, 2015.

\bibitem{GroBla08}
Thilo Gross and Bernd Blasius.
\newblock Adaptive coevolutionary networks: a review.
\newblock {\em Journal of The Royal Society Interface}, 5(20):259--271, 2008.

\bibitem{GroSay09}
Thilo Gross and Hiroki Sayama.
\newblock {\em Adaptive networks}.
\newblock Springer, 2009.

\bibitem{SayPesSchBusetal13}
Hiroki Sayama, Irene Pestov, Jeffrey Schmidt, Benjamin~James Bush, Chun Wong,
  Junichi Yamanoi, and Thilo Gross.
\newblock Modeling complex systems with adaptive networks.
\newblock {\em Computers \& Mathematics with Applications}, 65(10):1645--1664,
  2013.

\bibitem{BorRoh00}
Stefan Bornholdt and Thimo Rohlf.
\newblock Topological evolution of dynamical networks: Global criticality from
  local dynamics.
\newblock {\em Physical Review Letters}, 84(26):6114, 2000.

\bibitem{GroDliBla06}
Thilo Gross, Carlos J~Dommar DÕLima, and Bernd Blasius.
\newblock Epidemic dynamics on an adaptive network.
\newblock {\em Physical review letters}, 96(20):208701, 2006.

\bibitem{ZanRis08}
Dami{\'a}n~H Zanette and Sebasti{\'a}n Risau-Gusm{\'a}n.
\newblock Infection spreading in a population with evolving contacts.
\newblock {\em Journal of biological physics}, 34(1-2):135--148, 2008.

\bibitem{ZscBohSeiHueetal12}
Gerd Zschaler, Gesa~A B{\"o}hme, Michael Sei{\ss}inger, Cristi{\'a}n Huepe, and
  Thilo Gross.
\newblock Early fragmentation in the adaptive voter model on directed networks.
\newblock {\em Physical Review E}, 85(4):046107, 2012.

\bibitem{Snijders14}
Tom~AB Snijders.
\newblock Siena: Statistical modeling of longitudinal network data.
\newblock In {\em Encyclopedia of Social Network Analysis and Mining}, pages
  1718--1725. Springer, 2014.

\bibitem{AgaLiuTanYu08}
Nitin Agarwal, Huan Liu, Lei Tang, and Philip~S Yu.
\newblock Identifying the influential bloggers in a community.
\newblock In {\em Proceedings of the 2008 international conference on web
  search and data mining}, pages 207--218. ACM, 2008.

\bibitem{GruGuhLibTom04}
Daniel Gruhl, Ramanathan Guha, David Liben-Nowell, and Andrew Tomkins.
\newblock Information diffusion through blogspace.
\newblock In {\em Proceedings of the 13th international conference on World
  Wide Web}, pages 491--501. ACM, 2004.

\bibitem{SunCheLiuWanetal11}
Tao Sun, Wei Chen, Zhenming Liu, Yajun Wang, Xiaorui Sun, Ming Zhang, and
  Chin-Yew Lin.
\newblock Participation maximization based on social influence in online
  discussion forums.
\newblock In {\em Proceedings of the International AAAI Conference on Weblogs
  and Social Media}, 2011.

\bibitem{GuoBluWalHel15}
Fangjian Guo, Charles Blundell, Hanna Wallach, Katherine Heller, and UCL
  Gatsby~Unit.
\newblock The bayesian echo chamber: Modeling social influence via linguistic
  accommodation.
\newblock In {\em Proceedings of the Eighteenth International Conference on
  Artificial Intelligence and Statistics}, pages 315--323, 2015.

\bibitem{WenLimJiaHe10}
Jianshu Weng, Ee-Peng Lim, Jing Jiang, and Qi~He.
\newblock Twitterrank: finding topic-sensitive influential twitterers.
\newblock In {\em Proceedings of the third ACM international conference on Web
  search and data mining}, pages 261--270. ACM, 2010.

\bibitem{BakHofMasWat11}
Eytan Bakshy, Jake~M. Hofman, Winter~A. Mason, and Duncan~J. Watts.
\newblock Everyone's an influencer: Quantifying influence on twitter.
\newblock In {\em WSDM}, pages 65--74, 2011.

\bibitem{SaiNakKim08}
Kazumi Saito, Ryohei Nakano, and Masahiro Kimura.
\newblock Prediction of information diffusion probabilities for independent
  cascade model.
\newblock In {\em Knowledge-based intelligent information and engineering
  systems}, pages 67--75. Springer, 2008.

\bibitem{GoyBonLak10}
Amit Goyal, Francesco Bonchi, and Laks~VS Lakshmanan.
\newblock Learning influence probabilities in social networks.
\newblock In {\em Proceedings of the third ACM international conference on Web
  search and data mining}, pages 241--250. ACM, 2010.

\bibitem{RodSch12}
M.G. Rodriguez and B.~Sch{\"o}lkopf.
\newblock Influence maximization in continuous time diffusion networks.
\newblock In {\em Proceedings of the International Conference on Machine
  Learning}, 2012.

\bibitem{RicDom02}
Matthew Richardson and Pedro Domingos.
\newblock Mining knowledge-sharing sites for viral marketing.
\newblock In {\em Proceedings of the eighth ACM SIGKDD international conference
  on Knowledge discovery and data mining}, pages 61--70. ACM, 2002.

\bibitem{DomRic01}
Pedro Domingos and Matt Richardson.
\newblock Mining the network value of customers.
\newblock In {\em Proceedings of the seventh ACM SIGKDD international
  conference on Knowledge discovery and data mining}, pages 57--66. ACM, 2001.

\bibitem{BhaGoyLak12}
Smriti Bhagat, Amit Goyal, and Laks~VS Lakshmanan.
\newblock Maximizing product adoption in social networks.
\newblock In {\em Proceedings of the fifth ACM international conference on Web
  search and data mining}, pages 603--612. ACM, 2012.

\bibitem{BhaChaPar10}
Rushi Bhatt, Vineet Chaoji, and Rajesh Parekh.
\newblock Predicting product adoption in large-scale social networks.
\newblock In {\em Proceedings of the 19th ACM international conference on
  Information and knowledge management}, pages 1039--1048. ACM, 2010.

\bibitem{DuSonYuaSmo12}
Nan Du, Le~Song, Ming Yuan, and Alex~J Smola.
\newblock Learning networks of heterogeneous influence.
\newblock In {\em Advances in Neural Information Processing Systems}, pages
  2780--2788, 2012.

\bibitem{YanZha13}
Shuang-Hong Yang and Hongyuan Zha.
\newblock Mixture of mutually exciting processes for viral diffusion.
\newblock In {\em Proceedings of the 30th International Conference on Machine
  Learning (ICML-13)}, pages 1--9, 2013.

\bibitem{DuSonWooZha13}
Nan Du, Le~Song, Hyenkyun Woo, and Hongyuan Zha.
\newblock Uncover topic-sensitive information diffusion networks.
\newblock In {\em Proceedings of the sixteenth international conference on
  artificial intelligence and statistics}, pages 229--237, 2013.

\bibitem{GomLesSch13}
Manuel~Gomez Rodriguez, Jure Leskovec, and Bernhard Sch{\"o}lkopf.
\newblock Modeling information propagation with survival theory.
\newblock {\em arXiv preprint arXiv:1305.3616}, 2013.

\bibitem{lian2015multitask}
Wenzhao Lian, Ricardo Henao, Vinayak Rao, Joseph Lucas, and Lawrence Carin.
\newblock A multitask point process predictive model.
\newblock In {\em Proceedings of the 32nd International Conference on Machine
  Learning (ICML-15)}, pages 2030--2038, 2015.

\bibitem{parikh2012conjoint}
Ankur~P Parikh, Asela Gunawardana, and Christopher Meek.
\newblock Conjoint modeling of temporal dependencies in event streams.
\newblock In {\em UAI Bayesian Modelling Applications Workshop}. Citeseer,
  2012.

\bibitem{hall2014tracking}
Eric~C Hall and Rebecca~M Willett.
\newblock Tracking dynamic point processes on networks.
\newblock {\em arXiv preprint arXiv:1409.0031}, 2014.

\bibitem{GoyBonLakVen10}
Amit Goyal, Francesco Bonchi, Laks~VS Lakshmanan, and Suresh
  Venkatasubramanian.
\newblock Approximation analysis of influence spread in social networks.
\newblock {\em arXiv preprint arXiv:1008.2005}, 2010.

\bibitem{Lian2014}
W.~Lian, V.~A. Rao, B.~Eriksson, and L.~Carin.
\newblock Modeling correlated arrival events with latent semi-markov processes.
\newblock In {\em Proceedings of the International Conference on Machine
  Learning}, 2014.

\bibitem{gunawardana2011model}
Asela Gunawardana, Christopher Meek, and Puyang Xu.
\newblock A model for temporal dependencies in event streams.
\newblock In {\em Advances in Neural Information Processing Systems}, pages
  1962--1970, 2011.

\bibitem{SinWagStr12}
Philipp Singer, Claudia Wagner, and Markus Strohmaier.
\newblock Factors influencing the co-evolution of social and content networks
  in online social media.
\newblock In {\em Modeling and Mining Ubiquitous Social Media}, pages 40--59.
  Springer, 2012.

\bibitem{TraFarSonZha15}
Long Tran, Mehrdad Farajtabar, Le~Song, and Hongyuan Zha.
\newblock Netcodec: Community detection from individual activities.
\newblock In {\em SDM}, 2015.

\end{thebibliography}

\end{document}